\title{Restricted Max-Min Allocation: Approximation and Integrality Gap} 
\titlerunning{Restricted Max-Min Allocation}
\author{Siu-Wing Cheng}{Department of Computer Science and Engineering, HKUST, Hong Kong}{scheng@cse.ust.hk}{https://orcid.org/0000-0002-3557-9935}{}
\author{Yuchen Mao}{Department of Computer Science and Engineering, HKUST, Hong Kong}{ymaoad@cse.ust.hk}{https://orcid.org/000-0002-1075-344X}{}
\authorrunning{S.\,-W. Cheng and Y. Mao}
\keywords{Fair allocation, configuration LP, approximation, integrality gap}
\renewcommand{\leq}{\leqslant}
\renewcommand{\geq}{\geqslant}
\newcommand{\src}{\mathit{src}}
\newcommand{\sink}{\mathit{sink}}
\begin{document}

\maketitle

\begin{abstract}
	Asadpour, Feige, and Saberi proved that the integrality gap of the configuration LP for the restricted max-min allocation problem is at most $4$.  However, their proof does not give a polynomial-time approximation algorithm.   A lot of efforts have been devoted to designing an efficient algorithm whose approximation ratio can match this upper bound for the integrality gap.  In ICALP 2018, we present a $(6 + \delta)$-approximation algorithm where $\delta$ can be any positive constant, and there is still a gap of roughly $2$.  In this paper, we narrow the gap significantly by proposing a $(4+\delta)$-approximation algorithm where $\delta$ can be any positive constant.  The approximation ratio is with respect to the optimal value of the configuration LP, and the running time is $\mathit{poly}(m,n)\cdot n^{\mathit{poly}(\frac{1}{\delta})}$ where $n$ is the number of players and $m$ is the number of resources.  We also improve the upper bound for the integrality gap of the configuration LP to $3 + \frac{21}{26} \approx 3.808$.
\end{abstract}

\section{Introduction}
\label{sec:intro}

\paragraph*{Background} 
In the max-min fair allocation problem, we are given a set $P$ of $n$ players, a set $R$ of $m$ indivisible resources, and a set of non-negative values $\{v_{pr}\}_{p\in P, r\in R}$.  For each $r\in R$ and each $p \in P$, resource $r$ is worth a value of $v_{pr}$ to player $p$.  An allocation is a partition of $R$ into disjoint subsets $\{D_p\}_{p \in P}$ so that each player $p$ is assigned the resources in $D_p$.  The goal is to find an allocation that maximizes the welfare of the least lucky player, that is, we want to maximize
$\min_{p\in P}\sum_{r\in D_p}v_{pr}$. Unfortunately, unless $\mathrm{P}=\mathrm{NP}$, no polynomial-time algorithm can achieve an approximation ratio smaller than 2~\cite{BD05}.

Bez\'{a}kov\'{a} and Dani~\cite{BD05} tried to solve the problem using the assignment LP -- a technique for the classic scheduling problem of makespan minimization~\cite{LST90}.   However, they showed that the integrality gap of the assignment LP is unbounded, so rounding the assignment LP gives no guarantee on the approximation ratio.  Later, Bansal and Sviridenko~\cite{BS06} proposed a stronger LP relaxation, the configuration LP, for the max-min allocation problem.  Asadpour and Saberi~\cite{AS07} developed a polynomial-time rounding scheme for the configuration LP that gives an approximation ratio of $O(\sqrt{n}\log^3 n)$.  Saha and Srinivasan~\cite{SS10} improved it to $O(\sqrt{n\log n})$.  These approximation ratios almost match the lower bound of $\Omega(\sqrt{n})$ for the integrality gap of the configuration LP proved by Bansal and Svirodenko~\cite{BS06}.  Bateni et al.~\cite{BCG09} and Chakrabarty et al.~\cite{CCK09} established a trade-off between the approximation ratio and the running time.  For any $\delta > 0$, they can achieve an approximation ratio of $O(n^{\delta})$ with $O(n^{1/\delta})$ running time.

In this paper, we study the restricted max-min allocation problem.  In the restricted case, we have $v_{pr} \in \{v_r, 0\}$.  That is, each resource $r$ has an intrinsic value $v_r$, and it is worth value $v_r$ to those players who desire it and value $0$ to those who do not.  Assuming $\mathrm{P}\neq \mathrm{NP}$, the restricted case has a lower bound of $2$ for the approximation ratio. The integrality gap of configuration LP for the restricted case also has a lower bound of $2$.  Bansal and Sviridenko~\cite{BS06} proposed an $O\bigl(\frac{\log\log n}{\log\log\log n}\bigr)$-approximation algorithm by rounding the configuration LP.  Feige~\cite{F08} proved that the integrality gap of the configuration LP is bounded by a constant, albeit large and unspecified.  His proof was later made constructive by Haeupler et al.~\cite{HSS11}, and hence a constant approximation can be found in polynomial time.  Asadpour et al.~\cite{AFS12} viewed the restricted max-min allocation problem as a bipartite hyper-graph matching problem.  Let $T^*$ be the optimal value of the configuration LP.  By adapting Haxell's~\cite{H95} alternating tree technique for bipartite hyper-graph matchings, they proposed a local search algorithm that returns an allocation where every player receives at least $T^*/4$ worth of resources, and hence proved that the integrality gap of the configuration LP is at most $4$.   However, their algorithm is not known to run in polynomial time.  A lot of efforts have been devoted to making their algorithm run in polynomial time.  Polacek and Svensson~\cite{PS12} showed that the local search can be done in quasi-polynomial time by building the alternating tree in a more careful way.  
 Annamalai, Kalaitzis and Svensson~\cite{AKS17} carried out the local search in a more structured way.   Together with two new \emph{greedy} and \emph{lazy update} strategies, they can find in polynomial time an allocation in which every player receives a value of at least $T^*/(6+ 2\sqrt{10} + \delta)$.  Recently, we proposed a more flexible, aggressive greedy strategy that 
improves the approximation ratio to $6 + \delta$~\cite{CM18a}.  Davies et al.~\cite{DRZ18} 
claimed a $(6+\delta)$-approximation algorithm for the restricted max-min allocation problem by reducing it to the fractional matroid max-min allocation problem.

\paragraph*{Our Contribution} 
We adapt the framework in~\cite{AKS17} by introducing two new strategies: \emph{layer-level node-disjoint paths} and \emph{limited blocking}.  The performance of our framework is determined by three parameters, and a trade-off between the running time and the quality of solution can be achieved by tuning these parameters.  On one extreme, our framework acts exactly the same as the original local search in~\cite{AFS12}, which achieves a ratio of $4$ but not necessarily run in polynomial time.  On the other extreme, it becomes something like the algorithm in~\cite{AKS17}, which achieves a polynomial running time but a much worse ratio.  We show that, in order to achieve a polynomial running time, one doesn't have to go from one extreme to the other --- a marginal movement is sufficient.  As a result, a ratio slightly worse than $4$ can be achieved in polynomial time.

\begin{theorem}
	\label{thm:approx}
	For any constant $\delta > 0$, there is a $(4+\delta)$-approximation algorithm for the restricted max-min allocation problem that runs in $\mathit{poly}(m,n)\cdot n^{\mathit{poly}(\frac{1}{\delta})}$ time.
\end{theorem}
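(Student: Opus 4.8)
The plan is to cast the restricted max-min allocation problem as a bipartite hypergraph matching problem and to run a Haxell-style local search in the spirit of \cite{AFS12,PS12,AKS17,CM18a}, but equipped with the two new growth strategies so that a ratio arbitrarily close to $4$ survives while the running time becomes polynomial. First I would guess, by binary search, the optimum $T^{*}$ of the configuration LP and rescale so that $T^{*}=1$. Call a resource \emph{fat} if its value is above a suitable threshold and \emph{thin} otherwise; a player is \emph{satisfied} once it owns a desired fat resource, or a \emph{configuration}, i.e.\ a bundle of desired thin resources, of total value at least $\frac{1}{4+\delta}$. The goal becomes the construction of a matching of players to private fat resources and to pairwise disjoint thin configurations in which every player is satisfied; that such a matching exists with this target is guaranteed by the fractional configuration-LP solution together with the standard fat/thin decomposition, exactly as in the integrality-gap-$4$ argument of \cite{AFS12}.

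Then I would maintain a partial such matching and, whenever a player $p_{0}$ is unmatched, grow an alternating tree rooted at $p_{0}$: from a player node we sprout \emph{addable} edges (configurations the player would accept), each addable edge is obstructed by \emph{blocking} edges (resources currently held by other players inside that configuration), and the holders of the blocking edges become the next layer of player nodes. Keeping \emph{all} blocking edges, as in \cite{AFS12}, gives ratio $4$ but an uncontrolled, possibly exponential, collapse process; forcing the tree into rigid layers with few blocking edges, as in \cite{AKS17}, is fast but expensive in the ratio. Here I would impose \emph{limited blocking} --- retain only a bounded number $\beta=\beta(\delta)$ of blocking edges per addable edge --- and \emph{layer-level node-disjoint paths} --- keep the augmenting sub-paths node-disjoint within each layer --- with the three governing parameters set as a \emph{marginal} perturbation of the \cite{AFS12} extreme, so that the tree has depth $O(1/\delta)$ and controlled branching yet $\beta$ remains large enough for the fractional LP mass to keep certifying a usable addable edge at each active player. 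When a suitable family of addable edges becomes \emph{unblocked} I would contract (augment) along it, attach to the tree a lexicographic signature --- e.g.\ the vector that counts, layer by layer, the relevant edges --- and show each contraction strictly decreases this signature, so that the number of contractions, hence of iterations, is $\mathit{poly}(m,n)\cdot n^{\mathit{poly}(1/\delta)}$.

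It remains to prove the ratio: when the search halts with no unmatched player, every player holds a bundle worth at least $\frac{1}{4+\delta}$. If some player were deficient, the alternating tree rooted at it could be grown one more layer, because (i) the configuration LP restricted to the thin resources still outside the tree still offers that player an addable configuration of value close to $1$ --- the limited-blocking budget $\beta$ and the node-disjoint-path structure being tuned precisely so that the $O(1/\delta)$ blocking losses add up to at most $\delta$ --- and (ii) growth either exposes an unblocked family (contradicting termination) or strictly enlarges the tree (contradicting minimality of the signature). Reconciling (i) and (ii) is the crux: the ratio argument pulls the parameters toward the thick-tree \cite{AFS12} regime while the signature-decrease argument pulls them toward the thin-tree \cite{AKS17} regime, and the technical heart is an invariant coupling the alternating tree to the LP solution tightly enough that a \emph{small} move away from the \cite{AFS12} extreme already buys both a polynomial iteration bound and an $O(\delta)$ loss only. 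The rest --- bounding the per-iteration cost, handling fat resources through an auxiliary bipartite matching, and verifying that a depth-$O(1/\delta)$ tree with controlled branching contributes the $n^{\mathit{poly}(1/\delta)}$ factor --- is standard bookkeeping.
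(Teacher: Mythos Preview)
Your high-level outline matches the paper's framework --- layered local search with addable and blocking edges, a lazy contraction phase, and a lexicographic signature bound --- but the concrete mechanisms you describe diverge from what actually makes the argument go through, and in places are simply wrong.

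First, your description of \emph{limited blocking} is not what the paper does. You propose to ``retain only a bounded number $\beta=\beta(\delta)$ of blocking edges per addable edge.'' The paper's device is the opposite direction: a blocking edge $b$ is allowed to block several addable edges, but once the total \emph{value} $v[R_b\cap R(\mathcal{A}_{\ell+1})]$ of resources it shares with the current layer's addable edges exceeds $\beta\lambda$, all of $R_b$ is declared inactive and $b$ cannot block any further addable edge. This value-based cap on the blocking-edge side (not a count cap on the addable-edge side) is exactly what yields the two-sided control $|A_i|<(1+\beta/\gamma)|B_i|$ and $|\mathcal{B}'_i|<\frac{2+\gamma}{\beta}|A_i|$ needed in the dual accounting. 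Second, you omit the \emph{greedy} ingredient entirely: addable edges in the layers are required to be $(1+\gamma)\lambda$-minimal, not $\lambda$-minimal. Without this slack, a blocked addable edge need not have any positive value trapped in blocking edges, and the lower bound $|B_i|>\frac{\gamma}{\gamma+\beta}|A_i|$ collapses. The whole point is the \emph{interaction} of greedy and limited blocking, governed by the specific choice $\beta=\gamma^2$, $\mu=\gamma^3$; neither works alone.

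Two further points. Your claim that the tree has depth $O(1/\delta)$ is incorrect: the stack can have $\Theta\bigl(\mathrm{poly}(1/\delta)\cdot\log n\bigr)$ layers, since the geometric growth factor between consecutive layers is only $1+\frac{\gamma^3}{1+\gamma}$. The $n^{\mathrm{poly}(1/\delta)}$ bound does not come from bounded depth and branching but from the signature argument in which the \emph{sum} of the (logarithmic) signature coordinates is $O\bigl(\mathrm{poly}(1/\delta)\cdot\log^2 n\bigr)$, and one counts integer partitions of that size. Finally, the ``never stuck'' step is not the soft LP-restriction argument you sketch; it is a precise dual-LP construction (defining $y^*_p$, $z^*_r$ on the reachable set $P^+$, $R^+_f$ and on the inactive thin resources) showing that if the newly built layer had too few addable edges then the dual of $\mathit{CLP}(1)$ would be unbounded. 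The inequalities there are where the parameter choices $\beta=\gamma^2$, $\mu=\gamma^3$ and $\lambda=\frac{1}{4+\delta}$ are actually consumed, and your proposal gives no indication of how that balance is struck.
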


Although the algorithm we present takes the optimal value of the configuration LP as its input, one can avoid solving the configuration LP by combining our algorithm with binary search to zoom into the optimal value of configuration LP.  The binary search technique is similar to that in~\cite{AKS17, CM18a}.

We also show that the integrality gap of the configuration LP is at most $3 + \frac{21}{26} \approx 3.808$ by giving a better analysis of the AFS algorithm.  This improves the bound of $3+\frac{5}{6} \approx 3.833$ recently obtained in~\cite{CM18b,JR18}.

\begin{theorem}
	\label{thm:gap}
	The integrality gap of the configuration LP for the restricted max-min allocation problem is at most $3 + \frac{21}{26} \approx 3.808$.
\end{theorem}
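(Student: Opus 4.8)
The plan is to keep the AFS local-search algorithm~\cite{AFS12} exactly as it is and to prove a sharper bound on the value it guarantees. Rescale the instance so that the optimum of the configuration LP is $T^* = 1$, set $\alpha = 3 + \frac{21}{26}$, and let $T = 1/\alpha$ be the target value per player. As in~\cite{AFS12}, a resource is \emph{fat} if its value is at least the target $T$ and \emph{thin} otherwise, so that a fat resource desired by a player satisfies that player by itself, while an inclusion-minimal set of thin resources of total value at least $T$ has value in $[T, 2T)$. The algorithm maintains a partial allocation and, starting from an unsatisfied player, grows a Haxell-style alternating tree whose layers alternate between players and candidate edges (a desired fat resource, or a minimal thin bundle); an edge is \emph{blocking} when all of its resources are held by players already in the tree, and \emph{addable} otherwise. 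As long as an addable edge is present the tree can be collapsed or shifted so as to make progress, and since there are only finitely many players it cannot grow forever, so the search halts with every player holding value at least $T$ --- provided one can rule out a \emph{stuck} tree, that is, a maximal tree in which every minimal satisfying edge of every tree player is already present and blocking. The theorem therefore reduces to showing that a stuck tree contradicts $T^* = 1$, and what has to be verified is that this argument already closes for $\alpha = 3 + \frac{21}{26}$.

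To produce the contradiction I would fix a hypothetical stuck tree, let $\mathcal{P}$ be its player set arranged in the usual layers $L_0$ (the unsatisfied root), $L_1, L_2, \dots$, and play the optimal fractional configuration-LP solution off against the structure of the tree. Restricted to $\mathcal{P}$, the LP assigns each player a total weight $1$ of configurations, each worth at least $1 = \alpha T$, with every resource used to total fractional weight at most $1$; by \emph{stuckness}, a minimal satisfying sub-bundle of each such configuration must have all of its resources owned inside $\mathcal{P}$, and more generally the part of any configuration not owned by $\mathcal{P}$ has value below $T$ (otherwise it would contain an addable edge). Confronting this with the fact that the resources owned by $\mathcal{P}$ are limited --- each non-root player of $\mathcal{P}$ holds either a single fat resource or a minimal thin bundle of value $< 2T$, and the root holds less than $T$ --- yields the desired inequality, where the fat resources must be accounted by a Hall-type matching bound on their number (rather than a value bound, since their individual values may be arbitrarily large) and the thin resources by a value bound. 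Done at the crudest level this recovers the AFS constant $4$.

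The real content of the theorem is to replace $4$ by $3 + \frac{21}{26}$, thereby also improving the $3 + \frac{5}{6}$ of~\cite{CM18b, JR18}. The gain comes from doing the accounting layer by layer rather than all at once: a blocking thin bundle is worth only $< 2T$ whereas a thin configuration is worth $\ge \alpha T$, so a definite fraction of each configuration's value is forced into deeper layers, and propagating this through the tree with carefully chosen per-layer weights lets one trade off the slack in the fat-matching bound against the slack in the thin-value bound. A more attentive treatment of the leaf players of the tree and of the root's deficiency, together with this weighting (optimized, if convenient, with an auxiliary parameter used only in the analysis), is what pins the extremal ratio at $99/26$ and produces the denominator $26$. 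Designing the weighted inequality so that no layer or boundary case is wasteful, and checking that the worst case is exactly $3 + \frac{21}{26}$, is the main obstacle; the tree-growing rule, the collapse step, the termination argument, and the reduction of the statement to the non-existence of a stuck tree are all as in~\cite{AFS12} and its refinements.
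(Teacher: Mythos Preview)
Your framework is right: keep AFS exactly as is and sharpen only the ``no stuck tree'' argument, deriving a contradiction from $T^*=1$. But the part that actually carries the constant $3+\tfrac{21}{26}$ is missing, and the mechanism you sketch is not the one that works in the paper.

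You propose a primal counting argument with ``layer-by-layer weights,'' trading slack between the fat-matching bound and the thin-value bound across layers. The paper does something quite different and does \emph{not} weight by layer at all. It constructs an explicit feasible solution to the \emph{dual} of $\mathit{CLP}(1)$ with positive objective, which is impossible since $\mathit{CLP}(1)$ is feasible. The players and fat resources reachable from $B_{\leq\ell}$ get dual value $1-\tfrac{21}{26}\lambda$; the novelty is in the dual values assigned to the thin resources $r\in R(\Sigma)$, which are a carefully designed piecewise function of $v_r$:
\[
z^*_r=\begin{cases}
\dfrac{3\lambda}{2\lambda+v_r}\,v_r & v_r\in(0,\tfrac{\lambda}{2}),\\[1ex]
\dfrac{3\lambda}{3\lambda-v_r}\,v_r & v_r\in[\tfrac{\lambda}{2},\tfrac{3\lambda}{4}),\\[1ex]
\lambda & v_r\in[\tfrac{3\lambda}{4},\lambda).
\end{cases}
\]
The point is that AFS's crude bounds (upper bound $2\lambda$ per $\lambda$-minimal edge, lower bound $1-\lambda$ per configuration) are tight in \emph{incompatible} regimes: the upper bound only when resources are near $\lambda$, the lower bound only when they are near $0$. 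Magnifying the dual value of small thin resources via the function above raises the lower bound without hurting the upper bound. Feasibility is a case split on how many resources of a configuration lie in $[\tfrac{3\lambda}{4},\lambda)$; positivity uses the edge-level estimate $\sum_{r\in R_e}z^*_r\le\tfrac{3\lambda}{2}+\tfrac{1}{2}z^*_{r_0}$ (with $r_0$ the smallest resource in $e$) together with the fact that every blocking edge shares a resource with its addable edge, yielding $\sum_{r\in R(\Sigma)}z^*_r\le 3\lambda(|B_{\leq\ell}|-1)$. Equating this against $(1-\tfrac{21}{26}\lambda)|B_{\leq\ell}|$ forces $\lambda=\tfrac{26}{99}$. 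None of this is layer-dependent; the leverage comes from weighting by \emph{resource value}, not by depth in the tree. Your proposal, as written, does not identify this idea, and ``optimize an auxiliary per-layer parameter'' is not a substitute for the concrete dual construction that pins down $\tfrac{21}{26}$.
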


We focus on only the proof of Theorem~\ref{thm:approx} in the main text and defer the proof of Theorem~\ref{thm:gap} to Appendix~\ref{apd:gap}.
\section{Preliminaries}
\label{sec:pre}

\subsection{The configuration LP}  
Suppose that we hope to find an allocation where every player receives at least $T$ worth of resources.  A \emph{configuration} for a player $p$ is a subset $D$ of the resources desired by $p$ such that $\sum_{r \in D}v_{r} \geq T$.  Let ${\cal C}_p(T)$ denote the set of all configurations for $p$.  

The configuration LP is given on the left of Figure~\ref{fig:lp}.  Given a target $T$, the configuration LP, denoted as $\mathit{CLP}(T)$, associates a variable $x_{p,C}$ with each player $p$ and each configuration $C$ in ${\cal C}_p(T)$. 
Its first constraint ensures that each player receives at least $1$ unit of configurations, and the second constraint guarantees that every resource $r$ is used in at most 1 unit of configurations.  The optimal value of the configuration LP is the largest $T$ for which $\mathit{CLP}(T)$ is feasible. We denote this optimal value by $T^*$.  Without loss of generality, we assume that $T^* = 1$ for the rest of the paper.  Although the configuration LP may have an exponential number of variables, it can be solved within any constant relative error in polynomial time~\cite{BS06}.  Viewing the objective function of the configuration LP as a minimization of a constant, one can get the dual LP on the right of the Figure~\ref{fig:lp}.  

\begin{figure}
		\begin{minipage}[t][11em][t]{0.4\textwidth}
		\centering\textbf{Primal}
		\begin{alignat*}{3}
    	&\quad 	& \sum_{C\in {{\cal C}_p(T)}}x_{p,C} &\geq 1
				& \quad &\forall p\in P\\
		&		&\sum_{p\in P}\sum_{C\in {\cal C}_p(T): r\in C} x_{p, C} &\leq 1
				& \quad &\forall r\in R\\
		&		& x_{p, C}&\geq 0
		\end{alignat*}
		\end{minipage}
		\hfill
		\begin{minipage}[t][11em][t]{0.5\textwidth}
		\centering\textbf{Dual}
		\begin{alignat*}{3}
		\max&\quad & &\sum_{p\in P} y_p - \sum_{r \in R} z_r\\
		s.t.&\quad & y_p&\leq \sum_{r\in C}z_r
			& \quad &\forall p\in P, \forall C\in {\cal C}_p(T)\\
			&		& y_p&\geq 0 & \quad &\forall p\in P\\
			&		& z_r&\geq 0 & \quad &\forall r\in R
		\end{alignat*}
		\end{minipage}
	\caption{The configuration LP and its dual.}
	\label{fig:lp}
\end{figure}

\subsection{Fat and thin edges}

Our goal is to find an allocation in which every player receives at least $\lambda$ worth of resources for some $\lambda \in (0,1)$.  In particular, our approximation algorithm sets $\lambda = \frac{1}{4+\delta}$ where $\delta$ is a positive constant.  For each resource $r\in R$, we call $r$ \emph{fat} if $v_r \geq \lambda$, and \emph{thin} otherwise.  To find the target allocation, it suffices to assign each player $p$ either a fat resource desired by $p$ or a subset $D$ of the thin resources desired by $p$ with $\sum_{r\in D}v_r \geq \lambda$.

For every $p\in P$ and every fat resource $r$ desired by $p$, we call $\{p, r\}$ a \emph{fat edge}.  For every $p \in P$ and every subset $D$ of the thin resources desired by $p$, we call $(p, D)$ a \emph{thin edge} if $\sum_{r \in D}v_r \geq \lambda$.  Two edges are \emph{compatible} if they share no common resource.  We say that a fat edge $\{p,r\}$ \emph{covers} $p$ and $r$.  Similarly, a thin edge $(p, D)$ \emph{covers} $p$ and the resources in $D$.  A player or a resource is covered by a set of edges if it is covered by some edge in the set.  For any $w \geq 0$, a thin edge $(p, D)$ is \emph{$w$-minimal} if $\sum_{r \in D}v_r \geq w$ and $\sum_{r \in D'} v_r < w$ for any $D' \subsetneq D$.  For a $w$-minimal thin edge $(p, D)$,  it is not hard to see that $w \leq \sum_{r\in D}v_r < w + \lambda$.

Given the above definitions of fat and thin edges, finding the target allocation is equivalent to finding a set of mutually compatible edges that covers all the players.

\subsection{A local search idea}
The following local search idea is initially proposed by Asadpour et al.~\cite{AFS12}, and is also used in~\cite{AKS17, CM18a}.

Let $G$ be the bipartite graph formed by the players, the fat resources, and the fat edges.  We maintain a set $M$ of fat edges and a set $\cal E$ of thin edges such that: (i)~$M$ is a maximum matching of $G$, (ii)~edges in $\cal E$ are \emph{$\lambda$-minimal} and are mutually compatible, and (iii)~each player is covered by at most one edge in $M\cup {\cal E}$.  We call such $M$ and ${\cal E}$ a partial allocation.  Initially, $M$ is an arbitrary maximum matching of $G$, and $\cal E$ is empty.  The set $M\cup {\cal E}$ is updated and grown iteratively so that one more player is covered in each iteration.  The final set $M\cup {\cal E}$ covers all the players and induces our target allocation.

Let $p_0$ be a player not yet covered by $M \cup {\cal E}$.  We need to update $M \cup {\cal E}$ to cover $p_0$ without losing any player that are already covered.  The simplest case is that we can find a player $q_0$ such that $q_0$ is covered by a thin edge $a$ compatible with $\cal E$ and there is an alternating path~\cite{HK73} with respect to $M$ from $p_0$ to $q_0$.  Let $\pi$ be this alternating path.  We first update $M$ by taking the symmetric difference $M\oplus \pi$, i.e., remove the edges in $\pi \cap M$ from the matching and add the edges in $\pi \setminus M$ to the matching.  $M \oplus \pi$ is also a maximum matching of $G$. After the update, $p_0$ becomes matched while $q_0$ becomes unmatched.  Then we add $a$ to ${\cal E}$ to cover $q_0$ again.  Here we slight abuse the notion of alternating paths in the sense that wen allow an alternating path with no edge.  The $\oplus$ can easily extend to alternating paths with no edge.

It is possible that no edge covering $q_0$ is compatible with $\cal E$.  Let $a$ be an edge covering $q_0$.  Suppose that $b$ is an edge in $\cal E$ that is not compatible with $a$.  We say $b$ \emph{blocks} $a$.  Let $p_1$ be the player covered by $b$.   In order to add $a$ to ${\cal E}$, we have to release $b$ from $\cal E$.  But we cannot lose $p_1$, so before we release $b$, we need to find another edge to cover $p_1$.  Now $p_1$ has a similar role as $p_0$. 

\subsection{Node-disjoint alternating paths}
\label{sec:disjoint-path}

In order to achieve a polynomial running time, our algorithm updates $M$ using multiple node-disjoint alternating paths from unmatched players to players .  In this section, we define a problem of finding a largest set of node-disjoint paths.  We also extend the $\oplus$ operation to a set of node-disjoint paths.

For any maximum matching $M$ of $G$, we define $G_M$ to be the directed graph obtained from $G$ by orienting edges of $G$ from $r$ to $p$ if $\{p, r\} \in M$, and from $p$ to $r$ if $\{p, r\} \notin M$.  Let $S$ be a subset of the players not matched by $M$.  Let $T$ be a subset of the players.  Finding the largest set of node-disjoint alternating paths from $S$ to $T$ is equivalent to finding the largest set of node-disjoint paths in $G_M$ from $S$ to $T$.  Let $G_M(S,T)$ denote the problem of finding the largest set of node-disjoint paths from $S$ to $T$ in $G_M$.  Let $f_M(S, T)$ denotes the maximum number of such paths.  Note that when $S\cap T \neq \emptyset$, a path consisting of a single node is allowed.  Such path is called a trivial path.  Paths with at least one edge is non-trivial.  Let $\Pi$ be a feasible solution for $G_M(S,T)$. The paths in $\Pi$ originate from a subset of $S$, which we call the \emph{sources} and denote as $\src_{\Pi}$, and terminate in a subset of $T$, which we call the \emph{sinks} and denote as $\sink_{\Pi}$.  We extend the $\oplus$ operation to $\Pi$.  Viewing $\Pi$ as a set of edges, $M\oplus \Pi$ stands for removing the edges in $\Pi \cap M$ from the matching and adding the edges in $\Pi \setminus M$ to the matching.  One can see that $M\oplus \Pi$ is a maximum matching of $G$.

The problem $G_M(S, T)$ can be solved in polynomial time.  Please see the appendix~\ref{apd:path} for more about this problem.

\section{An Approximation Algorithm}
\label{sec:approx}

We discuss below a few techniques used by our algorithm. Some of them are used in~\cite{AKS17, CM18a, DRZ18}.  The \emph{limited blocking} strategy is brand new, and is crucial to achieving an approximation ratio of $4 +\delta$.  In the following discussion, one can interpret addable edges as thin edges that we hope to add to $\cal E$, and blocking edges as edges in $\cal E$ that are not compatible with addable edges.  The precise definition will be given later.

\emph{Layers}.  As in \cite{AKS17,CM18a}, we maintain a stack of layers, where each layer consists of addable edges and their blocking edges.  The key to achieving a polynomial running time is to guarantee a geometric growth in the number of blocking edges from the bottom to the top of the stack.

\emph{Layer-level node-disjoint paths}.  We require that the players covered by the addable edges in a layer can be simultaneously reached via node-disjoint paths in $G_M$ from the players covered by the blocking edges in the lower layers~\cite{DRZ18}.  It has the same effect as the globally node-disjoint path used in~\cite{AKS17}: if lots of addable edges in a layer become unblocked, then a significant update can be made.  The advantage of our strategy is that it offers more flexibility when building a new layer.

\emph{Lazy update}. When having an unblocked addable edge, one may be tempted to update $M$ and ${\cal E}$ immediately.  However, as in~\cite{AKS17}, in order to achieve a polynomial running time, we should wait until there are lots of unblocked addable edges, and then a significant update can be made in one step.  We will define a constant $\mu$ to control the laziness.

\emph{Greedy and Limited Blocking}.  Recall that the key to achieving a polynomial running time is to guarantee a geometric growth in the number of blocking edges from the bottom to the top of the stack.  In~\cite{AFS12},  every addable edge is $\lambda$-minimal, and each blocking edge blocks exactly one addable edge.  Using this strategy, in worst case, one may get a layer that has one addable edge being blocked by many blocking edges.  After some of these blocking edges are released from $\cal E$, we may be left with a layer that has a single addable edge being blocked by a single blocking edge, which breaks the geometric growth in the number of blocking edges.  To resolve this issue, Annamalai et al.~\cite{AKS17} allow a blocking edge to block as many addable edges as possible.  However, it causes a new trouble: one may get a layer that has many addable edges being blocked by one blocking edge.  Again, this breaks the geometric growth in the number of blocking edges.  As a consequence, they have to introduce another strategy \emph{Greedy}.  They require every addable edge to be $\frac{1}{2}$-minimal.  Such an addable edge contains much more resources than necessary.  If such an addable edge is blocked, at least $\frac{1}{2}-\lambda$ worth of its resources must be occupied by blocking edges.  Provided that a blocking edge is $\lambda$-minimal and covers at most $2\lambda$ worth of resources,  the greedy strategy ensures that, in a layer, the number of blocking edges cannot be too small comparing with the number of addable edges.  Analysis shows that although the greedy strategy makes the algorithm faster, it deteriorates the approximation ratio.  Our strategy is a generalization of those used in~\cite{AFS12} and~\cite{AKS17}.  We allow a blocking edge $b$ to block more than one addable edge, but once $b$ shares strictly more than $\beta\lambda$ worth of resources with the addable edges blocked by it,  we stop $e$ from blocking more edges.  We use greedy too.  In our algorithm, addable edges in layers are $(1+\gamma)\lambda$-minimal for some constant $\gamma$.  

If we set $\beta,\gamma,\mu$ to be $0$, then our algorithm acts exactly the same as the local search in~\cite{AFS12}, which achieves a ratio of $4$ but may not run in polynomial time.  If $\beta, \gamma$ are set to be some large constant, then our algorithm acts like the algorithm in~\cite{AKS17} which achieves a polynomial running time but a much worse ratio.  We show that carefully selected tiny $\beta$ and tiny $\gamma$ guarantee a polynomial running time but barely hurt the approximation ratio.

\subsection{The algorithm}
\label{sec:alg}

Let $M \cup {\cal E}$ be the current partial allocation.  Let $p_0$ be a player that is not yet covered by $M \cup {\cal E}$.  The algorithm alternates between two phases to update and extend $M \cup {\cal E}$ so that the partial allocation covers $p_0$ eventually without losing any covered player.  In the building phase, it pushes new layers onto a stack, where each layer stores some addable edges and their blocking edges.  In the collapse phase, it uses unblocked addable edges to release some blocking edges in some layer from $\cal E$.

Since we frequently talk about resources covered by thin edges and take sum of values over a set of resources, we define the following notations.  Given a thin edge $e$, $R_e$ denotes the set of resources covered by $e$.  Given a set $\cal S$ of thin edges, $R({\cal S})$ denotes the set of thin resources covered by $\cal S$.  Given a set $D$ of resources, define $v[D] = \sum_{r\in D} v_r$.

\subsubsection{Building phase}

The algorithm maintains a stack of layers.  The layer index starts with $1$ from the bottommost layer in the stack.  The $i$-th layer $L_i$ is a tuple $({\cal A}_i, {\cal B}_i, d_i, z_i)$, where ${\cal A}_i$ is a set of addable edges that we want to add to $\cal E$, and ${\cal B}_i$ is a set of blocking edges that prevent us from doing so. The two numeric values $d_i$ and $z_i$ are maintained for the sake of analysis.  The algorithm also maintains a set $\cal I$ of addable edges that are compatible with $\cal E$.  We will define addable edges and blocking edges later.  We use $\ell$ to denote the number of layers in the current stack.  The state of the algorithm is specified by $(M, {\cal E}, {\cal I}, (L_1, \ldots, L_{\ell}))$.

For each ${\cal A}_i$, we use $A_i$ to denote the set of players covered by ${\cal A}_i$. Similarly, $B_i$ and $I$ denote the set of players covered by ${\cal B}_i$ and $\cal I$, respectively.    For $i \in [1, \ell]$, define ${\cal B}_{\leq i} = \bigcup_{j = 1}^{i}{\cal B}_j$, $B_{\leq i} = \bigcup_{j=1}^i B_j$, and ${\cal A}_{\leq i} = \bigcup_{j=1}^i {\cal A}_j$.

For simplicity, we define the first layer $L_1$ to be $(\emptyset, \{(p_0, \emptyset)\}, 0, 0)$.   That is, ${\cal A}_1 = \emptyset$, ${\cal B}_1 = \{(p_0,\emptyset)\}$, and $d_1 = z_1 = 0$.  

The layers are built inductively.  Initially, there is only the layer $L_1$ and ${\cal I} = \emptyset$.  Let $\ell$ be the number of layers in the current stack.  Consider the construction of the $(\ell + 1)$-th layer.  

\begin{definition}
	\label{def:active-resrc}
	Let $\beta \geq 0$ be a constant to be specified later.  A thin resource $r$ is \textbf{inactive} if {\em (i)}~$r \in R({\cal A}_{\leq \ell}\cup {\cal B}_{\leq \ell})$, or {\em (ii)}~$r \in R({\cal A}_{\ell + 1} \cup {\cal I})$, or {\em (iii)}~$r \in R_b$ for some $b \in  {\cal B}_{\ell + 1}$ and $v[R_b \cap R({\cal A}_{\ell + 1})] >  \beta\lambda$.  If a thin resource is not inactive, then it is \textbf{active}.
\end{definition}

We will define addable edges so that they use only active thin resources.

\begin{definition}
	A player $p$ is \textbf{addable} if $f_M(B_{\leq \ell}, A_{\ell+1}\cup I \cup  \{p\}) = f_M(B_{\leq \ell}, A_{\ell+1}\cup I) + 1$.
\end{definition}

The activeness of thin resources and the addability of the players depend on ${\cal A}_{\ell+1}$ and ${\cal I}$ ($A_{\ell + 1}$ and $I$), so they may be affected as we add edges to ${\cal A}_{\ell+1}$ and ${\cal I}$.

\begin{definition}
\label{def:addable-edge}
A thin edge $(p, D)$ is \textbf{addable} if~$p$~is addable and $D$ is a set of \textbf{active} thin resources desired by $p$ with $v[D] \geq \lambda$.
The \textbf{blocking edges} of an addable edge $(p,D)$ are $\{\textrm{$e \in {\cal E}$ : $R_e \cap D\neq \emptyset$ }\}$.  An addable edge $(p, D)$ is \textbf{unblocked} if $v[D\setminus R(\cal E)] \geq \lambda$.  
\end{definition}

Recall that, for any $w > 0$, a thin edge $(p, D)$ is \emph{$w$-minimal} if $v[D] \geq w$ and $v[D'] < w$ for any $D' \subsetneq D$.  Our algorithm considers two kinds of addable edges.  The first kind is unblocked addable edges that are $\lambda$-minimal.  It is easy to see that if an unblocked addable edge is $\lambda$-minimal, then it must be compatible with ${\cal E}$.  We use $\cal I$ to keep such addable edges.  The second kind is blocked addable edges that are $(1+\gamma)\lambda$-minimal, where $\gamma$ is a constant to be specified later.  Such edges will be added to ${\cal A}_{\ell+1}$.  Once a $(1+\gamma)\lambda$-minimal addable edge $(p, D)$ becomes unblocked, we can easily extract a $\lambda$-minimal unblocked addable edge $(p, D')$ with $D' \subseteq D$.  

Consider condition (iii) in Definition~\ref{def:active-resrc}.  Let $b$ be a blocking edge in ${\cal B}_{\ell+ 1}$. When $R_b$ and $R({\cal A}_{\ell + 1})$ share strictly more than $\beta\lambda$ worth of resources, all resources in $R_b$ become inactive.  Any addable edge to be added to ${\cal A}_{\ell+1}$ in the future cannot use these inactive resources, and hence, will not be blocked by $b$. This is how we achieve ``limited blocking'' mentioned before.

We call {\sc Build} below to construct the $(\ell+1)$-th layer.  Note that after adding an addable edge to ${\cal A}_{\ell + 1}$, we immediately add its blocking edges to ${\cal B}_{\ell + 1}$ in order to keep the inactive/active status of thin resources up-to-date.

\begin{quote}
\noindent {\sc Build}$(M, {\cal E}, {\cal I}, (L_1,\cdots,L_\ell))$
\begin{enumerate}	
	\item Initialize ${\cal A}_{\ell+1} = \emptyset$ and ${\cal B}_{\ell+1} = \emptyset$.

	\item While there is an unblocked addable edge that is $\lambda$-minimal, add it to $\cal I$. \label{step:build-1}
	
	\item While there is an addable edge $(p, D)$ that is $(1+\gamma)\lambda$-minimal
	\begin{enumerate}[{3.}1]
		\item add $(p, D)$ to ${\cal A}_{\ell + 1}$. (Note that $(p, D)$ must be blocked; otherwise, we could extract from it a $\lambda$-minimal unblocked addable edge, which should be added to ${\cal I}$ in step~\ref{step:build-1}.)
		\item add to ${\cal B}_{\ell+1}$ the edges in $\cal E$ that block $(p, D)$.
	\end{enumerate}
	\item Set $d_{\ell+1}: = f_M(B_{\leq \ell}, A_{\ell+1}\cup I)$, $z_{\ell+1}: = \left|A_{\ell+1}\right|$, and $L_{\ell+1} := ({\cal A}_{\ell+1}, {\cal B}_{\ell + 1}, d_{\ell+1}, z_{\ell + 1})$.
	\item Update $\ell:= \ell + 1$
\end{enumerate}
\end{quote}

Table~\ref{tb:fact} lists a few facts about the layers.

\begin{table}
	\centering	
	\begin{tabu}to \textwidth {X[1cb]X[5lp]}
		\toprule
		Fact~1 & For every $i \in [1, \ell]$, edges in ${\cal A}_i$ are mutually compatible.\\
			&For every $i,j\in[1, \ell]$ with $i\neq j$, $R({\cal A}_i) \cap R({\cal A}_j) = \emptyset$.\\ 
		\midrule
		Fact~2 & Edges in ${\cal I}$ are mutually compatible, and they are also compatible with edges in $\cal E$. For every $i \in [1, \ell]$, $R({\cal A}_i) \cap R({\cal I}) = \emptyset$\\
		\midrule
		Fact~3 & $\{{\cal B}_2, \ldots, {\cal B}_\ell\}$ are disjoint subsets of ${\cal E}$. Note that ${\cal B}_1 = \{(p_0,\emptyset)\}$ does not share any resource with ${\cal B}_i$ for $i \in [2,\ell]$.  \\
		\bottomrule
	\end{tabu}
	\caption{
		Some facts about $\cal I$ and the layers in the stack.
	}
	\label{tb:fact}
\end{table}

\subsubsection{Collapse phase}
When some layer becomes collapsible, the algorithm enters the collapse phase.  Let $(M, {\cal E}, {\cal I}, \allowbreak (L_1,\cdots,L_\ell))$ be the current state of the algorithm. In order to determine whether a layer is collapsible or not, we need to compute the following decomposition of ${\cal I}$.  Let $({\cal I}_1, \ldots, {\cal I}_{\ell - 1})$ be some disjoint subsets of $\cal I$.  Let $I_i$ denote the set of players covered by ${\cal I}_i$.  For $i \in [1, \ell-1]$, we use ${\cal I}_{\leq i}$ and $I_{\leq i}$ to denote $\bigcup_{j=1}^i {\cal I}_j$ and $\bigcup_{j=1}^i I_j$, respectively. 

\begin{definition}
\label{def:canon}
A collection of disjoint subsets $({\cal I}_1, \ldots, {\cal I}_{\ell-1})$ of $\cal I$ is a \textbf{canonical decomposition} of $\cal I$ if for all $i \in [1,\ell-1]$, $f_M(B_{\leq i}, I_{\leq i}) = f_M(B_{\leq i}, I) = |I_{\leq i}|$.  A solution $\Gamma$ for $G_M(B_{\leq \ell-1},I)$ is a \textbf{canonical solution} with respect to the canonical decomposition $({\cal I}_1, \ldots, {\cal I}_{\ell-1})$ if $\Gamma$ can be partitioned into disjoint subsets $(\Gamma_1, \ldots, \Gamma_{\ell-1})$ such that for every $i \in [1,\ell-1]$, $\Gamma_i$ is a set of $|I_i|$ paths from $B_i$ to $I_i$ in $G_M$.
\end{definition}

Although it is not clear from the definition, invariant 1 in Table~\ref{tb:invar} implies that $({\cal I}_1, \ldots, {\cal I}_{\ell-1})$ is indeed a partition of $\cal I$.  The following lemma is analogous to its counterpart in~\cite{AKS17,CM18a}.  We put its proof to Appendix~\ref{apd:proofs}.

\begin{lemma}
\label{lem:canon}
Let $\ell$ be the number of layers in the stack.  A canonical decomposition of $\cal I$ and a corresponding canonical solution for $G_M(B_{\leq \ell-1 },I)$ can be computed in $\mathit{poly}(\ell,m,n)$ time. 
\end{lemma}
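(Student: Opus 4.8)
The plan is to prove Lemma~\ref{lem:canon} by a greedy, layer-by-layer construction that repeatedly invokes the polynomial-time solver for the node-disjoint path problem $G_M(S,T)$ mentioned at the end of Section~\ref{sec:disjoint-path}. First I would recall the defining property of a canonical decomposition: for every $i\in[1,\ell-1]$ we need $f_M(B_{\leq i},I_{\leq i}) = f_M(B_{\leq i},I) = |I_{\leq i}|$. The natural way to build such a decomposition is to process the layers from the bottom up, maintaining at stage $i$ a set ${\cal I}_{\leq i}\subseteq{\cal I}$ together with a canonical solution $\Gamma_{\leq i}$ for $G_M(B_{\leq i}, I_{\leq i})$. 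At stage $i$ I would solve $G_M(B_{\leq i}, I)$ with the polynomial-time algorithm to obtain its maximum value $f_M(B_{\leq i},I)$ and an optimal family of node-disjoint paths; the new piece ${\cal I}_i$ is defined to consist of those edges of ${\cal I}$ whose covered player is a sink of this solution but was not already in $I_{\leq i-1}$, and $\Gamma_i$ is the corresponding set of paths.

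The key steps, in order, are: (1) establish monotonicity, namely $f_M(B_{\leq i},I) \le f_M(B_{\leq i+1},I)$ and that each of these quantities equals the size of the set of sinks one can reach, so that the greedily chosen ${\cal I}_i$'s are disjoint and their union has the right cardinality at every prefix; (2) argue that one can always extend a canonical solution for $G_M(B_{\leq i-1},I_{\leq i-1})$ to one for $G_M(B_{\leq i},I_{\leq i})$ without disturbing the already-committed paths — this is where an augmenting-path / Menger-type exchange argument is needed, using that $B_i$ is disjoint from $B_{\leq i-1}$ (Fact~3) and that the extra reachable sinks in layer $i$ are exactly $|I_i|$ many; (3) conclude that after processing layer $\ell-1$ we have $f_M(B_{\leq i},I_{\leq i}) = f_M(B_{\leq i},I) = |I_{\leq i}|$ for all $i$, which is precisely the canonical decomposition together with a canonical solution partitioned as $(\Gamma_1,\ldots,\Gamma_{\ell-1})$; and (4) bound the running time: each of the $O(\ell)$ stages solves one instance of $G_M(S,T)$ and performs one augmentation, each in $\mathit{poly}(m,n)$ time, for a total of $\mathit{poly}(\ell,m,n)$.

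I would also need to invoke whatever structural invariant guarantees that every edge of ${\cal I}$ ends up in some ${\cal I}_i$ — the excerpt says this follows from invariant~1 in Table~\ref{tb:invar}, which is stated elsewhere, so in the proof I would simply cite it to conclude $({\cal I}_1,\ldots,{\cal I}_{\ell-1})$ partitions ${\cal I}$ rather than merely being disjoint subsets. The monotonicity in step~(1) should be routine from the definition of $f_M$ and the nesting $B_{\leq i}\subseteq B_{\leq i+1}$, and step~(4) is immediate once each sub-procedure is polynomial.

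The main obstacle I anticipate is step~(2): showing that the canonical solution built for the first $i-1$ layers can be extended to layer $i$ consistently. The subtlety is that an optimal node-disjoint path family for $G_M(B_{\leq i},I)$ produced by a black-box solver need not respect the earlier choices ${\cal I}_{\leq i-1}$ or the earlier paths $\Gamma_{\leq i-1}$. I would handle this with an augmenting-path argument in the residual graph of $G_M$ relative to $\Gamma_{\leq i-1}$: starting from the unused sources in $B_i$, search for vertex-disjoint augmenting paths to sinks in $I\setminus I_{\leq i-1}$; the fact that $f_M(B_{\leq i},I) = |I_{\leq i-1}| + |I_i|$ (which comes from how ${\cal I}_i$ was defined as the newly reachable sinks) guarantees exactly $|I_i|$ such augmenting paths exist, and applying them yields $\Gamma_i$ while leaving $\Gamma_{\leq i-1}$ intact. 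Getting the bookkeeping of sources, sinks, and the interaction with the $\oplus$ operation exactly right — and making sure the augmentation does not accidentally re-route a path so that it no longer starts in $B_i$ — is the delicate part of the proof.
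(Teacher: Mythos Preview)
Your approach is essentially the same successive-augmentation idea the paper uses, and it works. The one place you are making life harder than necessary is step~(2): you want each augmentation to ``leave $\Gamma_{\leq i-1}$ intact,'' and you correctly flag this as the delicate part. The paper sidesteps this entirely. It never tries to preserve the earlier paths as paths; it only preserves their \emph{source set}. Concretely, the paper computes $\Pi_1$ optimal for $G_M(B_1,I)$ and then, for $j=2,\ldots,\ell-1$, augments $\Pi_{j-1}$ to an optimal $\Pi_j$ for $G_M(B_{\leq j},I)$ using Corollary~\ref{coro:augment}, which guarantees $\src_{\Pi_{j-1}}\subseteq\src_{\Pi_j}$ (and likewise for sinks). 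Since $\Pi_{j-1}$ is already optimal for sources $B_{\leq j-1}$, every newly added source must lie in $B_j$. Paths may be re-routed freely during augmentation; only the count of paths with source in each $B_i$ is invariant. The partition into $\Gamma_i$'s is done once, at the very end: $\Gamma_i$ is the set of paths in the final $\Pi_{\ell-1}$ whose source lies in $B_i$, and $I_i:=\sink_{\Gamma_i}$. Source-preservation gives $|\Gamma_{\leq i}|=|\Pi_i|=f_M(B_{\leq i},I)$, and since $\Gamma_{\leq i}$ is itself feasible for $G_M(B_{\leq i},I_{\leq i})$ the canonical-decomposition equalities follow immediately.

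So your outline is right, but the obstacle you highlight disappears once you track source sets instead of individual paths and postpone the definition of the $\Gamma_i$'s to the end.
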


All the edges in ${\cal I}_i$ are compatible with ${\cal E}$, and all the players in $I_i$ can be reached from $B_i$ by node-disjoint paths $\Gamma_i$ in $G_M$, so every edge in ${\cal I}_i$ can be used to release one blocking edge in ${\cal B}_i$ from $\cal E$.  A layer is collapsible if a certain fraction of its blocking edges can be released.

\begin{definition}
Let $\mu$ be a constant to be specified later.  A layer $L_i$ is \textbf{collapsible} if there is a canonical decomposition $({\cal I}_1, \ldots, {\cal I}_{\ell-1})$ of $\cal I$ such that $|I_i| > \mu|B_i|$.
\end{definition}

Note that although there can be more than one canonical decomposition, the collapsibility of a layer is independent of the choice of canonical decompositions, because by definition, we always have $|I_i| = f_M(B_{\leq i}, I) - f_M(B_{\leq i-1}, I)$.  

When some layer is collapsible, we enter the collapse phase, call {\sc Collapse} to shrink collapsible layers until no layer is collapsible, and then return to the build phase.

\begin{quote}
\noindent {\sc Collapse}$(M, {\cal E}, {\cal I}, (L_1,\cdots,L_\ell))$
\begin{enumerate}[1.]
	\item Compute a canonical decomposition $({\cal I}_1, \ldots, {\cal I}_{\ell-1})$ and a corresponding canonical solution $\Gamma_1 \cup \cdots \cup \Gamma_{\ell-1}$ for $G_M(B_{\leq \ell-1},I)$.  If no layer is collapsible, go to the build phase; otherwise, let $L_t$ be the collapsible layer with the smallest index.\label{step:collapse-1}

	\item Remove all the layers above $L_t$ from the stack.  Set ${\cal I} := {\cal I}_{\leq t-1}$. \label{step:collapse-2}

	\item Recall that $\src_{\Gamma_t} \subseteq B_t$ by Definition~\ref{def:canon}.  Let ${\cal B}^{\Gamma}$ denote the set of edges in ${\cal B}_t$ that are incident to players in $\src_{\Gamma_t}$. We use ${\cal I}_t$ and $\Gamma_t$ to release the edges in ${\cal B}^{\Gamma}$.\label{step:collapse-3}
	\begin{enumerate}[{3.}1]
		\item Update $M$ by flipping the paths in $\Gamma_t$, i.e., set $M := M \oplus \Gamma_{t}$.
	
		\item Add to ${\cal E}$ the edges in ${\cal I}_t$, i.e., set ${\cal E} := {\cal E} \cup {\cal I}_t$.

		\item Each player in $\src_{\Gamma_t}$ is now covered by either a fat edge or a thin edge from ${\cal I}_t$. If $t = 1$, then $p_0$ is already covered, and the algorithm terminates.  Assume that $t \geq 2$.  Edges in ${\cal B}^{\Gamma}$ can be safely released from $\cal E$. Set ${\cal E} := {\cal E} \setminus {\cal B}^{\Gamma}$ and ${\cal B}_t := {\cal B}_t \setminus {\cal B}^{\Gamma}$. 
	\end{enumerate}

	\item If $t \geq 2$, we need to update ${\cal A}_t$ because some edges in ${\cal A}_t$ may become unblocked due to the release of blocking edges. For every edge $(p, D)$ in ${\cal A}_t$ that becomes unblocked, \label{step:collapse-4}
	\begin{enumerate}[{4.}1]
		\item Remove $(p,D)$ from ${\cal A}_t$,
		
		\item if $f_M(B_{\leq t-1}, I\cup \{p\}) = f_M(B_{\leq t-1}, I) + 1$, then extract a $\lambda$-minimal unblocked addable edge $(p,D')$ from $(p, D)$, and add $(p,D')$ to $\cal I$.
	\end{enumerate}  

	\item Update $\ell := t$.  Go to step~\ref{step:collapse-1}.
\end{enumerate}
\end{quote}
\section{Analysis of the approximation algorithm}
\label{sec:analysis}

\subsection{Some invariants}
Table~\ref{tb:invar} lists a few invariants, where $\ell$ is the number of layers in the stack.  Lemma~\ref{lem:invar} gives a few invariants maintained by the algorithm.  Lemma~\ref{lem:non-collapse} states that when no layer is collapsible, ${\cal I}$ cannot have too many unblocked addable edges, nor can a layer lose too many addable edges.  Lemmas~\ref{lem:invar} and~\ref{lem:non-collapse} below have analogous versions in~\cite{AKS17,CM18a} and can be proved similarly.  We put their proofs to Appendix~\ref{apd:proofs}.

\begin{lemma}
	\label{lem:invar}
	{\sc Build} and {\sc Collapse} maintain the invariants in Table~\ref{tb:invar}.
\end{lemma}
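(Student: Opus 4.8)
The plan is to argue by induction on the sequence of elementary operations performed by {\sc Build} and {\sc Collapse} --- each insertion of an edge into $\cal I$, ${\cal A}_{\ell+1}$ or ${\cal B}_{\ell+1}$, each flip $M := M \oplus \Gamma_t$, each deletion of an edge from $\cal E$, and each push or pop of a layer. The base case is the state immediately after initialization, namely $(M,\emptyset,\emptyset,(L_1))$ with $L_1 = (\emptyset,\{(p_0,\emptyset)\},0,0)$; there every invariant in Table~\ref{tb:invar} holds vacuously or is immediate from $M$ being a maximum matching of $G$ and ${\cal E}={\cal I}=\emptyset$. For the inductive step I would walk line by line first through {\sc Build} and then through {\sc Collapse}, assuming the invariants hold before an operation and checking they still hold afterwards. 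This mirrors the case analysis used for the analogous lemmas in~\cite{AKS17,CM18a}; the two places where our modifications matter are the activeness rule of Definition~\ref{def:active-resrc} (which controls disjointness of the ${\cal B}_i$'s) and the fact that node-disjointness of alternating paths is now required only layer by layer.

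For {\sc Build}, Steps~\ref{step:build-1} and~3 carry the weight. When a $\lambda$-minimal unblocked addable edge $(p,D)$ is moved to $\cal I$, $\lambda$-minimality together with $v[D\setminus R({\cal E})]\geq\lambda$ forces $D\cap R({\cal E})=\emptyset$, so $\cal I$ stays compatible with $\cal E$ (Fact~2); and since $p$ is addable, $f_M(B_{\leq\ell},A_{\ell+1}\cup I\cup\{p\}) = f_M(B_{\leq\ell},A_{\ell+1}\cup I)+1$, so the invariants tying $\cal I$ and the canonical decomposition to $B_{\leq\ell}$ are preserved incrementally. When a $(1+\gamma)\lambda$-minimal blocked addable edge $(p,D)$ and its blockers are added in Step~3, addability again supplies the extra node-disjoint path into $A_{\ell+1}\cup I\cup\{p\}$ that the layer-level path invariant records; and because $D$ consists of active thin resources, $D$ is disjoint from $R({\cal A}_{\leq\ell}\cup{\cal B}_{\leq\ell})$, so every blocker of $(p,D)$ lies in ${\cal E}\setminus{\cal B}_{\leq\ell}$ --- exactly the disjointness asserted in Fact~3 --- while activeness also gives $R({\cal A}_{\ell+1})\cap R({\cal A}_{\leq\ell})=\emptyset$ and mutual compatibility of ${\cal A}_{\ell+1}$ (Fact~1). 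Setting $d_{\ell+1}, z_{\ell+1}$ in Step~4 is bookkeeping that matches their defining equalities.

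For {\sc Collapse} the delicate points are the flip in Step~3.1 and the deletion in Step~3.3. The key observation is that $M\oplus\Gamma_t$ is again a maximum matching of $G$ (Section~\ref{sec:disjoint-path}) and, crucially, that $G_{M\oplus\Gamma_t}$ differs from $G_M$ only in the orientation of arcs incident to a vertex on some path of $\Gamma_t$. Since the canonical solution splits into vertex-disjoint pieces $\Gamma_1,\dots,\Gamma_{\ell-1}$, the pieces $\Gamma_1,\dots,\Gamma_{t-1}$ avoid $\Gamma_t$ entirely and hence remain valid sets of node-disjoint $B_i$-to-$I_i$ paths in $G_{M\oplus\Gamma_t}$; likewise every $f_M$-count occurring in an invariant for a layer of index $<t$ is unaffected by the flip. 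Therefore popping the layers above $L_t$, setting ${\cal I}:={\cal I}_{\leq t-1}$, and re-indexing $\ell:=t$ leave precisely the invariants for a $t$-layer stack together with ${\cal I}_{\leq t-1}$, which the equalities $f_M(B_{\leq i},I_{\leq i})=f_M(B_{\leq i},I)=|I_{\leq i}|$ of Definition~\ref{def:canon} certify. Adding ${\cal I}_t$ to $\cal E$ keeps $\cal E$ $\lambda$-minimal and mutually compatible (Fact~2); the only transient violation --- players of $\src_{\Gamma_t}$ covered both by edges of $\Gamma_t\cup{\cal I}_t$ and by blockers in ${\cal B}^{\Gamma}$ --- is undone in Step~3.3 by deleting ${\cal B}^{\Gamma}$ from $\cal E$ and from ${\cal B}_t$, restoring ``each player covered at most once''. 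Step~4 moves an edge of ${\cal A}_t$ into $\cal I$, after shrinking it to a $\lambda$-minimal unblocked edge, exactly when $f_M(B_{\leq t-1},I\cup\{p\})=f_M(B_{\leq t-1},I)+1$, so the $\cal I$-invariants survive, and $d_t, z_t$ stay consistent with their definitions.

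I expect the main obstacle to be controlling the path counts $f_M(\cdot,\cdot)$ across the $\oplus$-operation in {\sc Collapse}: one must confirm that flipping $\Gamma_t$ neither creates nor destroys any reachability relation recorded by the layer-level node-disjoint path invariants for the surviving layers, and that the shrunken ${\cal B}_t$ still carries the reachability the invariants assert. Both reduce to the ``arcs change only near $\Gamma_t$'' fact above, combined with a Menger-type augmenting-path surgery that reroutes the $B_{\leq t-1}$-to-$I$ paths around the at most $|\src_{\Gamma_t}|$ vertices of $B_t$ that get freed. Everything else is routine verification following the template of~\cite{AKS17,CM18a}.
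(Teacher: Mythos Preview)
Your overall inductive scaffolding is sound and matches the paper's two-part split into a {\sc Build} lemma and a {\sc Collapse} lemma. The {\sc Build} side is essentially right, though note that the lemma concerns Invariants~1--4 of Table~\ref{tb:invar}, not Facts~1--3 of Table~\ref{tb:fact}; in particular, to get Invariant~1 from addability you must pass through Lemma~\ref{lem:submod}, since addability gives $f_M(B_{\leq\ell},A_{\ell+1}\cup I\cup\{p\})=f_M(B_{\leq\ell},A_{\ell+1}\cup I)+1$ whereas Invariant~1 concerns $f_M(B_{\leq\ell},I)$ with no $A_{\ell+1}$.

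The real gap is your handling of Invariant~2 under {\sc Collapse}. The assertion that ``every $f_M$-count occurring in an invariant for a layer of index $<t$ is unaffected by the flip'' does not follow from ``arcs change only near $\Gamma_t$'': an optimal solution for $G_M(B_{\leq i},A_{i+1}\cup I)$ may well share interior vertices with $\Gamma_t$, and after the flip those paths are destroyed. Moreover, {\sc Collapse} shrinks $\cal I$ to ${\cal I}_{\leq t-1}$, so you also need the certifying paths to have sinks in $A_{i+1}\cup I_{\leq i}$, not merely in $A_{i+1}\cup I$. Your proposed ``Menger-type surgery rerouting around the freed vertices of $B_t$'' misidentifies the obstruction: vertices of $B_t$ have in-degree~$0$ in $G_M$ and are never interior to any path from $B_{\leq t-1}$; the collisions to worry about are with interior players and fat resources along $\Gamma_t$. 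The paper resolves both issues simultaneously via an auxiliary lemma (its Lemma~\ref{lem:common-sol}): for each $i\leq t-1$, one can build an \emph{optimal} solution for $G_M(B_{\leq i},A_{i+1}\cup I)$ that is node-disjoint from $\Gamma_{i+1}\cup\cdots\cup\Gamma_{\ell-1}$ and whose sinks lie in $A_{i+1}\cup I_{\leq i}$. The construction starts from $\Gamma_1\cup\cdots\cup\Gamma_i$ and augments; the point is that any augmenting path touching $\Gamma_{>i}$ would yield $f_M(B_{\leq i},I)>|I_{\leq i}|$, contradicting the defining equality of the canonical decomposition. This lemma is the missing ingredient, and without it your plan does not close. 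The treatment of Step~4 also needs Lemma~\ref{lem:submod}: removing an unblocked $(p,D)$ from ${\cal A}_t$ can drop $f_M(B_{\leq t-1},A_t\cup I)$, and one must argue that precisely in that case the test in Step~4.2 fires and restores the count.
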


\begin{lemma}
	\label{lem:non-collapse}
	Let $(L_1,\ldots,L_\ell)$ be a stack of layers.  If no layer is collapsible, then 
	\begin{enumerate}[(i)]
	\item $|I| \leq \mu|B_{\leq \ell-1}|$, and 
	\item for all $i \in [1,\ell]$, $|A_i| \geq z_i - \mu |B_{\leq i-1}|$.
	\end{enumerate}
\end{lemma}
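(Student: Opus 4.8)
The plan is to deduce both inequalities from a single fact: \emph{if no layer is collapsible, then $f_M(B_{\leq i-1}, I) \leq \mu|B_{\leq i-1}|$ for every $i \in [1, \ell]$.} Fix any canonical decomposition $({\cal I}_1, \ldots, {\cal I}_{\ell-1})$ of $\cal I$, which exists by Lemma~\ref{lem:canon}. The remark following Definition~\ref{def:canon} gives $|I_j| = f_M(B_{\leq j}, I) - f_M(B_{\leq j-1}, I)$ independently of the decomposition, and ``no layer is collapsible'' says precisely that $|I_j| \leq \mu|B_j|$ for all $j \in [1, \ell-1]$. The edge sets ${\cal B}_1, \ldots, {\cal B}_{\ell-1}$ are pairwise disjoint (Fact~3, together with the fact that the single edge $(p_0,\emptyset)$ of ${\cal B}_1$ covers a player covered by no other edge), and since each player is covered by at most one edge of $M \cup {\cal E}$, so are the player sets $B_1, \ldots, B_{\ell-1}$; hence $|B_{\leq k}| = \sum_{j=1}^{k} |B_j|$ for every $k$. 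Telescoping, with $B_{\leq 0} = \emptyset$ and $f_M(\emptyset, I) = 0$, yields $f_M(B_{\leq i-1}, I) = \sum_{j=1}^{i-1} |I_j| \leq \mu \sum_{j=1}^{i-1} |B_j| = \mu|B_{\leq i-1}|$. Part~(i) follows at once: by invariant~1 of Table~\ref{tb:invar} the collection $({\cal I}_1, \ldots, {\cal I}_{\ell-1})$ partitions $\cal I$, so $|I| = \sum_{j=1}^{\ell-1} |I_j| = f_M(B_{\leq \ell-1}, I) \leq \mu|B_{\leq \ell-1}|$.

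For part~(ii), I would fix $i$ and establish the invariant $z_i \leq |A_i| + f_M(B_{\leq i-1}, I)$; combined with the displayed fact this gives $z_i - |A_i| \leq f_M(B_{\leq i-1}, I) \leq \mu|B_{\leq i-1}|$, as required. To prove that invariant I would instead track the auxiliary quantity $f_M(B_{\leq i-1}, A_i \cup I)$, which is always at most $|A_i| + f_M(B_{\leq i-1}, I)$ (adjoining $|A_i|$ sinks raises the maximum number of node-disjoint paths by at most $|A_i|$) and which I claim stays at least $z_i$ throughout the life of the current layer $L_i$. At the {\sc Build} call that creates this $L_i$, each of the $z_i$ edges put into ${\cal A}_i$ passes the addability test, which increments $f_M(B_{\leq i-1}, A_i \cup I)$ by exactly one, so at that moment it equals $d_i$ and is $\geq z_i$. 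Since $L_i$ has not been destroyed since then, the only subsequent events that can change $A_i$, $I$, or the matching on the portion of $G_M$ reachable from $B_{\leq i-1}$ are: additions to $\cal I$ in later {\sc Build} calls, which cannot lower $f_M(B_{\leq i-1}, A_i \cup I)$ by monotonicity of $f_M$ in its sink set; a step-4 removal from ${\cal A}_i$ during a {\sc Collapse} at layer $i$ that moves a player from $A_i$ into $\cal I$, which leaves $A_i \cup I$ unchanged; a step-4 removal that discards a player $p$ whose marginal over the sink set $I$ is zero, hence also zero over the superset $(A_i \setminus \{p\}) \cup I$ since that marginal is nonincreasing in the sink set, so $f_M(B_{\leq i-1}, A_i \cup I)$ does not change; the path flips $M := M \oplus \Gamma_t$ for layers $t > i$; and the truncations ${\cal I} := {\cal I}_{\leq t-1}$ for $t > i$, for which the canonical property $f_M(B_{\leq i-1}, I_{\leq i-1}) = f_M(B_{\leq i-1}, I)$ forbids a drop.

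I expect the main obstacle to be the last two kinds of events: flipping the node-disjoint paths $\Gamma_t$ of a higher layer actually alters $G_M$, and truncating $\cal I$ deletes sinks, so one has to argue that neither disturbs the node-disjoint reachability from $B_{\leq i-1}$ on which the bound $f_M(B_{\leq i-1}, A_i \cup I) \geq z_i$ rests. This is exactly what the invariants of Table~\ref{tb:invar} are built to encode -- their maintenance is Lemma~\ref{lem:invar} -- so in the actual write-up I would discharge these two cases by citing those invariants rather than re-deriving them, after which the telescoping sum of the first paragraph and the subadditivity bookkeeping for $f_M$ in the second are routine. This mirrors the arguments for the analogous statements in~\cite{AKS17, CM18a}.
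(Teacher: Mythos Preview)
Your proposal is correct and follows essentially the same route as the paper. The paper's proof also rests on the chain
\[
z_i \leq d_i \leq f_M(B_{\leq i-1}, A_i \cup I) \leq |A_i| + f_M(B_{\leq i-1}, I),
\]
obtained directly from Invariants~2 and~4, together with the canonical-decomposition identity $|I_{\leq i-1}| = f_M(B_{\leq i-1}, I)$ and a pigeonhole/telescoping argument over the $|I_j| \leq \mu|B_j|$. Your long case analysis tracking $f_M(B_{\leq i-1}, A_i \cup I)$ through {\sc Build} and {\sc Collapse} events is unnecessary: as you yourself note at the end, this is precisely the content of Invariant~2 (maintained by Lemma~\ref{lem:invar}), and the paper simply cites it rather than re-deriving it.
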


\begin{table}
	\begin{tabu} to \textwidth {X[1cb]X[5lp]}		
		\toprule
		Invariant~1 & $f_M(B_{\leq \ell-1},I) = |I|$. \\  
		\midrule
		Invariant~2 & For every $i \in [1,\ell-1]$, $f_M(B_{\leq i}, A_{i+1} \cup I) \geq d_{i+1}$. \\
		\midrule
		Invariant~3 & For every $i \in [1, \ell]$, $|A_i| \leq z_i$.\\
		\midrule
		Invariant~4 & For every $i \in [1, \ell]$, $d_i \geq z_i$. \\
		\bottomrule
	\end{tabu}
	\caption{
			Invariants maintained by the algorithm.
		}
	\label{tb:invar}
\end{table} 

\subsection{Bounding the number of blocking edges}
Lemma~\ref{lem:limit-overlap} --~\ref{lem:blocking-num-b} are consequences of our greedy and limited blocking strategies.  Basically, they bound the number of blocking edges in a layer in terms of the number of addable edges.

\begin{lemma}
\label{lem:limit-overlap}
Let $L_{i} = ({\cal A}_i, {\cal B}_i,  d_i, z_i)$ be an arbitrary layer in the stack.  For each edge $b \in {\cal B}_i$, there is an edge $a\in {\cal A}_i$ such that $v[R_b \cap R({\cal A}_i \setminus \{a\})] \leq \beta\lambda$.
\end{lemma}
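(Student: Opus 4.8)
The plan is to track, for each blocking edge $b \in {\cal B}_i$, the moment during the execution of {\sc Build} at which $b$ was inserted into ${\cal B}_{i}$ (equivalently ${\cal B}_{\ell+1}$ with $\ell+1 = i$), and to identify the addable edge $a \in {\cal A}_i$ whose insertion into ${\cal A}_i$ triggered that insertion of $b$. Recall from step~3 of {\sc Build} that ${\cal B}_i$ is grown only in substep~3.2, immediately after some addable edge $(p,D)$ is added to ${\cal A}_i$ in substep~3.1; at that point all edges of ${\cal E}$ that block $(p,D)$ — i.e.\ those $e \in {\cal E}$ with $R_e \cap D \neq \emptyset$ — are added to ${\cal B}_i$. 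So every $b \in {\cal B}_i$ has a well-defined ``introducing'' edge $a = (p,D) \in {\cal A}_i$: the first addable edge added to ${\cal A}_i$ whose resource set $D$ intersects $R_b$. I claim this $a$ witnesses the lemma, i.e.\ $v[R_b \cap R({\cal A}_i \setminus \{a\})] \leq \beta\lambda$.

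To prove the claim, consider any edge $a' = (p', D') \in {\cal A}_i$ with $a' \neq a$. If $a'$ was added to ${\cal A}_i$ \emph{before} $a$, then at the time $a'$ was added, $D'$ could not have intersected $R_b$ — otherwise $b$ would have been inserted into ${\cal B}_i$ already at that earlier step, contradicting that $a$ is the \emph{first} edge whose resource set meets $R_b$ (note $b \in {\cal E}$ throughout, since {\sc Build} never removes edges from ${\cal E}$). Hence $R_{a'} \cap R_b = \emptyset$ for all such earlier $a'$, and these contribute nothing to $v[R_b \cap R({\cal A}_i \setminus \{a\})]$. The remaining contribution comes from edges $a'$ added \emph{after} $a$. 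Here I invoke condition~(iii) of Definition~\ref{def:active-resrc} together with Definition~\ref{def:addable-edge}: addable edges use only \emph{active} thin resources. At the instant just after $a$ and its blocking edges (in particular $b$) were added, we already had $b \in {\cal B}_i$ and $R_b \cap R({\cal A}_i) \supseteq R_b \cap D \neq \emptyset$. If at any later point during the construction of layer $i$ we had $v[R_b \cap R({\cal A}_i)] > \beta\lambda$, then by condition~(iii) every resource in $R_b$ would become inactive, and no addable edge added thereafter could use any resource of $R_b$. Therefore the algorithm never lets the total overlap $v[R_b \cap R({\cal A}_i)]$ exceed $\beta\lambda$ by adding new edges once $b$ is present: formally, each time an edge $a'$ added after $a$ contributes new resources to $R_b \cap R({\cal A}_i)$, the post-addition overlap value is still examined in the next check of condition~(iii), and once it is $> \beta\lambda$ the relevant resources are frozen. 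Combining the two cases, $R_b \cap R({\cal A}_i \setminus \{a\}) = R_b \cap \bigl(R(\{a' \in {\cal A}_i : a' \text{ added after } a\})\bigr)$, and this set has value at most $\beta\lambda$.

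The main obstacle I anticipate is a slight timing subtlety in the last step: condition~(iii) makes all of $R_b$ inactive only once the overlap \emph{strictly} exceeds $\beta\lambda$, so a single edge $a'$ could in principle push the overlap from at most $\beta\lambda$ up to some value larger than $\beta\lambda$ in one shot, and then freeze. I would handle this by being careful about which edge is designated as the witness $a$: rather than the \emph{first} edge meeting $R_b$, I take $a$ to be the \emph{last} addable edge whose addition causes $v[R_b \cap R({\cal A}_i)]$ to be at most $\beta\lambda$ — equivalently, after removing $a$, the overlap of $b$ with the rest is exactly what it was just before $a$ was added, which is $\le \beta\lambda$ by the definition of that moment; and after $a$, condition~(iii) prevents any further growth. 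One then checks that such an $a$ exists and indeed lies in ${\cal A}_i$: since $b$ is in ${\cal B}_i$, at least one edge of ${\cal A}_i$ meets $R_b$, and the overlap sequence starts at $0$, so there is a well-defined ``last time the overlap was $\le\beta\lambda$.'' With that choice of $a$, the bound $v[R_b \cap R({\cal A}_i \setminus \{a\})] \leq \beta\lambda$ follows immediately, and the rest of the facts (Fact~1, that edges in ${\cal A}_i$ are mutually compatible) ensure the decomposition of $R({\cal A}_i \setminus \{a\})$ into per-edge pieces is disjoint, so no double counting occurs.
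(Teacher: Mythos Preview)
Your proposal is essentially correct and converges on the same argument as the paper, but via a detour. The paper chooses the witness $a$ more directly: take $a$ to be the \emph{last} edge in ${\cal A}_i$ (in order of addition) that is blocked by $b$, i.e.\ the last one with $R_a \cap R_b \neq \emptyset$. With this choice, edges added after $a$ share nothing with $b$ by definition of $a$; and if the edges added \emph{before} $a$ already shared more than $\beta\lambda$ with $b$, then all of $R_b$ would have been inactive by condition~(iii) at the moment $a$ was constructed, contradicting $R_a \cap R_b \neq \emptyset$. This single choice handles everything at once and sidesteps the timing subtlety you flag.

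Your ``fix'' paragraph has a small inconsistency worth cleaning up: ``the last addable edge whose addition causes the overlap to be at most $\beta\lambda$'' is \emph{not} the same as ``the edge after which condition~(iii) prevents further growth'' --- the former requires the post-addition overlap to be $\le \beta\lambda$, the latter requires it to be $>\beta\lambda$. What you actually need (and what your ``equivalently'' clause and the phrase ``after $a$, condition~(iii) prevents any further growth'' really describe) is the \emph{first} edge whose addition pushes the overlap strictly above $\beta\lambda$. That edge coincides with the paper's choice whenever the total overlap ever exceeds $\beta\lambda$; when it never does, any $a \in {\cal A}_i$ trivially works.
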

\begin{proof}
	Let $b$ be an edge in ${\cal B}_i$.  Sort the edges in ${\cal A}_i$ in chronological order of their additions into ${\cal A}_i$.  Let $a$ be the last edge in ${\cal A}_i$ that is blocked by $b$.  By our choice of $a$, the edges in ${\cal A}_i$ that are added after $a$ cannot be blocked by $b$, so they do not share any common resource with $b$.  Among the edges added before $a$, let $S$ be the subset of their resources that are also covered by $b$.  We claim that $v[S] \leq \beta\lambda$.  If not, all resources in $R_b$ would be inactive before the addition of $a$ by definition of inactive resources.  So no resource in $R_b$ could be included in $a$.  But $R_a \cap R_b$ must be non-empty as $b$ blocks $a$, a contradiction.
\end{proof}

\begin{lemma}
\label{lem:blocking-num-a}
	Let $L_{i} = ({\cal A}_i, {\cal B}_i,  d_i, z_i)$ be an arbitrary layer in the stack. We have $|A_i| < (1 + \frac{\beta}{\gamma})|B_i|$.
\end{lemma}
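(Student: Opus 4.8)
The plan is to combine Lemma~\ref{lem:limit-overlap} with the observation that every addable edge of a layer is heavily blocked by the edges of that same layer, and then finish with a short charging argument. The naive route --- double-counting the total overlap $v[R({\cal A}_i)\cap R({\cal B}_i)]$ against a per-addable-edge lower bound on one side and the per-blocking-edge bound of Lemma~\ref{lem:limit-overlap} on the other --- only gives a bound like $|A_i| < \frac{2+\beta}{\gamma}|B_i|$, because a blocking edge may overlap its ``special'' addable edge by nearly $2\lambda$; the charging argument is what replaces the $2$ by a $1$. Since distinct addable edges of a layer cover distinct players, and (by Fact~3 together with the mutual compatibility of ${\cal E}$) distinct edges of ${\cal B}_i$ cover distinct players, we have $|A_i| = |{\cal A}_i|$ and $|B_i| = |{\cal B}_i|$; moreover $|{\cal B}_i|\geq 1$ for every layer in the stack, and the case ${\cal A}_i=\emptyset$ (where $|A_i|=0$) is then immediate. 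So it suffices to show $|{\cal A}_i| < (1+\frac{\beta}{\gamma})|{\cal B}_i|$, and we may assume ${\cal A}_i\neq\emptyset$.

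The first thing I would establish is that $v[R_a\cap R({\cal B}_i)] > \gamma\lambda$ for every $a = (p,D)\in{\cal A}_i$. The edge $a$ is blocked: it was blocked when it was added in Step~3.1 of {\sc Build}, and {\sc Collapse} deletes from ${\cal A}_i$ any edge that becomes unblocked (Step~4). Hence $v[D\setminus R({\cal E})] < \lambda$; since $a$ is $(1+\gamma)\lambda$-minimal we have $v[D]\geq(1+\gamma)\lambda$, so $v[D\cap R({\cal E})] > \gamma\lambda$. Moreover every edge of ${\cal E}$ that shares a resource with $a$ belongs to ${\cal B}_i$: it was placed there in Step~3.2 of {\sc Build} when $a$ was added (an edge entering ${\cal E}$ afterwards comes from some ${\cal I}_t$, and by Fact~2 those are disjoint from $R({\cal A}_i)$), and {\sc Collapse} removes an edge from ${\cal B}_i$ only together with removing it from ${\cal E}$. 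Thus $v[R_a\cap R({\cal B}_i)] = v[D\cap R({\cal E})] > \gamma\lambda$. I expect this bookkeeping across {\sc Build} and {\sc Collapse} to be the most delicate part of the proof; the remainder is mechanical.

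For the charging step, I would use Lemma~\ref{lem:limit-overlap} to fix, for each $b\in{\cal B}_i$, an edge $a_b\in{\cal A}_i$ with $v[R_b\cap R({\cal A}_i\setminus\{a_b\})]\leq\beta\lambda$, and partition ${\cal A}_i$ into ${\cal A}' = \{a_b : b\in{\cal B}_i\}$ and ${\cal A}'' = {\cal A}_i\setminus{\cal A}'$. Since $b\mapsto a_b$ surjects ${\cal B}_i$ onto ${\cal A}'$, we get $|{\cal A}'|\leq|{\cal B}_i|$. For ${\cal A}''$: every $a\in{\cal A}''$ satisfies $a\neq a_b$ for all $b$, so $R_a\subseteq R({\cal A}_i\setminus\{a_b\})$; using that the resource sets of the edges in ${\cal A}_i$ and in ${\cal B}_i$ are pairwise disjoint (Fact~1, Fact~3),
\[
  \sum_{a\in{\cal A}''} v[R_a\cap R({\cal B}_i)] \;=\; \sum_{b\in{\cal B}_i}\,\sum_{a\in{\cal A}''} v[R_a\cap R_b] \;\leq\; \sum_{b\in{\cal B}_i} v[R_b\cap R({\cal A}_i\setminus\{a_b\})] \;\leq\; \beta\lambda\,|{\cal B}_i|.
\]
Combined with the lower bound $v[R_a\cap R({\cal B}_i)] > \gamma\lambda$ from the previous paragraph, applied to each $a\in{\cal A}''$, this gives $|{\cal A}''| < \frac{\beta}{\gamma}|{\cal B}_i|$ when ${\cal A}''\neq\emptyset$, and the same inequality holds trivially when ${\cal A}'' = \emptyset$ because $|{\cal B}_i|\geq 1$ and $\beta,\gamma > 0$. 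Hence $|{\cal A}_i| = |{\cal A}'| + |{\cal A}''| < |{\cal B}_i| + \frac{\beta}{\gamma}|{\cal B}_i| = \bigl(1+\frac{\beta}{\gamma}\bigr)|{\cal B}_i|$, as desired.

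In short: the heart of the argument is that each blocking edge is the ``special'' partner $a_b$ of at most all the addable edges it accounts for, so the addable edges that are \emph{not} any $a_b$ can only exist by consuming the limited ($\leq\beta\lambda$ per blocking edge) residual overlap budget, while each of them needs strictly more than $\gamma\lambda$. The one place where I would have to be careful is the structural claim in the second paragraph --- that the current blocking edges of an $a\in{\cal A}_i$ are precisely its overlapping edges in ${\cal E}$, and that they all sit in ${\cal B}_i$ despite the reshuffling of ${\cal E}$ and the ${\cal B}_j$'s by {\sc Collapse}.
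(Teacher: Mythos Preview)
Your proposal is correct and follows essentially the same approach as the paper: partition ${\cal A}_i$ into the set of ``special'' edges $\{a_b : b\in{\cal B}_i\}$ (bounded by $|{\cal B}_i|$) and the remainder, then double-count the overlap between the remainder and ${\cal B}_i$ using the per-blocking-edge bound from Lemma~\ref{lem:limit-overlap} on one side and the per-addable-edge lower bound $>\gamma\lambda$ on the other. Your write-up is in fact more careful than the paper's on two points the paper glosses over: the edge case ${\cal A}''=\emptyset$ (where the strict inequality needs $|{\cal B}_i|\geq 1$), and the justification that every $a\in{\cal A}_i$ currently has $v[R_a\cap R({\cal B}_i)]>\gamma\lambda$ despite the updates to ${\cal E}$ and ${\cal B}_i$ performed by {\sc Collapse}.
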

\begin{proof}
	By Lemma~\ref{lem:limit-overlap}, for each $b \in {\cal B}_i$, we can identify an edge $a_b \in {\cal A}_i$ so that $v[R_b \cap R({\cal A}_i \setminus \{a_b \})] \leq \beta\lambda$.
	Let ${\cal A}^0_i = \{a_b : b \in {\cal B}_i\}$ be the set of edges identified.  $|{\cal A}^0_i| \leq |B_i|$.  

	Let ${\cal A}^1_i = {\cal A}_i \setminus {\cal A}^0_i$.  For every $b \in {\cal B}_i$, $v[R_b\cap R({\cal A}_i^1)] \leq v[R_b \cap R({\cal A}_i \setminus \{a_b\})] \leq \beta\lambda$.  Taking sum over all edges in ${\cal B}_i$, we get $v[R({\cal B}_i)\cap R({\cal A}_i^1)] \leq \beta\lambda|B_i|$.  On the other hand, each edge $a$ in ${\cal A}_i^1$ is $(1 + \gamma)\lambda$-minimal and is blocked, so it must have more than $\gamma\lambda$ worth of its resources occupied by edges in ${\cal B}_i$, i.e., $v[R({\cal B}_i) \cap R_a] > \gamma\lambda$.  Summing over all the edges in ${\cal A}^1_i$ gives $v[R({\cal B}_i)\cap R({\cal A}_i^1)] > \gamma\lambda|{\cal A}^1_i|$.  Hence, $\beta\lambda|B_i| \geq v[R({\cal B}_i)\cap R({\cal A}_i^1)] > \gamma\lambda|{\cal A}^1_i|$.  

	Finally we get $|A_i| = |{\cal A}^0_i| + |{\cal A}^1_i| < |B_i| + \frac{\beta}{\gamma}|B_i| = (1 + \frac{\beta}{\gamma})|B_i|$.
\end{proof}

\begin{lemma}
\label{lem:blocking-num-b}
	Let $L_{i} = ({\cal A}_i, {\cal B}_i,  d_i, z_i)$ be an arbitrary layer in the stack. Let ${\cal B}'_i$ be the set of edges in ${\cal B}_i$ that share strictly more than $\beta\lambda$ resources with edges in ${\cal A}_i$.  More precisely, ${\cal B}'_i = \{e \in {\cal B}_{i} : v[R_e \cap R({\cal A}_{i})] > \beta\lambda\}$.
	We have $|{\cal B}'_i| < \frac{2+\gamma}{\beta}|A_{i}|$.
\end{lemma}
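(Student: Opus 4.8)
The plan is to bound the total worth $v[R({\cal A}_i)]$ of the resources used by the addable edges of layer $i$ in two ways: from below in terms of $|{\cal B}'_i|$, and from above in terms of $|A_i|$, and then to compare the two estimates. The lower bound will come from the fact that every edge of ${\cal B}'_i$ overlaps ${\cal A}_i$ in strictly more than $\beta\lambda$ worth of resources, and these overlaps occupy disjoint portions of $R({\cal A}_i)$; the upper bound will come from the fact that each edge of ${\cal A}_i$ is $(1+\gamma)\lambda$-minimal and hence not too large.

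For the lower bound, I would first observe that ${\cal B}'_i \subseteq {\cal B}_i \subseteq {\cal E}$ and that the edges of ${\cal E}$ are mutually compatible by the partial-allocation invariant, so the resource sets $\{R_b : b \in {\cal B}'_i\}$ are pairwise disjoint. Combined with the defining inequality $v[R_b \cap R({\cal A}_i)] > \beta\lambda$ for each $b \in {\cal B}'_i$, this gives
\[
\beta\lambda\,|{\cal B}'_i| \;<\; \sum_{b \in {\cal B}'_i} v[R_b \cap R({\cal A}_i)] \;=\; v\left[\left(\bigcup_{b \in {\cal B}'_i} R_b\right)\cap R({\cal A}_i)\right] \;\leq\; v[R({\cal A}_i)].
\]

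For the upper bound, Fact~1 says the edges of ${\cal A}_i$ are mutually compatible, so $v[R({\cal A}_i)] = \sum_{a \in {\cal A}_i} v[R_a]$; and since every $a \in {\cal A}_i$ is a $(1+\gamma)\lambda$-minimal thin edge, $v[R_a] < (1+\gamma)\lambda + \lambda = (2+\gamma)\lambda$ by the bound on $w$-minimal edges recalled in Section~\ref{sec:pre}. Because each edge of ${\cal A}_i$ covers a distinct player — adding an addable edge for an already-covered player would violate the addability condition — we have $|{\cal A}_i| = |A_i|$, hence $v[R({\cal A}_i)] < (2+\gamma)\lambda\,|A_i|$. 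Chaining this with the previous display yields $\beta\lambda\,|{\cal B}'_i| < (2+\gamma)\lambda\,|A_i|$, i.e. the claimed bound. I do not expect a genuine obstacle here; the only points that need care are the two pairwise-disjointness claims (for the ${\cal E}$-edges, and for the ${\cal A}_i$-edges), which are precisely what lets the per-edge inequalities sum up without double counting, together with the identification $|{\cal A}_i| = |A_i|$.
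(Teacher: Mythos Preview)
Your proof is correct and follows essentially the same route as the paper: bound $v[R({\cal B}'_i)\cap R({\cal A}_i)]$ from below by $\beta\lambda\,|{\cal B}'_i|$ using the defining inequality, and from above by $v[R({\cal A}_i)] < (2+\gamma)\lambda\,|A_i|$ using $(1+\gamma)\lambda$-minimality. You are simply more explicit than the paper about the pairwise disjointness of the $R_b$'s (from mutual compatibility in ${\cal E}$), the pairwise disjointness of the $R_a$'s (Fact~1), and the identification $|{\cal A}_i|=|A_i|$.
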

\begin{proof}
	Taking the sum of $v[R_e \cap R({\cal A}_{i})]$ over all edges $e$ in ${\cal B}'_i$, we obtain 
		$v[R({\cal B}'_i)\cap R({\cal A}_{i})] > \beta\lambda|{\cal B'}_i|$.
	On the other hand, 
		$v[R({\cal B}'_i)\cap R({\cal A}_{i})] \leq v[R({\cal A}_{i})] < (2+\gamma)\lambda|A_{i}|$.
	The last inequality is because that edges in ${\cal A}_i$ are $(1 + \gamma)\lambda$-minimal.
	Combining the above two inequalities, we obtain
		$\beta\lambda|{\cal B'}_i| < (2+\gamma)\lambda|A_{i}| \Rightarrow |{\cal B}'_i| < \frac{2+\gamma}{\beta}|A_{i}|$.
\end{proof}

\subsection{Geometric growth in the number of blocking edges}
Now we are ready to prove that the number of blocking edges grow geometrically from bottom to top.  Lemma~\ref{lem:key1} states that there are lots of addable edges in a layer \emph{immediately after} its construction.  Previous Lemma~\ref{lem:non-collapse}(ii) ensures that as long as there is no collapsible layer, every layer cannot lose too many addable edges.  Therefore, Lemma~\ref{lem:key1} and Lemma~\ref{lem:non-collapse}(ii) together imply that as long as no layer is collapsible, every layer has lots of addable edges.  Since the number of blocking edges in a layer is lower bounded in terms of the number of addable edges by Lemma~\ref{lem:blocking-num-a}, we can conclude that there must be lots of blocking edges in a layer when no layer in the stack is collapsible (Lemma~\ref{lem:key}).

\begin{lemma}
	\label{lem:key1}
	Let $(M, {\cal E},{\cal I}, (L_1, \ldots, L_{\ell+1}))$ be the state of the algorithm \textbf{immediately after} the construction of $L_{\ell+1}$. If no layer is collapsible, then $z_{\ell+ 1} = |{\cal A}_{\ell+1}| \geq 2\mu |{\cal B}_{\leq \ell}|$.
\end{lemma}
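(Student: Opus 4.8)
The plan is to run the configuration‑LP argument of \cite{AKS17,CM18a} at the instant {\sc Build} has just finished producing $L_{\ell+1}$, adapting it to the new limited‑blocking rule. I would begin with bookkeeping. Every edge added in step~3 of {\sc Build} covers a player that is addable at that instant, hence not yet in $A_{\ell+1}$, so $|\mathcal A_{\ell+1}| = |A_{\ell+1}| = z_{\ell+1}$; and since each such addition increases $f_M(B_{\le\ell},\cdot)$ by exactly one while $\mathcal I$ stays fixed during step~3, we get $d_{\ell+1} = f_M(B_{\le\ell}, A_{\ell+1}\cup I) = f_M(B_{\le\ell}, I) + |A_{\ell+1}|$. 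Invariant~1, applied to the current stack of $\ell+1$ layers, gives $f_M(B_{\le\ell}, I) = |I|$, so $d_{\ell+1} = |I| + z_{\ell+1}$. Because no layer is collapsible, Lemma~\ref{lem:non-collapse}(i) gives $|I|\le\mu|B_{\le\ell}|$, hence it suffices to prove $d_{\ell+1}\ge 3\mu|B_{\le\ell}|$; in fact I would establish $d_{\ell+1}\ge c\,|B_{\le\ell}|$ for a constant $c>0$ depending only on $\lambda,\beta,\gamma$, and then fix $\mu\le c/3$.

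For the lower bound on $d_{\ell+1}$, note that $T\mapsto f_M(B_{\le\ell},T)$ is the rank function of the gammoid of $G_M$ with source set $B_{\le\ell}$, so $d_{\ell+1}$ is the rank of the flat $Z$ of non‑addable players, and $A_{\ell+1}\cup I\subseteq Z$. Let $P^\ast$ be the set of players reachable from $B_{\le\ell}$ in $G_M$; the trivial paths show $\mathrm{rank}(P^\ast)=|B_{\le\ell}|$, so submodularity gives $|Q|\ge\mathrm{rank}(Q)\ge|B_{\le\ell}|-d_{\ell+1}$ for the set $Q:=P^\ast\setminus Z$ of reachable addable players. Two facts feed the LP step. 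First, all players of $B_{\le\ell}$ are unmatched by $M$ (blocking edges lie in $\mathcal E$, and $p_0$ is uncovered) and $M$ is a maximum matching of the fat graph $G$; consequently no unmatched player outside $B_{\le\ell}$ is reachable from $B_{\le\ell}$, every fat resource desired by a player of $Q$ is reachable, and the reachable fat resources correspond bijectively through $M$ to the reachable matched players. Second, since {\sc Build} stopped with no $\lambda$‑minimal unblocked addable edge and no $(1+\gamma)\lambda$‑minimal addable edge available, every $p\in Q$ desires strictly less than $(1+\gamma)\lambda$ worth of \emph{active} thin resources (Definitions~\ref{def:active-resrc} and~\ref{def:addable-edge}).

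Now I would invoke $\mathit{CLP}(1)$, which is feasible since $T^\ast=1$, and which may be assumed to satisfy $v_r\le1$ for every $r$: capping the values at $1$ preserves feasibility, because a configuration containing a resource of value exceeding $1$ can be replaced by that single resource without increasing any resource's usage. Double‑counting the fractional solution restricted to the players of $Q$: each such player gets configurations of total weighted value at least $1$, of which less than $(1+\gamma)\lambda$ sits on active thin resources, so the players of $Q$ jointly draw more than $(1-(1+\gamma)\lambda)\,|Q|$ worth of value from the reachable fat resources (by the first fact) together with the inactive thin resources, while the packing constraints bound this drawn value by $v[\text{reachable fat}]+v[\text{inactive thin}]$. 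I then bound the right‑hand side: the reachable fat resources number at most $|P^\ast|-|B_{\le\ell}|$ and each has value at most $1$, and the inactive thin resources lie, by Definition~\ref{def:active-resrc}, in $R(\mathcal A_{\le\ell+1})\cup R(\mathcal B_{\le\ell})\cup R(\mathcal I)$ together with the resources of the edges of $\mathcal B'_{\ell+1}$ from Lemma~\ref{lem:blocking-num-b}, controlled using: addable edges are $(1+\gamma)\lambda$‑minimal and blocking edges are $\lambda$‑minimal (values below $(2+\gamma)\lambda$ and $2\lambda$); $|A_i|<(1+\beta/\gamma)|B_i|$ (Lemma~\ref{lem:blocking-num-a}); $|\mathcal B'_{\ell+1}|<\tfrac{2+\gamma}{\beta}|A_{\ell+1}|=\tfrac{2+\gamma}{\beta}(d_{\ell+1}-|I|)$ (Lemma~\ref{lem:blocking-num-b} and the first paragraph); the disjointness statements in Facts~1 and~3; and Lemma~\ref{lem:non-collapse}(i) for $|I|$. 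Collecting terms produces a linear inequality between $|B_{\le\ell}|$ and $d_{\ell+1}$ whose coefficients are polynomials in $\lambda,\beta,\gamma,\mu$; with $\lambda=\tfrac{1}{4+\delta}$ and $\gamma$, then $\beta$, then $\mu$ chosen small in this order, it forces $d_{\ell+1}\ge c\,|B_{\le\ell}|$ for a constant $c>0$, whence $z_{\ell+1}=d_{\ell+1}-|I|\ge c|B_{\le\ell}|-\mu|B_{\le\ell}|\ge 2\mu|B_{\le\ell}|$.

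The delicate point is this last step. The confinement of the fat resources together with the $v_r\le1$ reduction are both essential: without them the reachable fat resources could soak up an unbounded amount of value and no contradiction would arise. Even with them, a crude bound on $v[R(\mathcal A_{\le\ell+1})]+v[R(\mathcal B_{\le\ell})]$ is of order $4\lambda|B_{\le\ell}|$, which for $\lambda$ near $\tfrac14$ is too weak against the required $(1-(1+\gamma)\lambda)|Q|\approx\tfrac{3+\delta}{4+\delta}|B_{\le\ell}|$; separating the two sides demands a layer‑aware amortization of the inactive thin resources and exploitation of limited blocking — so that the resources of a blocking edge turn inactive, and hence need be paid for only once, precisely when its overlap with $R(\mathcal A_{\ell+1})$ first exceeds $\beta\lambda$ — and it is exactly here that $\lambda$ must be pushed strictly below $\tfrac14$.
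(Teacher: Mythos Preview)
Your high-level plan (LP counting, Lemmas~\ref{lem:blocking-num-a}--\ref{lem:blocking-num-b} for the inactive thin mass, Lemma~\ref{lem:non-collapse}(i) for $|I|$) matches the paper, and the identity $d_{\ell+1}=|I|+z_{\ell+1}$ is correct. The genuine gap is in the graph step. You work in $G_M$ with reachability from all of $B_{\le\ell}$, restrict the LP sum to the \emph{addable} reachable players $Q=P^\ast\setminus Z$, and bound the fat side by $|P^\ast|-|B_{\le\ell}|$. But $P^\ast$ typically contains many non-addable players (at the very least $A_{\ell+1}\cup I\subseteq Z$, and the closure $Z$ can intersect $P^\ast$ far beyond that), so $|P^\ast|-|B_{\le\ell}|$ is not controlled by $|Q|$, by $|B_{\le\ell}|$, or by $d_{\ell+1}$. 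Your inequality becomes
\[
(1-(1+\gamma)\lambda)\,|Q|\;<\;(|P^\ast|-|B_{\le\ell}|)+v[\text{inactive thin}],
\]
and plugging in the lower bound $|Q|\ge|B_{\le\ell}|-d_{\ell+1}$ leaves the uncontrolled term $|P^\ast|$ on the right. No amount of tuning $\lambda,\beta,\gamma,\mu$ rescues this: a single addable sink in $A_{\ell+1}$ can sit at the end of a long alternating path that drags arbitrarily many matched (and non-addable) players into $P^\ast$, each bringing its own reachable fat resource.

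The paper closes exactly this hole by first taking an optimal solution $\Pi$ for $G_M(B_{\le\ell},A_{\ell+1}\cup I)$ and then measuring reachability in $G_{M\oplus\Pi}$ from the \emph{residual} sources $B_{\le\ell}\setminus\src_\Pi$. By optimality of $\Pi$ (Lemma~\ref{lem:max-paths}), nothing in $A_{\ell+1}\cup I$ is reachable any more, and Claim~\ref{cl:residual-player}(i) shows that \emph{every} player reached this way is addable. Now the fat-resource count and the player count live on the same set $P^+$, and one gets the clean surplus
\[
|P^+|-|R^+_f|\;\ge\;|B_{\le\ell}\setminus\src_\Pi|\;\ge\;|B_{\le\ell}|-|A_{\ell+1}|-|I|\;=\;|B_{\le\ell}|-d_{\ell+1},
\]
after which your intended bounds on the inactive thin mass (which you have right) finish the job. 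In short: replace ``reachable in $G_M$ from $B_{\le\ell}$'' by ``reachable in $G_{M\oplus\Pi}$ from $B_{\le\ell}\setminus\src_\Pi$'' so that $Q$ coincides with the whole reachable set; the rest of your outline then goes through essentially as in the paper.
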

\begin{proof}
	We give a proof by contradiction.  Suppose that $z_{\ell+ 1} = |{\cal A}_{\ell+1}| < 2\mu |{\cal B}_{\leq \ell}|$.  We will show that the dual of $\mathit{CLP}(1)$ is unbounded, which implies that $\mathit{CLP}(1)$ is infeasible, contradicting the assumption that the configuration LP has optimal value $T^* = 1$.

	Consider the moment immediately after we finish adding edges to ${\cal A}_{\ell + 1}$ during the construction of $L_{\ell+1}$.  At this moment, there is no $(1 + \gamma)\lambda$-minimal addable edge left.  The rest of the proof is with respect to this moment.  

	Let $\Pi$ be an optimal solution for $G_M(B_{\leq \ell}, A_{\ell+1} \cup I)$.  Note that $M \oplus \Pi$ is a maximum matching of $G$.  Recall that $G_{M \oplus \Pi}$ is a directed graph  obtained from $G$ by orienting the edges of $G$ according to whether they are in $M \oplus \Pi$ or not.  Let $P^+$ be the set of players that can be reached in $G_{M\oplus \Pi}$ from $B_{\leq \ell}\setminus \src_\Pi$.  Let $R^+_f$ be the set of fat resources that can be reached in $G_{M\oplus \Pi}$ from $B_{\leq \ell}\setminus \src_\Pi$.   Let $R^+_t$ be the set of inactive thin resources.

	\begin{claim}
		\label{cl:residual-player}
		(i)~Players in $P^+$ are still addable after we finish adding edges to ${\cal A}_{\ell + 1}$  (ii)~Players in $P^+$ have in-degree at most 1 in $G_{M \oplus \Pi}$. (iii)~Resources in $R^+_f$ have out-degree exactly $1$ in $G_{M \oplus \Pi}$.
	\end{claim}

	We define a dual solution $(\{y^*_p\}_{p\in P}, \{z^*_r\}_{r\in R})$ as follows.  
	\begin{alignat*}{4}
			y^*_p &= \left\{
				\def\arraystretch{1.5}
				\begin{array}{ll}
					1 - (1+\gamma)\lambda &\text{if $p \in P^+$,}\\
					0 &\text{otherwise.}
				\end{array}
		\right. & \qquad z^*_r &= \left\{
				\def\arraystretch{1.5}
				\begin{array}{ll}
					1 - (1 + \gamma)\lambda 
						&\text{if $r\in R^+_f$,}\\
					v_r 
						&\text{if $r \in R^+_t$,}\\
					0
						&\text{otherwise.}
				\end{array}
				\right.
	\end{alignat*}

	\begin{claim}
		\label{cl:approx-dual}
		$(\{y^*_p\}_{p\in P}, \{z^*_r\}_{r\in R})$ is a feasible solution, and it has a positive objective function value.
	\end{claim}

	Suppose that Claim~\ref{cl:approx-dual} holds. Then $(\{\alpha y^*_p\}_{p\in P}, \{\alpha z^*_r\}_{r\in R})$ is also a feasible solution for any $\alpha > 0$.  As $\alpha$ goes to infinity, the objective function value goes to infinity, yielding the contradiction that we look for.
\end{proof}

We defer the proof of Claim~\ref{cl:residual-player} to Appendix~\ref{apd:proofs}, and we give the proof of Claim~\ref{cl:approx-dual} below.

\begin{claimproof}[Feasibility] We need to show that $\forall\, p\in P, \,\, \forall \, C \in {\cal C}_p(1)$, $y^*_p \leq \sum_{r \in C}z^*_r$.	
	If $p \notin P^+$, then $y^*_{p} = 0$, and the inequality holds since $z^*_r$ is non-negative.  Assume that $p \in P^+$.  So $y^*_p = 1 - (1+\gamma)\lambda$.  Let $C$ be any configuration for $p$.  We show that $\sum_{r \in C}z^*_r \geq 1 - (1+\gamma)\lambda$.
	
	Case~1: $C$ contains a fat resource $r_f$.  Since $p$ desires $r_f$, $G_{M\oplus \Pi}$ has either an edge $(p,r_f)$ or an edge $(r_f,p)$. By the definition of $P^+$, there is a path $\pi$ in $G_{M \oplus \Pi}$ from $B_{\leq \ell}\setminus \src_\Pi$ to $p$.  If $G_{M\oplus \Pi}$ has an edge $(p,r_f)$, we can reach $r_f$ from $B_{\leq \ell}\setminus \src_\Pi$ by following $\pi$ and then $(p,r_f)$.  So $r_f \in R^+_f$.  If $G_{M\oplus \Pi}$ has an edge $(r_f,p)$, then $p$ is matched with $r_f$ by $M\oplus \Pi$.  Since players in $B_{\leq \ell}\setminus \src_\Pi$ are not matched by $M\oplus \Pi$, $p \notin (B_{\leq \ell}\setminus \src_\Pi)$.  By Claim~\ref{cl:residual-player}, in $G_{M \oplus \Pi}$, the in-degree of $p$ is at most one, so $(r_f,p)$ is the only edge entering $p$. To reach $p$, $\pi$ must reach $r_f$ first.  Hence, we can follow $\pi$ to reach $r_f$ from $B_{\leq \ell}\setminus \src_\Pi$, which implies $r_f \in R^+_f$.  In both cases, we have $\sum_{r \in C}z^*_r  \geq z^*_{r_f} = 1 - (1 + \gamma)\lambda$.

	Case~2: $C$ contains only thin resources.  By Claim~\ref{cl:residual-player}, $p$ is still addable after we finish adding edges to ${\cal A}_{\ell + 1}$.  However, when we finish adding edges to ${\cal A}_{\ell+1}$, there is no $(1+\gamma)\lambda$-minimal addable edge left.  It must be that $p$ does not have enough active resources to form an addable edge.  The active thin resources in $C$ must have a total value less than $(1+ \gamma)\lambda$.  Recall that $v[C] \geq 1$. At least $1 - (1 + \gamma)\lambda$ worth of thin resources in $C$ are inactive.  Since $z^*_r = v_r$ for inactive thin resources, $\sum_{r\in C}z^*_r \geq 1 - (1 + \gamma)\lambda$.
\end{claimproof}

\begin{claimproof}[Positive Objective Function Value]
	We need to show that $\sum_{p\in P}y^*_p - \sum_{r\in R}z^*_r > 0$.
	By our setting of $y^*_p$ and $z_r^*$, $\sum_{p \in P}y^*_p - \sum_{r\in R}z^*_r = \sum_{p \in P^+}y^*_p - \sum_{r\in R^+_f}z^*_r - \sum_{r\in R^+_t}z^*_r$.

	First consider $\sum_{p \in P^+}y^*_p - \sum_{r\in R^+_f}z^*_r$.  Since $y^*_p$ and $z^*_r$ have the same value $1 - (1+\gamma)\lambda$ for $p \in P^+$ and $r \in R^+_f$, it suffices to bound $|P^+| - |R^+_f|$ from below.  For each $r_f \in R^+_f$, by Claim~\ref{cl:residual-player}, $r_f$ has exactly one out-going edge to some player $p$ in $G_{M \oplus \Pi}$.   Since $r_f$ is reachable from $B_{\leq \ell}\setminus \src_\Pi$, so is $p$.  That is, $p \in P^+$.  We charge $r_f$ to $p$.  By Claim~\ref{cl:residual-player}, each player in $P^+$ has in-degree at most $1$ in $G_{M \oplus \Pi}$, so each of them is charged at most once.  Note that players in $B_{\leq \ell}\setminus \src_\Pi$ obviously belong to $P^+$ because they can be reached by themselves.  Moreover, they have zero in-degree in $G_{M\oplus \Pi}$ as they are not matched by $M\oplus \Pi$, so they are not charged.  Therefore, $|P^+| - |R^+_f| \geq |B_{\leq \ell}\setminus \src_\Pi| \geq |B_{\leq \ell}| - |A_{\ell + 1}| - |I|$. The last inequality is because $\Pi$ is an optimal solution for $G_M(B_{\leq \ell}, A_{\ell+1}\cup I)$.  In summary, 
	\begin{equation}
		\label{eq:lower}
		\sum_{p \in P^+}y^*_p - \sum_{r\in R^+_f}z^*_r \geq \left(1 - (1 + \gamma)\lambda\right)\left(|B_{\leq \ell}| - |A_{\ell + 1}| - |I|\right).
	\end{equation}

	Now consider $\sum_{r\in R^+_t}z^*_r$. By definition of inactive resources, $R^+_t$ can be divided into three parts: those covered by ${\cal A}_{\leq \ell} \cup {\cal B}_{\leq \ell}$, those covered by ${\cal A}_{\ell+1}\cup {\cal I}$, and those covered by ${\cal B}'_{\ell+1} = \{e \in {\cal B}_{\ell+1} : v[R_e \cap R({\cal A}_{\ell + 1})] > b\lambda\}$.  We handle these three parts separately.  

	Every edge in ${\cal A}_{\leq \ell}$ is blocked by some edges in ${\cal B}_{\leq \ell}$, so it has less than $\lambda$ worth of thin resources not used by ${\cal B}_{\leq \ell}$.  Every edge in ${\cal B}_{\leq \ell}$ is $\lambda$-minimal, so it covers less than $2\lambda$ worth of thin resources.  Thus, 
		$v[R({\cal A}_{\leq \ell} \cup {\cal B}_{\leq \ell})] < \lambda|{\cal A}_{\leq \ell}| + 2\lambda|{\cal B}_{\leq \ell}| 
				\leq \left(3 + \frac{\beta}{\gamma}\right)\lambda|B_{\leq \ell}|$.
	The last inequality is by Lemma~\ref{lem:blocking-num-a}.  Edges in ${\cal A}_{\ell+1}$ are $(1+\gamma)\lambda$-minimal, so each of them covers less than $(2 + \gamma)\lambda$ worth of resources. Edges in ${\cal I}$ are $\lambda$-minimal, so each of them covers less than $2\lambda$ worth of thin resources.  Therefore, 
	$v[R({\cal A}_{\ell+1}\cup {\cal I})] < (2 + \gamma)\lambda|A_{\ell + 1}| + 2\lambda|I|$.
	Edges in ${\cal B}'_{\ell+1}$ are $\lambda$-minimal, so 
		$v[R({\cal B}'_{\ell+1})] < 2\lambda|{\cal B}'_{\ell+1}| < \frac{4 + 2\gamma}{\beta}\lambda|A_{\ell +1}|$.
	The second inequality is by Lemma~\ref{lem:blocking-num-b}.
	Combining the above three parts gives
	\begin{equation}
		\label{eq:ulti-upper}
		\sum_{r\in R^+_t}z^*_r = \sum_{r\in R^+_t}v_r < \left(3 + \frac{\beta}{\gamma}\right)\lambda|B_{\leq \ell}| + 2\lambda|I| + \left(2 + \gamma + \frac{4+ 2\gamma}{\beta}\right)\lambda|A_{\ell + 1}|.
	\end{equation}

	Combining \eqref{eq:lower} and \eqref{eq:ulti-upper} gives that 
	\begin{align*}
		&\sum_{p \in P^+}y^*_p - \sum_{r\in R^+_f}z^*_r - \sum_{r\in R^+_t}z^*_r\\
	> &	\left(1 - \left(4 + \gamma + \frac{\beta}{\gamma}\right)\lambda\right)|B_{\leq \ell}| - \left(1 + \left(1 + \frac{4+2\gamma}{\beta}\right)\lambda\right)|A_{\ell + 1}| 
		- \left(1 + (1-\gamma)\lambda\right)|I|.
	\end{align*}

	By the contrapositive assumption at the beginning of the proof of Lemma~\ref{lem:key1}, $|A_{\ell + 1}| < 2\mu|B_{\leq\ell}|$.  Moreover, since no layer is collapsible, by Lemma~\ref{lem:non-collapse}(i), $|I| \leq \mu|B_{\leq \ell}|$.  Substituting these two inequalities into the above gives
	\begin{align*}
		&\sum_{p \in P^+}y^*_p - \sum_{r\in R^+_f}z^*_r - \sum_{r\in R^+_t}z^*_r\\
	> &	\left((1 - 3\mu) - \left(4 + \gamma + \frac{\beta}{\gamma} + 3\mu + \frac{(8 + 4\gamma)\mu}{\beta} -\gamma\mu\right)\lambda \right)|B_{\leq \ell}|.
	\end{align*}
	Let $\beta = \gamma^2$ and $\mu = \gamma^3$. We have 
	\[
		\sum_{p \in P^+}y^*_p - \sum_{r\in R^+_f}z^*_r - \sum_{r\in R^+_t}z^*_r
	> 	\left(\left(1 - 3\gamma^3\right) - \left(4 + 10\gamma + 4\gamma^2 +  3\gamma^3 - \gamma^4\right)\lambda \right)|B_{\leq \ell}|.
	\]
	Recall that $\lambda = \frac{1}{4+\delta}$ for some $\delta > 0$.  As $\gamma \to 0$, $\frac{1 - 3\gamma^3}{4 + 10\gamma + 4\gamma^2 +  3\gamma^3 - \gamma^4} \to \frac{1}{4}$.
	Hence, there is a sufficiently small $\gamma$ that makes $\frac{1 - 3\gamma^3}{4 + 10\gamma + 4\gamma^2 +  3\gamma^3 - \gamma^4} > \frac{1}{4+\delta} = \lambda$, thereby proving \[
	\sum_{p \in P}y^*_p -\sum_{r \in R}z^*_r = \sum_{p \in P^+}y^*_p - \sum_{r\in R^+_f}z^*_r - \sum_{r\in R^+_t}z^*_r > 0.\]  Moreover, one can verify that $\frac{1}{\gamma} = O(\frac{1}{\delta})$.
\end{claimproof}

\begin{lemma}
	\label{lem:key}
	Let $(M, {\cal E},{\cal I}, (L_1, \ldots, L_\ell))$ be a state of the algorithm. If no layer  is collapsible, then for $i\in [1, \ell-1]$, $|B_{i+1}| \geq \frac{\gamma^3}{1 + \gamma}|B_{\leq i}|$.
\end{lemma}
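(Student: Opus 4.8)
The plan is to chain together three earlier results. Fix $i \in [1,\ell-1]$. A convenient first observation is that each blocking edge covers exactly one player and, by Fact~3 together with property~(iii) of a partial allocation (each player is covered by at most one edge of $M\cup{\cal E}$), distinct blocking edges cover distinct players; hence $|{\cal B}_{\leq i}| = |B_{\leq i}|$, and I will use the two interchangeably. The target inequality then reduces to the single estimate $|A_{i+1}| \geq \mu|B_{\leq i}|$ on the number of addable edges currently sitting in layer $L_{i+1}$: indeed, Lemma~\ref{lem:blocking-num-a} applied to layer $i+1$ gives $|B_{i+1}| > \frac{1}{1+\beta/\gamma}\,|A_{i+1}| \geq \frac{\mu}{1+\beta/\gamma}\,|B_{\leq i}|$, and substituting $\beta = \gamma^2$ and $\mu = \gamma^3$ turns $\frac{\mu}{1+\beta/\gamma}$ into exactly $\frac{\gamma^3}{1+\gamma}$, which (a strict inequality being stronger than the asserted $\geq$) is what we want.

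To obtain $|A_{i+1}| \geq \mu|B_{\leq i}|$ I would combine two facts. Since no layer is collapsible in the current state, Lemma~\ref{lem:non-collapse}(ii) applied to index $i+1$ gives $|A_{i+1}| \geq z_{i+1} - \mu|B_{\leq i}|$. It therefore suffices to show $z_{i+1} \geq 2\mu|{\cal B}_{\leq i}|$, which would yield $|A_{i+1}| \geq 2\mu|B_{\leq i}| - \mu|B_{\leq i}| = \mu|B_{\leq i}|$.

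For the bound $z_{i+1} \geq 2\mu|{\cal B}_{\leq i}|$ I would invoke Lemma~\ref{lem:key1} at the moment $L_{i+1}$ was most recently created by {\sc Build}. At that instant the stack is exactly $(L_1,\dots,L_{i+1})$, so $L_{i+1}$ is the topmost layer and hence not collapsible; and none of $L_1,\dots,L_i$ is collapsible at that instant either, for otherwise the algorithm would enter the collapse phase and {\sc Collapse} would shrink the smallest-index collapsible layer, whose index is at most $i$, thereby discarding $L_{i+1}$ from the stack --- contrary to the fact that this instance of $L_{i+1}$ is still present in the current state. Thus no layer is collapsible immediately after that construction, Lemma~\ref{lem:key1} applies, and it gives $z_{i+1} = |{\cal A}_{i+1}| \geq 2\mu|{\cal B}_{\leq i}|$ there. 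Between that moment and the current state, every {\sc Collapse} call acted on a layer $L_t$ with $t \geq i+1$ (a collapse at an index $\leq i$ would again have removed $L_{i+1}$), and neither such a call nor any later {\sc Build} of a higher layer changes the value $z_{i+1}$ or the sets ${\cal B}_1,\dots,{\cal B}_i$; hence $z_{i+1} \geq 2\mu|{\cal B}_{\leq i}|$ persists into the current state, completing the argument.

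The step I expect to need the most care is this last piece of bookkeeping: one must verify that "$L_{i+1}$ survives into the current state'' excludes precisely those operations ({\sc Collapse} at an index $\leq i$) that could have altered $z_{i+1}$ or $|{\cal B}_{\leq i}|$, and that the hypothesis of Lemma~\ref{lem:key1} (no collapsible layer) genuinely holds immediately after $L_{i+1}$'s construction. Everything else is a one-line substitution; I would also note in passing that $|B_{i+1}| > 0$ here since $|A_{i+1}| \geq \mu|B_{\leq i}| \geq \mu > 0$ and every addable edge retained in a layer is blocked, so Lemma~\ref{lem:blocking-num-a} is applied to a layer with a nonempty blocking set.
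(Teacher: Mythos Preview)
Your proposal is correct and follows essentially the same approach as the paper: trace back to the most recent construction of $L_{i+1}$, argue that no layer below it has been collapsed since (so ${\cal B}_{\leq i}$ and $z_{i+1}$ are unchanged and Lemma~\ref{lem:key1} applies at that moment), combine with Lemma~\ref{lem:non-collapse}(ii) to get $|A_{i+1}| \geq \mu|B_{\leq i}|$, and finish with Lemma~\ref{lem:blocking-num-a} and the substitution $\beta=\gamma^2$, $\mu=\gamma^3$. Your explicit justification that the topmost layer cannot be collapsible and that $|{\cal B}_{\leq i}| = |B_{\leq i}|$ are details the paper leaves implicit.
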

\begin{proof}
	Fix an $i \in [1,\ell-1]$.  Consider the period from the \emph{most recent} construction of layer $L_{i+1}$ until now.  During this period, none of the layers below $L_{i+1}$ has ever been collapsed; otherwise, $L_{i+1}$ would be removed, contradiction.  Hence, blocking edges in the layers below $L_{i+1}$ have never been touched during this period.  In other words, at the time $L_{i+1}$ was constructed, the set of blocking edges in the layers below $L_{i+1}$ was exactly ${\cal B}_{\leq i}$.  Also the constant $z_{i+1}$ is unchanged. By Lemma~\ref{lem:key1}, $z_{i+1} \geq  2\mu |{\cal B}_{\leq i}|$.  Although addable edges may be removed from $L_{i+1}$ during this period, there are still lots of addable edges left. 
	By Lemma~\ref{lem:non-collapse}(ii), $|{A}_{i+1}| \geq z_{i+1} - \mu |{B}_{\leq i}| \geq \mu |{B}_{\leq i}|$.
	By Lemma~\ref{lem:blocking-num-a}, $|B_{i+1}| > \frac{1}{(1 + \beta/\gamma)}|A_{i+1}| \geq \frac{\mu}{(1 + \beta/\gamma)}|{\cal B}_{\leq i}|$.
	Recall that we set $\beta = \gamma^2$ and $\mu = \gamma^3$ in the proof of Claim~\ref{cl:approx-dual}.  Replacing $\beta$ by $\gamma^2$ and $\mu$ by $\gamma^3$ proves the lemma.
\end{proof}

\begin{lemma}
	\label{lem:run-time}
	In $\mathit{poly}(m,n)\cdot n^{\mathit{poly}(\frac{1}{\delta})}$ time, the algorithm extends $M\cup {\cal E}$ to cover one more player.
\end{lemma}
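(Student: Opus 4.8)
The plan is to show that the algorithm performs at most $n^{\mathit{poly}(1/\delta)}$ \textsc{Build} and \textsc{Collapse} steps before $p_0$ gets covered, and that each step costs $\mathit{poly}(m,n)$ time. A \textsc{Build} adds at most $n$ edges to ${\cal I}\cup{\cal A}_{\ell+1}$, each time greedily extracting a $\lambda$-minimal or $(1+\gamma)\lambda$-minimal thin edge (such an edge has bounded value, so greedy extraction is polynomial) and evaluating an $f_M(\cdot,\cdot)$, which is polynomial since $G_M(S,T)$ is; a single-layer collapse performed by \textsc{Collapse} is dominated by one canonical decomposition, polynomial by Lemma~\ref{lem:canon}. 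To bound the number of steps I would first bound the stack height. Whenever the algorithm is about to \textsc{Build}, no layer is collapsible, so Lemma~\ref{lem:key} gives $|B_{\leq i+1}|\geq(1+\frac{\gamma^{3}}{1+\gamma})|B_{\leq i}|$ for all $i$; since $|B_{\leq 1}|=1$ and $|B_{\leq\ell}|\leq 1+|{\cal E}|\leq n+1$ (every player is covered by at most one edge of $M\cup{\cal E}$, and by Fact~3 the sets ${\cal B}_2,\ldots,{\cal B}_\ell$ are disjoint subsets of ${\cal E}$), the height is always at most $\ell_{\max}:=O(\frac{\log n}{\gamma^{3}})$; moreover an empty layer ($z_{\ell+1}=0$) is immediately followed by a collapse, since by Lemma~\ref{lem:key1} it forces some layer to be collapsible. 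As $\frac{1}{\gamma}=O(\frac{1}{\delta})$, we get $\ell_{\max}=O(\mathit{poly}(\frac{1}{\delta})\cdot\log n)$.

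Next I would attach a \emph{signature} to every state. Fix a constant $\nu>0$ depending only on $\delta$ (its exact value is pinned down below), and to the state $(M,{\cal E},{\cal I},(L_1,\ldots,L_\ell))$ assign the vector obtained by setting $\sigma_i=\lfloor\log_{1+\nu}|B_{\leq i}|\rfloor$ for $i\leq\ell$ and padding positions $\ell+1,\ldots,\ell_{\max}$ with a symbol $\top$ exceeding every integer; signatures are compared lexicographically. Since $|B_{\leq 1}|\leq|B_{\leq 2}|\leq\cdots$, the real entries satisfy $0\leq\sigma_1\leq\sigma_2\leq\cdots\leq\sigma_\ell\leq m$ for $m:=\lfloor\log_{1+\nu}(n+1)\rfloor$; that is, the real part of a signature is a non-decreasing integer sequence of length at most $\ell_{\max}$ with values in $\{0,\ldots,m\}$. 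The number of such sequences is $\binom{m+\ell_{\max}+1}{\ell_{\max}}\leq 2^{m+\ell_{\max}+1}$; since both $m$ and $\ell_{\max}$ are $O(\mathit{poly}(\frac{1}{\delta})\log n)$, this is at most $n^{\mathit{poly}(1/\delta)}$. It is the monotonicity of the entries that keeps this count polynomial rather than quasi-polynomial.

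I would then show that every \textsc{Build} step and every single-layer collapse inside \textsc{Collapse} strictly decreases the signature lexicographically, which bounds the number of such steps by $n^{\mathit{poly}(1/\delta)}$. A \textsc{Build} leaves ${\cal B}_1,\ldots,{\cal B}_\ell$ (hence $\sigma_1,\ldots,\sigma_\ell$) untouched and turns position $\ell+1$ from $\top$ into a finite entry: a lexicographic decrease. A collapse of $L_t$ leaves ${\cal B}_1,\ldots,{\cal B}_{t-1}$ untouched, so $\sigma_1,\ldots,\sigma_{t-1}$ are unchanged, and it replaces ${\cal B}_t$ by ${\cal B}_t\setminus{\cal B}^{\Gamma}$ with $|{\cal B}^{\Gamma}|=|I_t|>\mu|B_t|$ by collapsibility, so $|B_{\leq t}|$ strictly decreases. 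It remains to show the decrease is by a factor at least $(1+\nu)^{-1}$, so that $\sigma_t$ drops by at least one. For this I would prove that, at the instant $L_t$ is collapsed, $|B_t|\geq c\,|B_{\leq t-1}|$ for a constant $c$ depending only on $\delta$: at the construction of $L_t$ we have $z_t\geq 2\mu|B_{\leq t-1}|$ (Lemma~\ref{lem:key1}), $|B_{\leq t-1}|$ stays frozen while $L_t$ is in the stack, $|A_t|\geq z_t-\mu|B_{\leq t-1}|\geq z_t/2$ as long as $L_t$ has not lost too many addable edges (the analogue of Lemma~\ref{lem:non-collapse}(ii)), and $|B_t|>|A_t|/(1+\beta/\gamma)$ by Lemma~\ref{lem:blocking-num-a}. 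Hence $|B_{\leq t}|\leq(1+\frac{1+\beta/\gamma}{\mu})|B_t|$, so the collapse removes more than a $\frac{\mu^{2}}{\mu+1+\beta/\gamma}$ fraction of $|B_{\leq t}|$; taking $\nu:=\frac{\mu^{2}}{2(\mu+1+\beta/\gamma)}$ makes $\sigma_t$ strictly decrease. With $\beta=\gamma^{2}$ and $\mu=\gamma^{3}$ (as fixed in the proof of Lemma~\ref{lem:key1}), $\nu=\Theta(\gamma^{6})$ is a $\delta$-only constant and $m,\ell_{\max}=O(\mathit{poly}(\frac{1}{\delta})\log n)$ as required.

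Finally, the algorithm halts exactly when a collapse reaches $L_1$, i.e.\ when $p_0$ becomes covered; otherwise it can always run a \textsc{Build} and, if that \textsc{Build} yields an empty layer, then a collapse (Lemma~\ref{lem:key1}). Since the signature strictly decreases at every step and only $n^{\mathit{poly}(1/\delta)}$ signatures exist, $p_0$ is covered after $n^{\mathit{poly}(1/\delta)}$ steps of $\mathit{poly}(m,n)$ cost each, which is the claimed bound. The main obstacle I foresee is justifying $|A_t|\geq z_t/2$ at \emph{every} moment $L_t$ is collapsed, including when \textsc{Collapse} would shrink the same layer several times within a single call: one has to show, following the bookkeeping behind Lemma~\ref{lem:non-collapse}(ii), that between two consecutive rebuilds of $L_t$ the edges ever discarded from ${\cal A}_t$ number at most $\mu|B_{\leq t-1}|$ in total (they are returned to ${\cal I}$ while still useful, and ${\cal I}$ cannot exceed $\mu|B_{\leq t-1}|$ worth of them), or else to settle for a weaker but still $\delta$-controlled lower bound that survives the intermediate collapsible states --- and in either case to reconcile this with the non-monotone way \textsc{Collapse} rewrites $M$, ${\cal E}$, and ${\cal I}$.
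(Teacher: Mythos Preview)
Your overall plan is sound and actually works, but the obstacle you flag at the end needs a cleaner resolution than the one you sketch, and one of your worries is unfounded.

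First, the unfounded worry: \textsc{Collapse} never shrinks the same layer twice within a single call. After collapsing $L_t$ the routine sets $\ell:=t$ (step~5), and the next iteration's canonical decomposition is $({\cal I}_1,\ldots,{\cal I}_{\ell-1})=({\cal I}_1,\ldots,{\cal I}_{t-1})$, so only layers with index $<t$ are examined. The sequence of collapsed indices within one call is strictly decreasing. The real issue is across different \textsc{Collapse} calls (with \textsc{Build}s in between), where the same $L_t$ can be shrunk repeatedly and $|B_t|$ can become small relative to $|B_{\leq t-1}|$.

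Second, the correct resolution of your obstacle is not the ``edges returned to ${\cal I}$'' bookkeeping you suggest (edges removed from ${\cal A}_t$ in step~4 need not be added to ${\cal I}$, and even if added they are consumed in the very collapse that uses them). Instead, re-read the proof of Lemma~\ref{lem:non-collapse}(ii): it only uses that no layer with index $\leq i-1$ is collapsible, not that no layer at all is. At the moment $L_t$ is selected in step~1 of \textsc{Collapse}, it is the \emph{lowest} collapsible layer, so the hypothesis holds for $i=t$ and the invariants (which Lemma~\ref{lem:invar} maintains after each collapse iteration) give $|A_t|\geq z_t-\mu|B_{\leq t-1}|$. Since $L_t$ is still in the stack, no layer below it has been collapsed since $L_t$ was built, so $|B_{\leq t-1}|$ equals its value at construction; and because $L_t$ survived its construction, no layer was collapsible then, so Lemma~\ref{lem:key1} gives $z_t\geq 2\mu|B_{\leq t-1}|$. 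Hence $|A_t|\geq \mu|B_{\leq t-1}|$ and, via Lemma~\ref{lem:blocking-num-a}, $|B_t|>h|B_{\leq t-1}|$ with $h=\gamma^3/(1+\gamma)$. Your $\nu=\Theta(\mu h)=\Theta(\gamma^6)$ then does make $\sigma_t$ drop.

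For comparison, the paper takes a slightly different route: it uses the signature $s_i=\log_{1/(1-\mu)}\bigl(|B_i|/h^{i+1}\bigr)$ based on the individual $|B_i|$ rather than the prefix sum $|B_{\leq i}|$, and tracks it only at \emph{non-collapsible} states. Between two consecutive such states there is one \textsc{Build} and at most stack-height many collapse iterations (by the strict-decrease of indices above). This sidesteps the obstacle entirely, since at non-collapsible states Lemma~\ref{lem:non-collapse}(ii) applies directly; the cost is a separate bound on the work between non-collapsible states. Your approach is more uniform---one potential that drops at every step---but needs the sharper reading of Lemma~\ref{lem:non-collapse}(ii) above.
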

Given Lemma~\ref{lem:key}, Lemma~\ref{lem:run-time} can be proved in a way similar to that of~\cite{AKS17, CM18a}. We sketch the proof here. Consider all non-collapsible states ever reached by the algorithm. By non-collapsible, we mean that no layer is collapsible in this state. Let $h = \frac{\gamma^3}{1+\gamma}$.  For each non-collapsible state $(M, {\cal E},{\cal I}, (L_1, \ldots, L_\ell))$, we define its signature vector $(s_1, \ldots, s_{\ell}, \infty)$ where $s_i = \log_{1/(1-\mu)}\frac{|B_i|}{h^{i+1}}$.  One can verify that the coordinates of the signature vector are non-decreasing, and that as the algorithm goes from one non-collapsible state to another, the signature vector decreases lexicographically. Moreover, the sum of the coordinates is bounded by $U^2$ where $U = \log n \cdot O(\frac{1}{\mu h}\log\frac{1}{h})$.  Each signature can be regarded as a partition of an integer less than or equal to $U^2$.  Summing up the number of partitions of an integer $i$ over all $i \in [1, U^2]$, we get the upper bound of $n^{O(\frac{1}{uh}\log\frac{1}{h})}$ on the number of distinct signatures.  Recall that $u = \gamma^3$, $h = \frac{\gamma^3}{1 + \gamma}$, and $\frac{1}{\gamma} = O(\frac{1}{\delta})$.  As a consequence, the number of non-collapsible states ever reached by the algorithm is bounded by $n^{\mathit{poly}(\frac{1}{\delta})}$.  Between two consecutive non-collapsible states, there is one {\sc Build} and at most $\log_{h+1}n$ {\sc Collapse}, which take $\mathit{poly}(m,n)\cdot n^{\mathit{poly}(\frac{1}{\delta})}$ time in total.  The total running time is thus $\mathit{poly}(m,n)\cdot n^{\mathit{poly}(\frac{1}{\delta})}$.



\bibliography{falloc}

\newpage
\appendix

\section{Node-disjoint Alternating Paths}
\label{apd:path}
Let $M$ be a maximum matching of $G$.  Let $S$ be a subset of the players not matched by $M$.  Let $T$ be a subset of the players.  Recall that the problem $G_M(S, T)$ seeks the largest set of node-disjoint paths in $G_M$ from $S$ to $T$, and $f_M(S, T)$ denotes the maximum number of such paths.  One may already observe that this problem can be easily solved after being reduced to a maximum flow problem.  However, for the sake of future analysis, we understand it from the perspective of matchings.  The following lemma is analogous to the well-known sufficient and necessary condition for maximum flow.

\begin{lemma}
	\label{lem:max-paths}
	Let $M$ be a maximum matching of $G$.  Let $S$ be a subset of the players not matched by $M$.  Let $T$ be a subset of $P$.  Let $\Pi$ be a feasible solution for $G_M(S,T)$.  $\Pi$ is an optimal solution for $G_M(S,T)$ if and only if there is no path in $G_{M \oplus \Pi}$ from $S\setminus \src_{\Pi}$ to $T\setminus \sink_\Pi$.  (Recall that $M\oplus \Pi$ is a maximum matching, so $G_{M \oplus \Pi}$ is well-defined.)
\end{lemma}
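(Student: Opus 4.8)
The plan is to prove this as the matching analogue of the augmenting-path characterization of a maximum flow. One way to see it is a reduction: $G_M(S,T)$ is an integral maximum-flow problem in the node-capacitated digraph obtained from $G_M$ by splitting each vertex into an in-copy and an out-copy joined by a unit-capacity arc, with a super-source feeding the in-copies of $S$ and a super-sink drained by the out-copies of $T$; a feasible $\Pi$ corresponds to an integral flow of value $|\Pi|$, and, after undoing the vertex splitting, the residual graph of that flow is exactly $G_{M\oplus\Pi}$ with $S\setminus\src_\Pi$ as the remaining sources and $T\setminus\sink_\Pi$ as the remaining sinks, so the statement is just the theorem that a flow is maximum iff its residual graph has no augmenting path. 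Since the rest of the paper reasons with matchings, though, I would present the argument natively, using the already-noted fact that $N:=M\oplus\Pi$ is a maximum matching of $G$, so that $G_N=G_{M\oplus\Pi}$ is well defined. The preliminary observation both directions need is how $\oplus\Pi$ changes matched/unmatched status: every resource on a path of $\Pi$ stays matched and merely switches partner, the start of every non-trivial path of $\Pi$ becomes matched, the end of every non-trivial path becomes unmatched, and all other vertices keep their $M$-status; hence $S\setminus\src_\Pi$ is precisely the set of players of $S$ that are still unmatched in $N$, and $\sink_\Pi$ lies among the players of $T$ unmatched in $N$.

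For the direction ``a path exists $\Rightarrow$ $\Pi$ is not optimal'', let $\rho$ be a path in $G_N$ from $p_0\in S\setminus\src_\Pi$ to $q\in T\setminus\sink_\Pi$. By the orientation convention $\rho$ is an alternating path with respect to $N$ that leaves the $N$-unmatched player $p_0$ along a non-matching edge and arrives at $q$ along a matching edge (or $\rho$ is the trivial path $p_0=q$), so $N':=N\oplus\rho$ is again a maximum matching, now leaving $p_0$ matched and $q$ unmatched while keeping every other player's status. Because $N'=M\oplus(\Pi\oplus\rho)$, it suffices to check that $\Pi\oplus\rho$, after discarding alternating cycles (which meet neither $S$ nor $T$), is a collection $\Pi'$ of node-disjoint $S$--$T$ paths of $G_M$ with $|\Pi'|=|\Pi|+1$: tracing $\Pi\oplus\rho$ out of $p_0$ reroutes exactly the $\Pi$-paths met by $\rho$ and ends at the new sink $q$, leaving the other $\Pi$-paths untouched, so the source multiset gains $p_0$ and the sink multiset gains $q$. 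Carrying out this flow-augmentation bookkeeping --- in particular verifying node-disjointness, for which it is cleanest to pass through the node-split graph where $\rho$ genuinely augments a unit flow --- is the delicate part of this half.

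For the converse, suppose for contradiction that a feasible solution $\Pi^\ast$ has $|\Pi^\ast|>|\Pi|$. Orient the edge set $\Pi\oplus\Pi^\ast$ so that the edges of $\Pi^\ast\setminus\Pi$ keep their $G_M$-orientation and the edges of $\Pi\setminus\Pi^\ast$ are reversed; a case check on whether an edge lies in $M$ shows every resulting arc belongs to $G_N$. Reading this object in the node-split graph, it is the support of the difference of the integral flows realizing $\Pi^\ast$ and $\Pi$, so it decomposes into directed paths and cycles of $G_N$; since $\Pi^\ast$ supplies $|\Pi^\ast|$ distinct path-ends in $S$ and $|\Pi^\ast|$ in $T$ whereas $\Pi$ supplies only $|\Pi|$ in $\src_\Pi$ and $|\Pi|$ in $\sink_\Pi$, an endpoint count forces at least one of these directed paths to run from a player of $S\setminus\src_\Pi$ to a player of $T\setminus\sink_\Pi$ in $G_N$, contradicting the hypothesis. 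I expect the main obstacle to be precisely this decomposition-and-endpoint-counting step: node-disjointness rather than edge-disjointness forces the detour through the node-split network, and trivial single-node paths have to be handled uniformly since they are counted in both $\src_\Pi$ and $\sink_\Pi$ yet are inert under $\oplus$. Everything else is routine alternating-path surgery that mirrors the textbook proof of the augmenting-path theorem.
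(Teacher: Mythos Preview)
Your plan is correct and is essentially the paper's argument: both directions hinge on the symmetric difference $\Pi\oplus\Pi^\ast$ (respectively $\Pi\oplus\rho$), which equals $(M\oplus\Pi)\oplus(M\oplus\Pi^\ast)$ and is therefore the symmetric difference of two maximum matchings, hence a vertex-disjoint union of even paths and cycles; an endpoint (degree-parity) count then isolates a path from $S\setminus\src_\Pi$ to $T\setminus\sink_\Pi$ or, conversely, assembles $|\Pi|+1$ node-disjoint $S$--$T$ paths. The one difference worth noting is packaging: the paper stays entirely inside the matching world and gets vertex-disjointness for free from the degree-$\le 2$ structure of a symmetric difference of matchings, whereas you propose handling node-disjointness by passing through the node-split flow network. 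That detour is unnecessary here---once you recognize $\Pi\oplus\Pi^\ast$ as a symmetric difference of matchings, every vertex has degree at most two and the decomposition into vertex-disjoint paths and cycles is immediate, so the ``delicate bookkeeping'' you anticipate collapses to a short degree argument (which the paper carries out explicitly, including the special handling of trivial paths and the case $\sink_\rho\in\src_\Pi$).
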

\begin{proof}[Proof of if part]
	Suppose that $\Pi$ is not optimal.  Recall that every path in $\Pi$ is an alternating path with respect to $M$.  We show that, with respect to the maximum matching $M \oplus \Pi$, there is an alternating path from $S\setminus \src_{\Pi}$ to $T\setminus \sink_\Pi$. This alternating path is a path in $G_{M \oplus \Pi}$ from  $S\setminus \src_{\Pi}$ to $T\setminus \sink_\Pi$.  First we observe that $\Pi$ cannot use any player in $S\setminus \src_{\Pi}$ because every of these players is not matched by $M$ and has in-degree $0$ in $G_M$.

	If there is a player $p \in (S\setminus \src_{\Pi})$ that belongs to $T$, we claim that $p$ itself forms our target alternating path.  $p$ is not matched by $M$ and is not used by alternating paths in $\Pi$, so $p$ remains unmatched in $M \oplus \Pi$ and $p \notin \sink_{\Pi}$. The player $p$ itself forms a trivial alternating path with respect to $M\oplus \Pi$, and it is from $S\setminus \src_{\Pi}$ to $T\setminus \sink_{\Pi}$.

	Assume that no player in $S\setminus \src_{\Pi}$ belongs to $T$.  Let $\Pi^*$ be an optimal solution for $G_M(S,T)$.  Clearly, $|\Pi^*| > |\Pi|$.  Consider $\src_{\Pi^*}\setminus \src_{\Pi}$.  It is non-empty as $|\Pi^*| > |\Pi|$.  No node in $\src_{\Pi^*}\setminus \src_{\Pi}$ belongs to $T$ as $\src_{\Pi^*}\setminus \src_{\Pi}$ is a subset of $S\setminus \src_{\Pi}$.  The alternating paths in $\Pi^*$ that start with nodes in $\src_{\Pi^*}\setminus \src_{\Pi}$ must be non-trivial as these nodes are not in $T$.
  	Now consider the edge set 
  	\[
  		E_{\oplus} = (M \oplus \Pi) \oplus \left(M \oplus \Pi^*\right) = \Pi \oplus \Pi^*.
  	\]
  	$E_{\oplus}$ is the symmetric difference of two maximum matchings, so it consists of some non-trivial even-length paths and possibly some cycles~\cite{HK73}.  All the paths in $E_{\oplus}$ are alternating paths with respect to $M\oplus \Pi$, and they are node-disjoint.  We claim that 
  	\begin{enumerate}[(i)]
  		\item every player in $\src_{\Pi^*} \setminus \src_{\Pi}$ is an endpoint of some non-trivial path in $E_{\oplus}$, and 
  		\item if a non-trivial path has an endpoint in $\src_{\Pi^*} \setminus \src_{\Pi}$, then the other endpoint must be in either $\src_{\Pi} \setminus \src_{\Pi^*}$ or $\sink_{\Pi^*} \setminus \sink_{\Pi}$.
  	\end{enumerate}
	Suppose that our claim holds.  Because $|\Pi^*| > |\Pi|$, we have $|\src_{\Pi^*} \setminus \src_{\Pi}| > |\src_{\Pi} \setminus \src_{\Pi^*}|$.  By pigeonhole principle, there must be at least one path in $E_\oplus$ that goes from $\src_{\Pi^*} \setminus \src_{\Pi}$ to $\sink_{\Pi^*} \setminus \sink_{\Pi}$, and this is our target alternating path with respect to $M\oplus \Pi$.

	To see why our claim holds, first consider the degrees of nodes in $E_{\oplus}$.  Viewing $E_{\oplus}$ as $\Pi \oplus \Pi^*$, we observe that 
	\begin{itemize}
		\item every player in $\src_{\Pi^*} \setminus \src_{\Pi}$ has degree $1$ in $E_{\oplus}$, because they have degree $1$ in $\Pi^*$ and degree $0$ in $\Pi$;
		\item every player in $\src_{\Pi} \setminus \src_{\Pi^*}$, $\sink_{\Pi^*} \setminus \sink_{\Pi}$, and $\sink_{\Pi} \setminus \sink_{\Pi^*}$ may have odd degree in $E_{\oplus}$;
		\item any other players or fat resources must have even degree in $E_{\oplus}$, and hence they cannot be endpoints of paths in $E_{\oplus}$.
	\end{itemize}
	Our claim (i) simply follows by observation (i).  To prove claim (ii), it suffices to show that if a path has an endpoint in $\src_{\Pi^*} \setminus \src_{\Pi}$, then its other endpoint cannot be in $\src_{\Pi^*} \setminus \src_{\Pi}$ or $\sink_{\Pi} \setminus \sink_{\Pi^*}$.  Recall that all the paths in $E_{\oplus}$ are alternating paths with respect to $M\oplus \Pi$.  Players in $(\src_{\Pi^*} \setminus \src_{\Pi}) \subseteq (S \setminus \src_{\Pi})$ are not matched by $M\oplus \Pi$, so they cannot be reached by a non-trivial alternating path with respect to $M\oplus \Pi$ from another unmatched node in $\src_{\Pi^*} \setminus \src_{\Pi}$.  Similarly, players in $\sink_{\Pi} \setminus \sink_{\Pi^*}$ are not matched by $M\oplus \Pi$, so cannot be reached by a non-trivial alternating path from another unmatched node in $\src_{\Pi^*} \setminus \src_{\Pi}$.  This completes the proof.
\end{proof}

\begin{proof}[Proof of only-if part] Let $\pi$ be a path in $G_{M\oplus \Pi}$ from $S\setminus \src_{\Pi}$ to $T\setminus \sink_\Pi$.  In other words, $\pi$ is an alternating path with respect to $M\oplus \Pi$ from $S\setminus \src_{\Pi}$ to $T\setminus \sink_\Pi$.  We show how to update $\Pi$ to a large feasible solution for $G_M(S, T)$.  If $\pi$ is a trivial alternating path consisting of a player $p$ alone, then $p$ must be in $S\cap T$, and hence is not matched by $M$. Also $p$ must be node-disjoint from $\Pi$.  Therefore, $p$ itself forms a (trivial) alternating path with respect to $M$, and adding this path to $\Pi$ yields a larger feasible solution for $G_M(S, T)$.

Suppose that $\pi$ is non-trivial.  Let $\src_{\pi}$ be the starting point of $\pi$, and let $\sink_{\pi}$ be the ending point of $\pi$.  Let $\Pi^0$ be the set of trivial alternating paths in $\Pi$, and let $\Pi^+$ be the set of non-trivial alternating paths in $\Pi$.  Note that players involved in $\Pi^0$ remain unmatched in $M\oplus \Pi$ and have in-degree 0 in $G_{M\oplus \Pi}$, so they can not be touched by the non-trivial path $\pi$ which starts at another player.  $\Pi^0$ and $\pi$ are node-disjoint.  Consider the edge set 
\[
	E_{\oplus} = M \oplus (M \oplus \Pi \oplus \pi) = \Pi \oplus \pi = \Pi^+ \oplus \pi.
\]
$E_{\oplus}$ is the symmetric difference of two maximum matchings, so it consists of some non-trivial even-length paths and possibly some cycles.  All the paths in $E_{\oplus}$ are alternating paths with respect to $M$, and they are node-disjoint.  There are two cases: (i) $\sink_{\pi} \notin \src_{\Pi^+}$, and (ii)$\sink_{\pi} \in \src_{\Pi^+}$.
In case (i), similar to that in the proof of if part, if viewing $E_{\oplus}$ as the symmetric difference of two sets of paths, players in $\src_{\Pi^+} \cup \{\src_{\pi}\}$ and $\sink_{\Pi}\cup \{\sink_{\pi}\}$ have degree $1$ in $E_{\oplus}$, so they are endpoints of paths in $E_{\oplus}$.  All the other players and fat resources have even degree.  As a result, there are exactly $|\src_{\Pi^+}| + 1 = |\Pi^+| + 1$ non-trivial paths in $E_{\oplus}$.  These paths, together with trivial paths in $\Pi^0$, form a larger feasible solution for $G_M(S, T)$.  In case (ii), one player $p^* = \sink_{\pi}\in \src_{\Pi^+}$ has degree $0$ or $2$ in $E_{\oplus}$, so $p^*$ cannot be an endpoint of any path in $E_{\oplus}$.  As a consequence, $E_{\oplus}$ contains only $|\Pi^+|$ non-trivial paths.  We claim that the degree of $p^*$ in $E_{\oplus}$ must be $0$. Then $p^*$ can be regarded as an additional trivial alternating path.  This trivial alternating path, together with the $|\Pi^+|$ non-trivial paths in $E_{\oplus}$ and the trivial paths in $\Pi^0$, form a larger feasible solution for $G_M(S, T)$.

To see why $p^*$ cannot have degree $2$ in $E_{\oplus}$, we should interpret $E_{\oplus}$ as the symmetric difference of two maximum matchings.  A node with degree $2$ in $E_{\oplus}$ must be matched in both matchings.  However, since $p^* \in \src_{\Pi^+} \subseteq S$, $p^*$ is not matched by $M$. Therefore, $p^*$ cannot have degree $2$ in $E_{\oplus}$.  This completes the proof.  
\end{proof}

The above proofs immediately imply the following.
\begin{corollary}
	\label{coro:augment}
	Let $M$ be a maximum matching of $G$.  Let $S$ be a subset of the players not matched by $M$.  Let $T$ be a subset of $P$.  Let $\Pi$ be a feasible solution for $G_M(S, T)$.   If $\Pi$ is not optimal, there exists a path $\pi$ in $G_{M\oplus \Pi}$ from $S \setminus \src_{\Pi}$ to $T \setminus \sink_\Pi$.  In polynomial time, we can augment $\Pi$ to a larger solution $\Pi'$ for $G_M(S, T)$ such that $|\Pi'| = |\Pi| + 1$, $\src_{\Pi'} = \src_{\Pi} \cup \{\src_{\pi}\}$, and $\sink_{\Pi'} = \sink_{\Pi} \cup \{\sink_{\pi}\}$.  Moreover, the collection of nodes used by $\Pi'$ is a subset of those used by $\Pi$ and $\pi$.
\end{corollary}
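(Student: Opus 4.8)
The plan is to read the corollary off directly from Lemma~\ref{lem:max-paths} together with the two explicit constructions inside its proof, adding only the extra bookkeeping on sources, sinks, and node usage that the corollary asks for.

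The existence of $\pi$ is immediate from Lemma~\ref{lem:max-paths}: since $\Pi$ is not optimal, the lemma (this is precisely what the proof of its ``if'' direction establishes) guarantees a path $\pi$ in $G_{M\oplus\Pi}$ from $S\setminus\src_\Pi$ to $T\setminus\sink_\Pi$, and because $M\oplus\Pi$ is a maximum matching, $G_{M\oplus\Pi}$ is well defined. Such a $\pi$ is found in polynomial time by a single breadth-first search in $G_{M\oplus\Pi}$ from a super-source joined to every vertex of $S\setminus\src_\Pi$, searching for a vertex of $T\setminus\sink_\Pi$.

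For the augmentation I would follow verbatim the construction in the ``only-if'' part of the proof of Lemma~\ref{lem:max-paths}. If $\pi$ is trivial, then $\pi=\{p\}$ with $p\in S\cap T$ unmatched by $M$ and node-disjoint from $\Pi$, and we take $\Pi'=\Pi\cup\{p\}$; since $\src_\pi=\sink_\pi=p$, the three claimed identities are immediate. If $\pi$ is non-trivial, split $\Pi=\Pi^0\cup\Pi^+$ into its trivial and non-trivial paths, form the edge set $E_\oplus=\Pi^+\oplus\pi$ (the symmetric difference of $M$ with the maximum matching $M\oplus\Pi\oplus\pi$), discard its cycles, let $\Pi^1$ be its paths, and set $\Pi'=\Pi^0\cup\Pi^1$, together with the extra trivial path $\{\sink_\pi\}$ in the sub-case $\sink_\pi\in\src_{\Pi^+}$ -- where, exactly as in the lemma, one first checks that $\sink_\pi$ has degree $0$ and not $2$ in $E_\oplus$ because $\sink_\pi\in S$ is unmatched by $M$. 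All these operations -- one symmetric difference and a path/cycle decomposition of a union of two matchings -- are polynomial time. It then remains to verify $|\Pi'|=|\Pi|+1$, $\src_{\Pi'}=\src_\Pi\cup\{\src_\pi\}$, $\sink_{\Pi'}=\sink_\Pi\cup\{\sink_\pi\}$, and that every node of $\Pi'$ occurs in $\Pi$ or in $\pi$. The counts and the endpoint identities come from the degree analysis already carried out inside the proof of Lemma~\ref{lem:max-paths}: in $E_\oplus$ exactly the vertices in $\src_{\Pi^+}\cup\{\src_\pi\}$ and $\sink_{\Pi^+}\cup\{\sink_\pi\}$ have odd degree (with $\sink_\pi$ deleted from both in the second sub-case, where it moves to the extra trivial path), and since $M$ is a maximum matching no path of $E_\oplus$ can have both endpoints unmatched by $M$, which forces every path of $\Pi^1$ to run from a source in $\src_{\Pi^+}\cup\{\src_\pi\}$ to a sink in $\sink_{\Pi^+}\cup\{\sink_\pi\}$; combining with the untouched $\Pi^0$ and the possible extra trivial path yields the stated source and sink sets together with $|\Pi'|=|\Pi^0|+|\Pi^+|+1$. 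The node-containment claim is then essentially free: the paths of $\Pi^1$ use only edges, hence nodes, of $\Pi^+$ and $\pi$; $\Pi^0$ uses only nodes of $\Pi$; and the lone extra trivial path is $\{\sink_\pi\}\subseteq\pi$.

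I do not expect a genuine obstacle, as the statement is a repackaging of a lemma whose proof is already spelled out. The only part requiring care is the endpoint accounting in the non-trivial case -- making sure the ``$+1$'' in $|\Pi'|$, the new source $\src_\pi$, and the new sink $\sink_\pi$ line up correctly in both sub-cases $\sink_\pi\notin\src_{\Pi^+}$ and $\sink_\pi\in\src_{\Pi^+}$ -- but this is exactly the degree bookkeeping of the proof of Lemma~\ref{lem:max-paths}, so no new idea is needed.
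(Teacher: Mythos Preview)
Your proposal is correct and matches the paper's approach exactly: the paper states that the corollary is an immediate consequence of the two constructions inside the proof of Lemma~\ref{lem:max-paths}, and you have simply unpacked those constructions and added the extra bookkeeping (sources, sinks, node containment) that the corollary records. The only thing the paper leaves implicit that you spell out is the argument that each path of $E_\oplus$ must start in $\src_{\Pi^+}\cup\{\src_\pi\}$ and end in $\sink_{\Pi^+}\cup\{\sink_\pi\}$ via the ``no augmenting path for a maximum matching'' observation, which is exactly the right justification.
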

\begin{corollary}
	Let $M$ be a maximum matching of $G$.  Let $S$ be a subset of the players not matched by $M$.  Let $T$ be a subset of $P$.  An optimal solution for $G_M(S, T)$ can be computed in polynomial time.
\end{corollary}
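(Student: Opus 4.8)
The plan is to construct an optimal solution incrementally, starting from the empty solution and repeatedly invoking the augmenting-path machinery established in Lemma~\ref{lem:max-paths} and Corollary~\ref{coro:augment}. I would initialize $\Pi := \emptyset$, which is trivially a feasible solution for $G_M(S,T)$ with $\src_\Pi = \sink_\Pi = \emptyset$. At each step, form the directed graph $G_{M\oplus\Pi}$ --- which is well defined since $M\oplus\Pi$ is a maximum matching of $G$ --- and run a breadth-first search from the vertex set $S\setminus\src_\Pi$ to test whether any vertex of $T\setminus\sink_\Pi$ is reachable.

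By Lemma~\ref{lem:max-paths}, the absence of such a path is exactly the certificate that $\Pi$ is optimal, so if the search fails I would stop and return $\Pi$. Otherwise the search yields a path $\pi$ in $G_{M\oplus\Pi}$ from $S\setminus\src_\Pi$ to $T\setminus\sink_\Pi$, and Corollary~\ref{coro:augment} lets me compute, in polynomial time, a feasible solution $\Pi'$ for $G_M(S,T)$ with $|\Pi'| = |\Pi|+1$, $\src_{\Pi'} = \src_\Pi\cup\{\src_\pi\}$, and $\sink_{\Pi'} = \sink_\Pi\cup\{\sink_\pi\}$. I would then set $\Pi := \Pi'$ and repeat.

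For termination, observe that the paths in any feasible solution originate from distinct players of $S$, so $|\Pi|\le|S|\le n$ throughout; since each successful iteration increases $|\Pi|$ by one, there are at most $n$ iterations. Each iteration performs one graph search and one application of Corollary~\ref{coro:augment}, both polynomial, so the whole procedure runs in polynomial time and, by the stopping criterion together with Lemma~\ref{lem:max-paths}, returns an optimal solution for $G_M(S,T)$.

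I do not expect a genuine obstacle here: the substantive content --- that a reachable $T$-vertex always yields a true augmentation and that unreachability certifies optimality --- is already packaged in Lemma~\ref{lem:max-paths} and Corollary~\ref{coro:augment}. The only points needing a line of care are the base case (the empty solution is feasible) and checking that the trivial-path case $\src_\pi = \sink_\pi \in S\cap T$ is handled uniformly, which it is, since Corollary~\ref{coro:augment} already covers it.
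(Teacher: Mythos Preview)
Your proposal is correct and matches the paper's intent: the paper simply states that the corollary follows immediately from the preceding proofs (Lemma~\ref{lem:max-paths} and Corollary~\ref{coro:augment}), and your repeated-augmentation argument is precisely the natural way to unpack that implication. You have supplied more detail than the paper does, but the approach is the same.
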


The following lemma states that if we get one more node-disjoint path by including $p$ in $T$, so can we by including $p$ to any subset of $T$.

\begin{lemma}[\cite{CM18a}]
	\label{lem:submod}
	Let $M$ be a maximum matching of $G$.  Let $S$ be any subset of the unmatched players. Let $T$ be any subset of $P$.  Let $p$ be an arbitrary player in $P$. If $f_M(S,T \cup \{p\}) = f_M(S,T) + 1$, then for every $T' \subseteq T$, $f_M(S,T'\cup \{p\}) = f_M(S,T') + 1$.
\end{lemma}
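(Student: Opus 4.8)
The plan is to argue by contraposition through the max-flow / node-disjoint path characterisation established in Lemma~\ref{lem:max-paths}. Suppose $f_M(S,T'\cup\{p\}) = f_M(S,T')$ for some $T'\subseteq T$; we want to deduce $f_M(S,T\cup\{p\}) = f_M(S,T)$, contradicting the hypothesis. First I would fix an optimal solution $\Pi'$ for $G_M(S,T')$. Since adding $p$ to $T'$ does not increase the optimum, $\Pi'$ is also optimal for $G_M(S,T'\cup\{p\})$, so by the ``only if'' direction of Lemma~\ref{lem:max-paths} there is no path in $G_{M\oplus\Pi'}$ from $S\setminus\src_{\Pi'}$ to $(T'\cup\{p\})\setminus\sink_{\Pi'}$; in particular, since $p\notin T'$ would make $p\notin\sink_{\Pi'}$ (and if $p\in T'$ the statement is trivial because then $T'\cup\{p\}=T'$), there is no path in $G_{M\oplus\Pi'}$ from $S\setminus\src_{\Pi'}$ to $p$.

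Next I would grow $\Pi'$ into an optimal solution $\Pi$ for the larger instance $G_M(S,T)$ by repeatedly applying Corollary~\ref{coro:augment}: each augmentation picks up a path ending at some sink in $T\setminus\sink$, enlarges $\src$ and $\sink$ by exactly one node each, and—crucially—only uses nodes already used by the current solution together with the new augmenting path. The key structural point I want is that throughout this process the player $p$ never becomes reachable: $p$ starts unreachable from $S\setminus\src_{\Pi'}$ in $G_{M\oplus\Pi'}$, and I need to check that flipping an augmenting path (which changes $M\oplus\Pi'$ only along that path) cannot create a new directed route from the free sources to $p$ unless $p$ itself lies on the augmenting path — but an augmenting path ends at a node of $T\setminus\sink$, and if it passed through $p$ it could be truncated at $p$, contradicting that $p$ was unreachable. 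So $p$ is never touched, $p\notin\sink_\Pi$, and in the final $G_{M\oplus\Pi}$ there is still no path from $S\setminus\src_\Pi$ to $p$.

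Then, since $T\setminus\sink_\Pi$ consists only of nodes that were unreachable at the moment $\Pi$ became optimal (that is the content of Lemma~\ref{lem:max-paths} applied to $\Pi$), and $p$ is likewise unreachable, there is no path in $G_{M\oplus\Pi}$ from $S\setminus\src_\Pi$ to $(T\cup\{p\})\setminus\sink_\Pi = (T\setminus\sink_\Pi)\cup\{p\}$. By the ``if'' direction of Lemma~\ref{lem:max-paths}, $\Pi$ is optimal for $G_M(S,T\cup\{p\})$ as well, so $f_M(S,T\cup\{p\}) = |\Pi| = f_M(S,T)$, the desired contradiction.

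The main obstacle I anticipate is the monotonicity claim in the middle paragraph: controlling reachability of $p$ under the sequence of path flips performed by Corollary~\ref{coro:augment}. One has to be careful that an augmenting step toward some sink $t\in T$ does not, as a side effect, open a directed path to $p$ along a shared prefix; the clean way to handle this is to invoke the ``minimality'' part of Corollary~\ref{coro:augment} (the node set of $\Pi'$ after augmentation is contained in that of $\Pi'$ before plus the augmenting path $\pi$) together with the observation that $\pi$ could be shortened at its first visit to $p$, so if $p$ were reachable after the flip it was already reachable before. An alternative, possibly cleaner, route avoiding this bookkeeping is a direct Menger/LP-duality argument: a minimum vertex cut separating $S$ from $T'\cup\{p\}$ has size $f_M(S,T')$ and hence (since $p$ contributes nothing) also separates $S$ from $T'$; one then shows the same cut, enlarged appropriately, handles $T\cup\{p\}$, but this seems to require re-deriving the cut theory for this specific path problem, whereas the augmentation argument reuses exactly the tools already proved in the excerpt.
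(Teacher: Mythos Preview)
Your contrapositive route can be made to work, but it is substantially more laborious than the paper's argument, and the justification you offer for the ``main obstacle'' is not quite the right one. The paper proceeds \emph{directly}: take an optimal $\Pi_1$ for $G_M(S,T')$, view it as feasible for $G_M(S,T\cup\{p\})$, and augment (via Corollary~\ref{coro:augment}) all the way to an optimal $\Pi_2$ for $G_M(S,T\cup\{p\})$. Because augmentation only enlarges the sink set, either $p\in\sink_{\Pi_2}$, in which case the paths in $\Pi_2$ ending in $\sink_{\Pi_1}\cup\{p\}\subseteq T'\cup\{p\}$ already witness $f_M(S,T'\cup\{p\})\geq|\Pi_1|+1$, or $p\notin\sink_{\Pi_2}$, in which case $\Pi_2$ is feasible for $G_M(S,T)$ and $f_M(S,T\cup\{p\})=|\Pi_2|\leq f_M(S,T)$, contradicting the hypothesis. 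That is the entire proof: one augmentation sequence and a two--case split on whether $p$ ever appears as a sink.

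By contrast, your plan augments from $\Pi'$ only up to an optimum for $G_M(S,T)$ and then must argue that $p$ remains unreachable in $G_{M\oplus\Pi}$; this forces you to track reachability of $p$ through every augmentation step. The fix you propose---the node--set containment in Corollary~\ref{coro:augment}---does not do the job: that statement controls which nodes the \emph{solution} $\Pi_{k+1}$ uses, not which nodes are reachable in $G_{M\oplus\Pi_{k+1}}$. The correct invariant is a residual--reachability argument: the augmenting path $\pi_k$ lies entirely inside the current reachable set $R_k$ (since it starts in $S\setminus\src_{\Pi_k}$), so passing from $M\oplus\Pi_k$ to $M\oplus\Pi_{k+1}$ reverses only edges with both endpoints in $R_k$, whence no edge of $G_{M\oplus\Pi_{k+1}}$ leaves $R_k$ and $p\notin R_k$ stays unreachable. (One must also check that the augmentation of Corollary~\ref{coro:augment} really gives $M\oplus\Pi_{k+1}=(M\oplus\Pi_k)\oplus\pi_k$ at the edge level, or else handle possible cycles in $\Pi_k\oplus\pi_k$ separately.) All of this is avoidable: the paper's sink--monotonicity trick dispatches the lemma in a few lines without any reachability bookkeeping.
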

\begin{proof}
	Suppose that $f_M(S,T \cup \{p\}) = f_M(S,T) + 1$.  Obviously, $p \notin T$.  Let $T'$ be an arbitrary subset of $T$.  Let $\Pi_1$ be an optimal solution for $G_M(S,T')$.  $p \notin \mathit{sink}_{\Pi_1}$.  Note that $\Pi_1$ is also a feasible solution for $G_M(S,T \cup \{p\})$.  Let $\Pi_2$ be an optimal solution for $G_M(S,T \cup \{p\})$ obtained by augmenting $\Pi_1$ (using Corollary~\ref{coro:augment}).  Then, $\mathit{sink}_{\Pi_1} \subseteq \mathit{sink}_{\Pi_2}$.  If $p \in \mathit{sink}_{\Pi_2}$, then $(\mathit{sink}_{\Pi_1} \cup \{p\}) \subseteq \mathit{sink}_{\Pi_2}$, implying that there are $|\Pi_1| + 1 = f_M(S,T') + 1$ node-disjoint paths from $S$ to $T' \cup \{p\}$, and thus establishing the lemma.  If $p \not\in \mathit{sink}_{\Pi_2}$, then $\Pi_2$ is a feasible solution for $G_M(S,T)$.  But then $f_M(S,T \cup \{p\}) = |\Pi_2| \leq f_M(S,T)$, a contradiction to the assumption.
\end{proof}
\section{Omitted Proofs in the Analysis of the Approximation Algorithm}
\label{apd:proofs}

\subsection{Proof of Lemma~\ref{lem:canon}}
	We first compute an optimal solution $\Pi_1$ for $G_M(B_1,I)$.  For $j = 2,\ldots,\ell-1$, we compute an optimal solution $\Pi_j$ for $G_M(B_{\leq j},I)$ by successively augmenting $\Pi_{j-1}$ using Corollary~\ref{coro:augment}.  Augmentation ensures that $\src_{\Pi_{j-1}} \subseteq \src_{\Pi_j}$.  Therefore, we inductively maintain the property that for all $i \in [1,j]$, $\Pi_j$ contains $|\Pi_i| = f_M(B_{\leq i},I)$ node-disjoint paths from $B_{\leq i}$ to $I$.  In the end, we obtain $\Pi_{\ell-1}$.  By invariant~1 in Table~\ref{tb:invar}, $\sink_{\Pi_{\ell-1}} = I$.  We obtain the canonical decomposition and the canonical solution as follows: for every $i \in [1,\ell-1]$, let $\Gamma_i$ be the subset of paths in $\Pi_{\ell-1}$ starting at $B_i$, and let $I_i = \sink_{\Gamma_i}$, let ${\cal I}_i$ be the subset of edges in $\cal I$ that cover the players in $I_i$.
				
\subsection{Proof of Lemma~\ref{lem:invar}}

We break the proof into two parts, handling {\sc Build} and {\sc Collapse} separately.
				
\begin{lemma}
	\label{lem:build}
	{\sc Build} maintains the invariants in Table~\ref{tb:invar}.
\end{lemma}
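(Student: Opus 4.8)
The plan is to examine a single call \textsc{Build}$(M,{\cal E},{\cal I},(L_1,\dots,L_\ell))$, assume the invariants of Table~\ref{tb:invar} hold for the stack $(L_1,\dots,L_\ell)$, and verify that they hold for the enlarged stack $(L_1,\dots,L_{\ell+1})$; throughout I write $\ell$ for the number of layers \emph{before} the call, so that the new layer count is $\ell+1$ and the index ``$\ell-1$'' in Invariants~1 and~2 of the \emph{new} stack equals the old $\ell$. The structural fact that drives everything is that \textsc{Build} modifies neither $M$, nor ${\cal E}$, nor the tuples $L_1,\dots,L_\ell$: it only enlarges ${\cal I}$ in step~\ref{step:build-1} and then creates $L_{\ell+1}$ in steps~3--5. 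Hence for an index $i\leq\ell$ the only quantity appearing in the invariants that can change is the set $I$, whereas the statements for $i=\ell+1$ refer only to the freshly assigned $d_{\ell+1}$ and $z_{\ell+1}$.

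First I would track $f_M(B_{\leq\ell},I)$ through step~\ref{step:build-1}. On entry, Invariant~1 for the old stack gives $f_M(B_{\leq\ell-1},I)=|I|$; since enlarging the source set cannot decrease $f_M$ and $|I|$ is a trivial upper bound on the number of node-disjoint paths that can end in $I$, we also have $f_M(B_{\leq\ell},I)=|I|$. Every edge that step~\ref{step:build-1} pushes into ${\cal I}$ is an addable edge $(p,D)$, and at that moment ${\cal A}_{\ell+1}$ is still empty, so the addability of $p$ reads $f_M(B_{\leq\ell},I\cup\{p\})=f_M(B_{\leq\ell},I)+1$ and in particular forces $p\notin I$. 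Thus $f_M(B_{\leq\ell},I)=|I|$ is preserved by each iteration of step~\ref{step:build-1}; as ${\cal I}$ is not touched afterwards, this is exactly Invariant~1 for the new stack. (Lemma~\ref{lem:submod} would handle the addition even with a nonempty ${\cal A}_{\ell+1}$, but it is not needed here.)

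Next I would deal with $L_{\ell+1}$. Starting from $f_M(B_{\leq\ell},A_{\ell+1}\cup I)=f_M(B_{\leq\ell},I)=|I|$ at the end of step~\ref{step:build-1}, every insertion of an addable edge $(p,D)$ into ${\cal A}_{\ell+1}$ in step~3 adds a genuinely new player to $A_{\ell+1}$ (again $p\notin A_{\ell+1}\cup I$) and, by the addability condition, raises $f_M(B_{\leq\ell},A_{\ell+1}\cup I)$ by exactly one. Inducting over the insertions, $f_M(B_{\leq\ell},A_{\ell+1}\cup I)=|I|+|A_{\ell+1}|$ once step~3 finishes, so step~4 sets $d_{\ell+1}=|I|+|A_{\ell+1}|$ and $z_{\ell+1}=|A_{\ell+1}|$. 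This at once yields $|A_{\ell+1}|=z_{\ell+1}$ (Invariant~3 at $i=\ell+1$), $d_{\ell+1}=|I|+z_{\ell+1}\geq z_{\ell+1}$ (Invariant~4 at $i=\ell+1$), and Invariant~2 at $i=\ell$, whose two sides coincide by the definition of $d_{\ell+1}$. For $i\leq\ell$, Invariants~3 and~4 are inherited verbatim since $L_1,\dots,L_\ell$ are untouched, and for Invariant~2 with $i\leq\ell-1$ the only change is that $I$ may have grown; because $f_M(B_{\leq i},\cdot)$ is monotone in its target set, $f_M(B_{\leq i},A_{i+1}\cup I)$ can only have increased and hence still dominates the unchanged $d_{i+1}$.

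I do not expect a genuine obstacle; the one point I would state carefully is the bookkeeping around the growing ${\cal I}$ --- Invariant~1 must be maintained as an \emph{equality} throughout step~\ref{step:build-1}, whereas Invariants~2--4 need only the cheap monotonicity of $f_M$ together with the equalities handed to us by step~4. It is also worth recording that the base case of the induction in Lemma~\ref{lem:invar}, the initial stack $((\emptyset,\{(p_0,\emptyset)\},0,0))$ with ${\cal I}=\emptyset$, satisfies all four invariants trivially, so this lemma together with its \textsc{Collapse} counterpart establishes Lemma~\ref{lem:invar}.
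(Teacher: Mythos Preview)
Your proof is correct and follows essentially the same approach as the paper's: verify each invariant by using that \textsc{Build} leaves $M$, ${\cal E}$, and $L_1,\dots,L_\ell$ untouched, and handle $L_{\ell+1}$ directly from the definitions of $d_{\ell+1}$ and $z_{\ell+1}$. Your handling of Invariant~1 is even slightly cleaner than the paper's---you observe that ${\cal A}_{\ell+1}=\emptyset$ throughout step~\ref{step:build-1} so the addability condition already reads $f_M(B_{\leq\ell},I\cup\{p\})=f_M(B_{\leq\ell},I)+1$, whereas the paper invokes Lemma~\ref{lem:submod} to reduce from $A_{\ell+1}\cup I$ to $I$.
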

\begin{proof}
	Suppose that the invariants hold before {\sc Build}.  Let $L_{\ell+1}$ be the layer newly constructed by {\sc Build}.  We show that the invariants hold after the construction of $L_{\ell+1}$. 

	Consider invariant~1 and the moment when {\sc Build} is about to add an edge $(p, D)$ to $I$.
	By definition of addable players, we have 
	\[
		f_M(B_{\leq \ell}, A_{\ell+1} \cup I \cup \{p\}) = f_M(B_{\leq \ell}, A_{\ell+1} \cup I) + 1.
	\]
	Lemma~\ref{lem:submod} implies that
	\[
		f_M(B_{\leq \ell}, I \cup \{p\}) = f_M(B_{\leq \ell}, \cup I) + 1.
	\]
	Hence, whenever we add an edge to ${\cal I}$, both of the left hand side and right hand side of invariant~1 increase by $1$.  Other steps obviously have no effect on this invariant.	

	Consider invariant~2.  {\sc Build} does not change any old layer. Also, {\sc Build} does not delete any edge from $\cal I$.  Therefore, for all $i \in [1,\ell-1]$, $f_M(B_{\leq i},A_{ i+1} \cup I)$ cannot decrease and the inequality still holds.  By construction, {\sc Build} sets $d_{\ell+1} := f_M(B_{\leq \ell}, A_{\ell+1} \cup I)$.  So invariant 2 is preserved.

	Consider invariant~3 and~4. For $i\in [1,\ell]$, the inequalities continue to hold because {\sc Build} does not change any old layer.  It remains to show that the inequalities hold for $i = \ell + 1$.  By definition, $z_{\ell + 1} = |A_{\ell+1}|$, so invariant 3 holds.  At the beginning of {\sc Build}, ${\cal A}_{\ell + 1} = \emptyset$. 	Whenever we add an edge $(p, D)$ to ${\cal A}_{\ell + 1}$, the value of $f_M(B_{\leq \ell}, A_{\ell+1} \cup I)$ increases by $1$ because $p$ is an addable player.  Therefore, at the end, the value of $f_M(B_{\leq \ell}, A_{\ell+1} \cup I)$ is at least $|{\cal A}_{\ell + 1}|$.
	\[
		d_{\ell+1} = f_M(B_{\leq \ell}, A_{\leq \ell+1} \cup I) \geq |{\cal A}_{\ell + 1}| =  z_{\ell + 1}. 
	\]
	So Invariant~4 is preserved.
\end{proof}

	To prove that {\sc Collapse} preserves the invariants in Table~\ref{tb:invar}, we need the following result.
	
\begin{lemma}
	\label{lem:common-sol}
	Let $({\cal I}_1, \ldots, {\cal I}_{\ell-1})$ be a canonical decomposition of $\cal I$. Let $\Gamma_1 \cup  \cdots \cup \Gamma_{\ell-1}$ be a corresponding canonical solution for $G_M(B_{\leq \ell-1}, I)$.  For every $i \in [1, \ell - 2]$, $G_M(B_{\leq i}, A_{i+1} \cup I_{\leq i})$ and $G_M(B_{\leq i}, A_{i+1} \cup I)$ share a common optimal solution that is node-disjoint from $\Gamma_{i+1}\cup\cdots\cup \Gamma_{\ell-1}$. 
\end{lemma}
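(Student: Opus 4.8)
The plan is to produce the required solution by \emph{augmenting} the lower part $\Gamma_{\leq i} := \Gamma_1 \cup \cdots \cup \Gamma_i$ of the given canonical solution. First I would record that $\Gamma_{\leq i}$ is simultaneously an optimal solution of $G_M(B_{\leq i}, I_{\leq i})$ and of $G_M(B_{\leq i}, I)$: it is feasible for both, and by Definition~\ref{def:canon} it consists of exactly $|I_{\leq i}| = f_M(B_{\leq i}, I_{\leq i}) = f_M(B_{\leq i}, I)$ node-disjoint paths. Write $\Gamma_{>i} := \Gamma_{i+1} \cup \cdots \cup \Gamma_{\ell-1}$ and $I_{>i} := I_{i+1} \cup \cdots \cup I_{\ell-1}$; then $\Gamma_{\leq i}$ and $\Gamma_{>i}$ are node-disjoint, $\sink_{\Gamma_{\leq i}} = I_{\leq i}$, and applying the only-if direction of Lemma~\ref{lem:max-paths} to $\Gamma_{\leq i}$ as an optimal solution of $G_M(B_{\leq i}, I)$ shows there is no path in $G_{M\oplus\Gamma_{\leq i}}$ from $B_{\leq i}\setminus\src_{\Gamma_{\leq i}}$ to $I\setminus I_{\leq i} = I_{>i}$. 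The single fact I would keep invoking, call it $(\dagger)$, is this: \emph{in any residual graph $G_{M\oplus\Pi'}$ that arises below, where $\Pi'$ has $\sink_{\Pi'}\cap I = I_{\leq i}$, there is no path from $B_{\leq i}\setminus\src_{\Pi'}$ to $I_{>i}$}, since augmenting $\Pi'$ along such a path (Corollary~\ref{coro:augment}) would give a feasible solution of $G_M(B_{\leq i},I)$ with $|I_{\leq i}|+1$ node-disjoint paths, contradicting $f_M(B_{\leq i},I) = |I_{\leq i}|$.

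Second, I would augment $\Gamma_{\leq i}$, repeatedly applying Corollary~\ref{coro:augment}, toward an optimal solution $\Pi$ of $G_M(B_{\leq i}, A_{i+1}\cup I_{\leq i})$, maintaining the invariant that the current solution $\Pi'$ is node-disjoint from $\Gamma_{>i}$. Since $\Gamma_{\leq i}$ already reaches all of $I_{\leq i}$, every augmenting path ends in $A_{i+1}$, so $\sink_{\Pi'}\cap I = I_{\leq i}$ throughout. The crux is that an augmenting path $\pi$, i.e.\ a path in $G_{M\oplus\Pi'}$ from $B_{\leq i}\setminus\src_{\Pi'}$ to $(A_{i+1}\cup I_{\leq i})\setminus\sink_{\Pi'}$ (which lies in $A_{i+1}$), can always be chosen node-disjoint from $\Gamma_{>i}$. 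Suppose not; take the last node $v$ of $\pi$ on $\Gamma_{>i}$, lying on some $\gamma\in\Gamma_{>i}$, which — because $\Pi'$ avoids $\Gamma_{>i}$, so none of $\gamma$'s edges is flipped — is still a valid path of $G_{M\oplus\Pi'}$ from $B_{>i}$ to $I_{>i}$. If $v$ were a resource, it would have out-degree at most $1$ in $G_{M\oplus\Pi'}$, so $\pi$ would have to leave $v$ along $\gamma$, placing the next node of $\gamma$ on $\pi$ after $v$ and contradicting the choice of $v$; hence $v$ is a player. Splicing the prefix of $\pi$ up to $v$ with the suffix of $\gamma$ from $v$ and extracting a simple directed path yields a path in $G_{M\oplus\Pi'}$ from $B_{\leq i}\setminus\src_{\Pi'}$ to $I_{>i}$, contradicting $(\dagger)$. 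Thus $\pi$ avoids $\Gamma_{>i}$, and since the augmented solution uses only nodes of $\Pi'$ and of $\pi$ (Corollary~\ref{coro:augment}), it stays disjoint from $\Gamma_{>i}$, so the invariant survives. Let $\Pi$ be the optimal solution obtained.

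Third, I would upgrade $\Pi$ from optimality for $G_M(B_{\leq i}, A_{i+1}\cup I_{\leq i})$ to optimality for $G_M(B_{\leq i}, A_{i+1}\cup I)$. It is feasible for the latter because $\sink_{\Pi}\subseteq A_{i+1}\cup I_{\leq i}$. If it were not optimal, Lemma~\ref{lem:max-paths} would give a path in $G_{M\oplus\Pi}$ from $B_{\leq i}\setminus\src_{\Pi}$ to $(A_{i+1}\cup I)\setminus\sink_{\Pi}$; optimality of $\Pi$ for $G_M(B_{\leq i}, A_{i+1}\cup I_{\leq i})$ forbids it ending in $(A_{i+1}\cup I_{\leq i})\setminus\sink_{\Pi}$, so it ends in $I_{>i}$ — which contradicts $(\dagger)$, using $\sink_{\Pi}\cap I = I_{\leq i}$. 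Hence $\Pi$ is a common optimal solution of $G_M(B_{\leq i}, A_{i+1}\cup I_{\leq i})$ and $G_M(B_{\leq i}, A_{i+1}\cup I)$ that is node-disjoint from $\Gamma_{i+1}\cup\cdots\cup\Gamma_{\ell-1}$, as required.

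I expect the main obstacle to be the disjointness bookkeeping of the second paragraph: carrying the invariant ``$\Pi'$ is node-disjoint from $\Gamma_{>i}$'' through the entire augmentation sequence, and especially verifying that the spliced walk really is a path in the \emph{correct} residual graph $G_{M\oplus\Pi'}$, so that $(\dagger)$ applies. This requires care about how edge orientations in $G_{M\oplus\Pi'}$ behave on nodes lying outside $\Pi'$, together with the degree facts that in any such graph a resource has out-degree at most $1$ and a player has in-degree at most $1$.
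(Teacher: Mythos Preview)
Your proposal is correct and follows essentially the same approach as the paper: start from $\Gamma_{\leq i}$, repeatedly augment via Corollary~\ref{coro:augment}, and show each augmenting path must avoid $\Gamma_{>i}$ by splicing with a path of $\Gamma_{>i}$ to reach $I_{>i}$ and contradict $f_M(B_{\leq i},I)=|I_{\leq i}|$. The only cosmetic difference is the direction: the paper augments directly toward an optimal solution of $G_M(B_{\leq i}, A_{i+1}\cup I)$ and then observes (since no sink lies in $I_{>i}$) that it is also optimal for $G_M(B_{\leq i}, A_{i+1}\cup I_{\leq i})$, whereas you augment toward the latter and then use $(\dagger)$ once more to upgrade to the former; also the paper switches at the \emph{first} common node with $\Gamma_{>i}$ rather than the last.
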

\begin{proof}
	Fix some $i \in [1,\ell-2]$. We compute an optimal solution for $G_M(B_{\leq i}, A_{i+1} \cup I)$ as follows. Let $\Pi_0 = \Gamma_1 \cup \cdots\cup \Gamma_i$.  $\Pi_0$ is a feasible solution for $G_M(B_{\leq i}, A_{i+1} \cup I)$ as paths in it go from $B_{\leq i}$ to $I_{\leq i}$.  We iteratively augment $\Pi_0$ using Corollary~\ref{coro:augment}.  Let $(\Pi_0, \Pi_1, \ldots, \Pi_k)$ be the intermediate solutions obtained during the repeated augmentations, where $\Pi_k$ is the optimal solution for $G_M(B_{\leq i}, A_{i+1} \cup I)$ we obtain at the end.
			
	We show that $\Pi_k$ is node-disjoint from $\Gamma_{> i} = \Gamma_{i+1}\cup\cdots\cup \Gamma_{\ell-1}$.  The set of sinks of $\Gamma_{> i}$ is $I_{i+1} \cup \cdots I_{\ell-1}$, which is denote as $I_{> i}$.
			
	For $j \in [0, k-1]$, let $\pi_j$ be the path in $G_{M\oplus \Pi_j}$ that is used to augment $\Pi_j$ to $\Pi_{j+1}$.  We claim that every $\pi_j$ is node-disjoint from $\Gamma_{> i}$.  Suppose not.  Let $\pi_{j^*}$ be the first such path that shares some node with $\Gamma_{> i}$.  Since $\Pi_0$ and every $\pi_j$ with $j < j^*$ are node-disjoint from $\Gamma_{> i}$,  $\Pi_{j^*}$ must be node-disjoint from $\Gamma_{> i}$.  Hence, $\Gamma_{> i}$ remains to be paths in $G_{M \oplus \Pi_{j^*}}$.  Since $\pi_{j^*}$ share some node with $\Gamma_{> i}$, we can construct a path $\pi^*$ in $G_{M \oplus \Pi_{j^*}}$ as follows:  start with $\src_{\pi_{j^*}}$, follow $\pi_{j^*}$, switch at the first common node of $\pi_{j^*}$ and $\Gamma_{> i}$, and follow a path in $\Gamma_{> i}$ to some player $p^*$ in $I_{> i}$.  If we augment $\Pi_{j^*}$ using $\pi^*$, we will get a feasible solution $\Pi^*$ for $G_M(B_{\leq i}, A_{i+1} \cup I)$.  All the players in $I_{\leq i} = \sink_{\Pi_0}$ are sinks of $\Pi^*$.  The player $p^* \in I_{>i}$ is also a sink of $\Pi^*$.    Consequently, there are $|I_{\leq i}| + 1$ node-disjoint paths in $\Pi^*$ from $B_{\leq i}$ to  $I$, which implies that $f_M(B_{\leq i}, I) \geq |I_{\leq i}| + 1$.  This contradicts the definition of canonical decomposition, thereby establishing our claim.
			
	By our claim, all $\pi_j$'s are nodes-disjoint from $\Gamma_{> i}$.  So the repeated augmentations to produce $\Pi_k$ do not generate any node-sharing with $\Gamma_{> i}$.  That is, $\Pi_k$ is node-disjoint from $\Gamma_{> i}$.

	Since $\Pi_k$ is node-disjoint from $\Gamma_{> i}$, no player from $I_{> i}$ can be a sink of $\Pi_k$. Therefore, $\Pi_k$ is also an optimal solution for $G_M(B_{\leq i}, A_{i+1} \cup I_{\leq i})$.
\end{proof}
	
\begin{lemma}
	{\sc Collapse} maintains invariants~1--4 in Table~\ref{tb:invar}.
\end{lemma}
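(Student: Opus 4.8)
The plan is to reduce the statement to a single pass through the loop body of {\sc Collapse} (steps~1--5) and then induct on the number of passes, since {\sc Collapse} repeats this body until no layer is collapsible and each pass decreases $\ell$. Fix the collapsible layer $L_t$ of smallest index selected in step~1. Two preliminary remarks: $t\leq\ell-1$, since collapsibility of $L_t$ presupposes the block ${\cal I}_t$ of a canonical decomposition $({\cal I}_1,\dots,{\cal I}_{\ell-1})$; and if $t=1$ the algorithm halts inside step~3.3, so there is nothing to prove and I may assume $t\geq2$. The numeric invariants are then immediate: for $i\leq t$ the pass never alters $d_i$ or $z_i$, so invariant~4 survives; layers $1,\dots,t-1$ keep their edge sets verbatim and step~4 can only delete addable edges from ${\cal A}_t$ while $z_t$ is fixed, so $|A_i|\leq z_i$ persists and invariant~3 survives. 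It remains to re-establish invariants~1 and~2 after the matching flip $M':=M\oplus\Gamma_t$ and the surgery of step~4.

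The engine for the flip is Lemma~\ref{lem:common-sol}, together with the remark that $\Gamma_1\cup\dots\cup\Gamma_{t-1}$ is itself node-disjoint from $\Gamma_t$ since both are pieces of the single canonical solution $\Gamma$. First I would observe that a path of $G_M$ never runs through an unmatched player, so $\Gamma_t$, whose paths start in $\src_{\Gamma_t}\subseteq B_t$ with $B_t$ disjoint from $B_{\leq t-1}$, leaves every player of $B_{\leq t-1}$ untouched; hence $B_{\leq t-1}$ stays unmatched in $M'$ and $G_{M'}(B_{\leq t-1},\cdot)$ is well defined. Next, any path system node-disjoint from $\Gamma_t$ keeps the $M$-membership of all its edges and so remains a valid solution in $G_{M'}$. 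Applying this to $\Gamma_1\cup\dots\cup\Gamma_i$ gives $f_{M'}(B_{\leq i},I_{\leq i})=|I_{\leq i}|$ for every $i\leq t-1$ (the reverse inequality being the trivial ``at most one path per sink'' bound), and applying it to the common optimal solution provided by Lemma~\ref{lem:common-sol} gives $f_{M'}(B_{\leq i},A_{i+1}\cup I_{\leq i})\geq f_M(B_{\leq i},A_{i+1}\cup I)\geq d_{i+1}$ for $i\leq t-1$, where the last step is invariant~2 before {\sc Collapse}. Using monotonicity of $f_{M'}(B_{\leq i},\cdot)$ in the target set, this shows that right after steps~2--3 (so with $\ell$ reset to $t$ and ${\cal I}$ reset to ${\cal I}_{\leq t-1}$) invariants~1 and~2 both hold for the new stack.

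It then remains to absorb step~4. An edge for a player $p$ is moved into ${\cal I}$ only when $f_{M'}(B_{\leq t-1},I\cup\{p\})=f_{M'}(B_{\leq t-1},I)+1$, so each such move raises both sides of invariant~1 by one and invariant~1 is preserved. For invariant~2 the only affected index is $i=t-1$: ${\cal A}_t$ sheds the set $U$ of players whose edges became unblocked, while ${\cal I}$ absorbs the subset $U'\subseteq U$ that passes the addability test, so the target set changes only by deleting $U\setminus U'$. I would process the players of $U\setminus U'$ in step~4's order and apply the contrapositive of Lemma~\ref{lem:submod}: when such a $p$ failed the test, the then-current ${\cal I}$ was a subset of the still-remaining target set, so $p$ contributes nothing to that set either, and each removal leaves $f_{M'}(B_{\leq t-1},\cdot)$ unchanged; hence $f_{M'}(B_{\leq t-1},A_t\cup I)\geq d_t$ still holds after step~4. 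For $i<t-1$, $A_{i+1}$ is untouched and the final ${\cal I}$ still contains $I_{\leq i}$, so the witness from the previous paragraph stays feasible and invariant~2 holds there too.

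The hard part will be exactly this step~4 accounting: verifying that discarding the ${\cal A}_t$-players that fail the addability test never pushes the rank-type quantity $f_{M'}(B_{\leq t-1},\cdot)$ below $d_t$. It rests on the matroid-like monotonicity of $f_M(B_{\leq t-1},\cdot)$ from Lemma~\ref{lem:submod}, invoked along the processing order, together with the bookkeeping that the running ${\cal I}$ never escapes the relevant target set so that the subset hypothesis of that lemma is legitimately available. Everything else is routine; the one genuinely structural ingredient is that Lemma~\ref{lem:common-sol} is tailored precisely so that the path systems witnessing the lower-layer quantities come through the flip $M\oplus\Gamma_t$ unharmed.
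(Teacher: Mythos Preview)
Your proposal is correct and follows essentially the same route as the paper: both use the node-disjointness of $\Gamma_{\leq t-1}$ from $\Gamma_t$ for invariant~1, invoke Lemma~\ref{lem:common-sol} to push invariant~2 through the flip $M\oplus\Gamma_t$, and handle invariants~3 and~4 by the trivial observation that {\sc Collapse} never enlarges any ${\cal A}_i$ or alters any $d_i,z_i$. The only cosmetic difference is in the step~4 accounting for invariant~2 at $i=t-1$: the paper processes the removed edges one by one and argues \emph{forwardly} via Lemma~\ref{lem:submod} that whenever the removal of $p$ would drop $f_{M'}(B_{\leq t-1},A_t\cup I)$, the test in step~4.2 necessarily succeeds and $p$ lands in $I$, restoring the value; you instead argue \emph{contrapositively}, noting that each discarded player failed the test against an $I$ that sits inside the remaining target set, so its deletion is harmless. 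Both directions of the same matroid-exchange property do the job.
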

\begin{proof}
	It suffices to show that invariants~1--4 are preserved after collapsing the lowest collapsible layer $L_t$ in steps~\ref{step:collapse-2}--\ref{step:collapse-4}.

	Consider invariant~1.  Since all the layers above the $t$-th layer are removed and we set ${\cal I}: = {\cal I}_{\leq t-1}$, it suffices to show $f_M(B_{\leq t-1}, I_{\leq t-1}) = |I_{\leq t-1}|$.  Let $\Gamma_{\leq t-1} = \Gamma_1 \cup \cdots \cup \Gamma_{t-1}$.  $\Gamma_{\leq t-1}$ are node-disjoint paths in $G_M$ from $B_{\leq t-1}$ to every player in $I_{\leq t-1}$ .  Obviously steps~\ref{step:collapse-2} and~\ref{step:collapse-4} do not affect $\Gamma_{\leq t-1}$.  Step~\ref{step:collapse-3} updates $M$, and may affect the paths in $G_M$.  The paths in $\Gamma_{\leq t-1}$ are node-disjoint from those in $\Gamma_t$, so after updating $M$ using $\Gamma_t$, the paths in $\Gamma_{\leq t-1}$ remain to be alternating paths with respect to the updated $M$.  Hence, after step 3, $\Gamma_{\leq t-1}$ is still a set of paths in $G_M$ from $B_{\leq t-1}$ to every player in $I_{\leq t-1}$.  It certifies that \[
		f_M(B_{\leq t-1}, I_{\leq t-1}) = |I_{\leq t-1}|.
	\]

	Consider invariant~2.  Since $L_t$ is going to be the topmost layer, we only need to show that these inequalities hold for $i \in [1, t-1]$ after steps~\ref{step:collapse-2}--\ref{step:collapse-4}.  Fix some $i \in [1, t-1]$.  By Lemma~\ref{lem:common-sol}, $G_M(B_{\leq i-1}, A_{i+1} \cup I_{\leq i})$ and $G_M(B_{\leq i}, A_{i+1} \cup I)$ share a common optimal solution that is node-disjoint from $\Gamma_{t}$.   Let $\Pi^*$ be that optimal solution.  By inductive hypothesis, we have $|\Pi^*| \geq d_{i+1}$.  $\Pi^*$ is not affected by step 2 because all its sinks belong to $A_{\leq i+1} \cup I_{\leq i}$. Similar to the proof of invariant~2, since $\Pi^*$ is node-disjoint from $\Gamma_t$, it is a set of paths in $G_M$ after step~3 updates $M$ using $\Gamma_t$. Thus, $\Pi^*$ certifies that 
	\[
		f_M(B_{\leq i}, A_{i+1} \cup I)\geq d_{i+1}
	\]
	at the end of step 3.  In step~4.1, the removal of edges from ${\cal A}_t$ may decrease the value of $f_{M}(B_{\leq t-1},A_{t} \cup I)$, but it does not change $f_{M}(B_{\leq i},A_{i+1} \cup I)$ for $i \in [1,t-2]$.  Suppose that $f_{M}(B_{\leq t-1},A_{t} \cup I)$ decreases after removing an edge $(p,D)$ from ${\cal A}_t$, that is, 
	\[
		f_{M}(B_{\leq t-1}, A_{t} \cup I)
		= f_{M}(B_{\leq t-1}, (A_{t} \setminus \{p\} ) \cup I) + 1.
	\] 
	Then when we reach step~4.2, Lemma~\ref{lem:submod} implies that 
		\[f_{M}(B_{\leq t-1},I \cup \{p\}) = f_{M}(B_{\leq t-1},I) + 1,\] 
	so step~4.2 will add $p$ to $I$.  Afterwards, $f_{M}(B_{\leq t-1},A_{t} \cup I)$ returns to its value prior to the removal of $(p,D)$ from ${\cal A}_t$.  As a result, invariant~2 holds after step~\ref{step:collapse-4}.

	Invariant 3 and 4 hold because {\sc Collapse} neither grows any ${\cal A}_i$ nor changes any $d_i$ and $z_i$ for $i \in [1,t]$.
\end{proof}
	
\subsection{Proof of Lemma~\ref{lem:non-collapse}}
	Consider (i).  By invariant~1 in Table~\ref{tb:invar}, $f_M(B_{\leq \ell-1}, I) = |I|$.  Hence, in the canonical decomposition of ${\cal I}$, we have $I_{\leq \ell-1} = I$.  If $|I_{\leq \ell-1}| = |I| > \mu |B_{\leq \ell -1}|$, by the pigeonhole principle, there exists an index $i \in[1,\ell-1]$ such that $|I_i| > \mu|B_i|$.  But then layer $L_i$ is collapsible, a contradiction.
		
	Consider (ii).  Assume to the contrary that there exists $i \in [1,\ell]$ such that $|A_{i}| <  z_i - \mu|B_{\leq i-1}|$.  Equivalently, $z_i > |A_{i}| + \mu|B_{\leq i-1}|$.  By invariants~2 and~4 in Table~\ref{tb:invar}, $f_M(B_{\leq i-1}, A_{i} \cup I) \geq d_{i} \geq z_i > |A_{i}| + \mu|B_{\leq i-1}|$.  Therefore, any optimal solution for $G_M(B_{\leq i-1},A_{i} \cup I)$ contains at least $\mu|B_{\leq i-1}| + 1$ node-disjoint paths from $B_{\leq i-1}$ to $I$.  It follows that $f_M(B_{\leq i-1},I) \geq \mu|B_{\leq i-1}| + 1$.  By the definition of canonical decomposition, $|I_{\leq i-1}| = f_M(B_{\leq i-1},I) \geq \mu|B_{\leq i-1}| + 1$. By the pigeonhole principle, there exists some $j \in [1,i-1]$ such that $|I_j| > \mu|B_j|$.  But then layer $L_j$ is collapsible, a contradiction.

\subsection{Proof of Claim~\ref{cl:residual-player}}
Consider (i).  Recall that $\Pi$ is an optimal solution for $G_M(B_{\leq \ell}, A_{\ell+1} \cup I)$.  The players in $B_{\leq \ell}\setminus \src_\Pi$ cannot be sinks of $\Pi$ because they have in-degree $0$ in $G_M$ and cannot be reached by paths not starting with them.  In $G_{M \oplus \Pi}$, the players in $\sink_{\Pi}$ have in-degree 0, and hence cannot be reached from the other players.  In particular, they cannot be reached from the players in $B_{\leq \ell}\setminus \src_\Pi$.  Let $p$ be an arbitrary player in $P$.  By definition, $p$ is reachable in $G_{M\oplus \Pi}$ from $B_{\leq \ell}\setminus \src_\Pi$, so $p \notin \sink_{\Pi}$.  Note that $\Pi$ is a feasible solution for $G_M(B_{\leq \ell}, A_{\ell+1} \cup I \cup \{p\})$, and $\Pi$ cannot be optimal by Lemma~\ref{lem:max-paths} because there is a path in $G_{M \oplus \Pi}$ from $B_{\leq \ell}\setminus \src_\Pi$ to $p \notin \sink_{\Pi}$.  In other words,
\[
		f_M(B_{\leq \ell}, A_{\ell+1} \cup I \cup \{p\}) > f_M(B_{\leq \ell}, A_{\ell+1} \cup I).
\]
So $p$ is addable.

(ii) is easy to prove because in $G_{M \oplus \Pi}$, a player has degree $0$ if it is not matched by $M\oplus\Pi$, and degree 1 otherwise.

To prove (iii), it suffices to show that every fat resource in $R^+_f$ must be matched by $M\oplus\Pi$.  Let $r_f$ be an arbitrary fat resource in $R^+_f$.  By definition, $r_f$ is reachable in $G_{M\oplus \Pi}$ from some player $p^*$ in $B_{\leq \ell}\setminus {\mathit src}_\Pi$.  Note that $p^*$ is not matched by $M\oplus \Pi$.  If $r_f$ is not matched neither, then the path between $p^*$ and $r_f$ would be an alternating path from an unmatched player to an unmatched fat resource, which can be used to increase the size of $M\oplus \Pi$~\cite{HK73}.  But $M \oplus \Pi$ is already a maximum matching of $G$, contradiction.  Hence, $r_f$ must be matched.  This completes the proof.
\section{Integrality Gap: Proof of Theorem~\ref{thm:gap}}
\label{apd:gap}

We present a tighter analysis for the local search algorithm in~\cite{AFS12}.  We shows that an allocation can be computed such that every player receives at least $\lambda = \frac{26}{99}$ worth of resources.  Recall that the optimal value $T^*$ of the configuration LP is assumed to be $1$.  This proves that the integrality gap is at most $\frac{99}{26}\approx 3.808$.  The computation time is not known to be polynomial though.

\subsection{The local search algorithm}
\label{sec:afs-alg}

We present the algorithm in a way slightly different from that in~\cite{AFS12}, in order to show the similarity between this algorithm and the approximation algorithm in Section~\ref{sec:approx}.  Let $M$ and ${\cal E}$ be the current maximum matching of $G$ and the current set of thin edges maintained by the algorithm, respectively.  Let $p_0$ be a player not yet covered by $M \cup {\cal E}$.

The algorithm maintains a stack of tuples $\Sigma = [(a_1,{\cal B}_1), (a_2,{\cal B}_2), \cdots]$, where $a_i$ is an addable edge and ${\cal B}_i$ is the set of blocking edges of $a_i$.  We will give the definitions of addable edges and blocking edges soon.  For $i < j$, $(a_i, {\cal B}_i)$ is pushed into $\Sigma$ before $(a_j, {\cal B}_j)$.   For simplicity, we define $a_1 = \mathit{null}$ and ${\cal B}_1 = \{(p_0, \emptyset)\}$.  We use $\ell$ to denote the length of $\Sigma$.  We use $R(\Sigma)$ to denote the set of thin resources covered by $\{a_1, \ldots, a_{\ell}\} \cup {\cal B}_1 \cdots \cup {\cal B}_{\ell}$.  For $i \in [1, \ell]$, $B_i$ denotes the set of players covered by ${\cal B}_i$, ${\cal B}_{\leq i}$ denotes $\bigcup_{j = 1}^{i} {\cal B}_j$, and $B_{\leq i}$ denotes $\bigcup_{j = 1}^{i} B_j$.

The stack $\Sigma$ is built inductively.  Initially, $\Sigma = [(a_1, {\cal B}_1)]$.  Consider the construction of $(a_{\ell+1}, {\cal B}_{\ell+1})$.  Recall that $G_M$ is the directed graph obtained from $G$ by orienting edges in $G$ from $r_f$ to $p$ if $\{p, r_f\}\in M$ and $p$ to $r_f$ otherwise. Also recall that, for any $w > 0$, a thin edge $(p, D)$ is \emph{$w$-minimal} if $v[D] \geq w$ and $v[D'] < w$ for any $D' \subsetneq D$. 

\begin{definition}
	Given the current partial allocation $M\cup {\cal E}$ and the current stack $\Sigma = [(a_1, {\cal B}_1), \ldots, (a_\ell, {\cal B}_{\ell})]$, a player $p$ is \textbf{addable} if, in $G_M$, there is a path to $p$ from some player in $B_{\leq \ell}$. 
\end{definition}

\begin{definition}
	Given the current partial allocation $M \cup {\cal E}$ and the current stack $\Sigma = [(a_1, {\cal B}_1), \ldots, (a_\ell, {\cal B}_{\ell})]$, a thin edge $(p, D)$ is \textbf{addable} if (i)~$p$ is addable, (ii)~$D \cap R(\Sigma) = \emptyset$, and (iii)~$(p, D)$ is $\lambda$-minimal.  For an addable edge $a$, an edge $b$ in $\cal E$ is a \textbf{blocking} edge of $a$ if $b$ shares some common resource with $a$.  If an addable edge has no blocking edge, it is {unblocked}; otherwise, it is {blocked}.
\end{definition}

The construction of $(a_{\ell+1},{\cal B}_{\ell+1})$ is specified in the following routine {\sc Build}.
\begin{quote}
{\sc Build}$\left(M, {\cal E}, \Sigma, \ell\right)$
\begin{enumerate}
\item 	Arbitrarily pick an addable edge $a_{\ell+1}$.
\item 	${\cal B}_{\ell + 1} := \{\text{$e \in {\cal E}$ : $e$ is a blocking edge of $a_{\ell+1}$}\}$.
\item 	Append $(a_{\ell+1}, {\cal B}_{\ell + 1})$ to $\Sigma$. Set $\ell := \ell + 1$.
\end{enumerate}
\end{quote}

Once some addable edge $a_i$ in $\Sigma$ is unblocked, i.e., ${\cal B}_i = \emptyset$, the following routine is invoked to update $\Sigma$, $M$, and $\cal E$.
\begin{quote}
{\sc Contract}$\left(M, {\cal E}, \Sigma, \ell\right)$
\begin{enumerate}
\item 	
	Let $a^*$ be the unblocked addable edge with the smallest index. 
	Let $p_{a^*}$ be the player covered by $a^*$.

\item   
	Let $t$ be the smallest index such that, in $G_M$, there is a path $\pi$ to $p_{a^*}$ from a player $p_t$ covered by some blocking edge $b_t$ in ${\cal B}_t$. Note that $\pi$ is an alternating path with respect to $M$.~\label{step:contract-2}

\item
	Delete all the tuples $(a_i, {\cal B}_i)$'s with $i > t$. Set $\ell := t$.

\item Update $M$ and ${\cal E}$ as the following.
	\begin{enumerate}[{4.}1]
		\item Update $M$ using $\pi$, i.e., $M := M \oplus \pi$,
		\item add $a^*$ to ${\cal E}$ and release $b_t$, i.e., ${\cal E} : = \left({\cal E} \setminus \{b_t\}\right) \cup \{a^*\}$.
	\end{enumerate}

\item 	
	If $t = 1$, step~3 already matches $p_0$, so the algorithm terminates.  If $t \geq 2$, as $b_t$ is released from ${\cal E}$, it is no longer a blocking edge. Set ${\cal B}_t := {\cal B}_t \setminus \{b_t\}$.
\end{enumerate}
\end{quote}

The algorithm keeps calling {\sc Contract} until no unblocked addable edge remains in $\Sigma$.  Then it calls {\sc Build} to grow the stack $\Sigma$ again.  It alternates between calling {\sc Build} and {\sc Contract} until $p_0$ is covered.

\subsection{Analysis}

We prove an invariant maintained by the algorithm. It is implicitly used in step~\ref{step:contract-2} of {\sc Contract} to guarantee that $t$ always exists.
\begin{lemma}
\label{lem:gap-invar}
Let $M \cup {\cal E}$ be the current partial allocation.  Let $\Sigma = [(a_1, {\cal B}_1), \ldots, (a_\ell, {\cal B}_\ell)]$ be the current stack.  For all $i \in [2, \ell]$, there is always a path in $G_M$ that goes from some player in $B_{\leq i-1}$ to the player covered by $a_{i}$.
\end{lemma}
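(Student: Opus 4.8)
The plan is to prove the invariant by induction on the sequence of operations ({\sc Build} and {\sc Contract}) performed by the algorithm, showing that each operation preserves the property. The base case is trivial: when $\Sigma = [(a_1,{\cal B}_1)]$, there is nothing to check since the range $[2,\ell]$ is empty. So assume the invariant holds for the current stack, and consider what happens after one call to {\sc Build} or {\sc Contract}.

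First I would handle {\sc Build}. The only new obligation is the index $i = \ell+1$, so I need to exhibit a path in $G_M$ from some player in $B_{\leq \ell}$ to the player $p$ covered by $a_{\ell+1}$. But $a_{\ell+1}$ was chosen as an addable edge, and by Definition of an addable thin edge, this requires $p$ to be an addable player, which by Definition means precisely that $G_M$ has a path to $p$ from some player in $B_{\leq \ell}$. For the indices $i \in [2,\ell]$ that were already present, {\sc Build} does not modify $M$ or any of the tuples $(a_j,{\cal B}_j)$ with $j \le \ell$, so the paths guaranteed by the inductive hypothesis are untouched. Hence the invariant is preserved by {\sc Build}.

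Next I would handle {\sc Contract}, which is the main obstacle, since it both truncates the stack (to length $t$) and modifies the matching via $M := M \oplus \pi$, and it also shrinks ${\cal B}_t$ by removing $b_t$. After truncation the stack is $[(a_1,{\cal B}_1),\ldots,(a_t,{\cal B}_t)]$, so I only need the invariant for $i \in [2,t]$. Fix such an $i$. By the inductive hypothesis there was, before the call, a path $\rho$ in $G_{M}$ from some player $q \in B_{\leq i-1}$ to the player covered by $a_i$. I want to argue this path survives the update $M \mapsto M \oplus \pi$. The key observation is that $\pi$ is, by the choice of $t$ in step~\ref{step:contract-2}, a path that \emph{starts} at a player $p_t \in B_t$ and reaches $p_{a^*}$; and the minimality of $t$ means no such path starts from any player in $B_{\leq t-1}$. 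I would use this minimality to control how $\pi$ can intersect $\rho$: since $\rho$ terminates at the player covered by $a_i$ (which has index $\le t$) and originates in $B_{\leq i-1} \subseteq B_{\leq t-1}$, any interaction between $\pi$ and $\rho$ that would let one splice a path from $B_{\leq t-1}$ to $p_{a^*}$ contradicts the minimality of $t$. Concretely, I would show $\pi$ and $\rho$ are node-disjoint (or at least that $\rho$ remains an alternating path after the flip): if they shared a node, one could follow $\rho$ from $B_{\leq i-1}$ to that node and then switch onto $\pi$ toward $p_{a^*}$, yielding an alternating path to $p_{a^*}$ from $B_{\leq i-1} \subseteq B_{\leq t-1}$ with origin index strictly less than $t$, contradicting the choice of $t$. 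Node-disjointness then implies $\rho$ is still an alternating path with respect to $M \oplus \pi$, i.e., still a path in $G_{M\oplus\pi}$, from $B_{\leq i-1}$ to the player covered by $a_i$. Finally, removing $b_t$ from ${\cal B}_t$ only shrinks $B_t$, hence can only shrink $B_{\leq i-1}$ for $i > t$ (irrelevant after truncation) and does not affect $B_{\leq i-1}$ for $i \le t$ because $b_t \in {\cal B}_t$ and $i-1 < t$ for the relevant range — wait, for $i = t$ we need $B_{\leq t-1}$, which does not involve ${\cal B}_t$ at all, so the path from $B_{\leq t-1}$ to the player covered by $a_t$ is unaffected by the removal of $b_t$. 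This closes the induction.

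I expect the delicate point to be the node-disjointness argument in the {\sc Contract} case: one must be careful that $\rho$ is a directed path in $G_M$ (an $M$-alternating path) and that the minimality of $t$ is exactly strong enough to forbid the bad splicing, using that the endpoint of $\rho$ has stack-index $i \le t$ and so lies "below or at" the level where the flip happens. Everything else is bookkeeping about which sets $B_{\leq i-1}$ and which tuples are touched by each routine.
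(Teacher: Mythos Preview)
Your proposal is correct and follows essentially the same approach as the paper: induction on the algorithm's operations, with {\sc Build} handled directly from the definition of addable, and {\sc Contract} handled by the node-disjointness argument (if a path $\rho$ from $B_{\leq i-1}$ met $\pi$, splicing would give a path from $B_{\leq t-1}$ to $p_{a^*}$, contradicting the minimality of $t$), so that $\rho$ survives the flip $M \mapsto M\oplus\pi$. The paper phrases the {\sc Contract} step slightly more generally (any path starting in $B_{\leq t-1}$ is node-disjoint from $\pi$), but the content is the same.
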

\begin{proof}
	We prove the lemma by induction.  Initially, $\Sigma = \{(a_1, {\cal B}_1)\}$ and $\ell = 1$.  So the invariant trivially holds.
	
	We show that {\sc Build} preserves the invariant.  Let $(a_{\ell + 1}, {\cal B}_{\ell + 1})$ be the tuple newly constructed by {\sc Build}.  For $i \in [1, \ell]$, since {\sc Build} does not change the first $\ell$ tuples in $\Sigma$ nor $M \cup {\cal E}$, by the inductive hypothesis, the player covered by $a_{i}$ is always reachable from some player in $B_{\leq i-1}$.  The player covered by $a_{\ell + 1}$ must be addable. By definition, there is a path in $G_M$ that goes from some player in $B_{\leq \ell}$ to the player covered by $a_{i}$.

	We show that {\sc Contract} also preserves the lemma.  Since all the tuples with index greater than $t$ are deleted in step~3, we only need to verify the invariant for the remaining $t$ tuples.  Let $p_{a^*}$, $t$, $p_t$, $\pi$ be defined as in the description of {\sc Contract}.  By our choice of $t$, players in $B_{\leq t-1}$ cannot reach $p_{a^*}$ by any path in $G_M$.  Hence, any path $\pi'$ starting with some player in $B_{\leq t-1}$ must be node-disjoint from $\pi$, since otherwise, we can find a path from $B_{\leq t-1}$ to $p_{a^*}$ by first following $\pi'$, switching at the common node of $\pi'$ and $\pi$, and then following $\pi$.  Therefore, paths originating from $B_{\leq t-1}$ are not affected by the operation $M\oplus \pi$.
\end{proof}

Lemma~\ref{lem:gap-nonstuck} below shows that the algorithm never gets stuck.  Recall that $R(\Sigma)$ is the set of thin resources covered by the thin edges in $\Sigma$.  Also recall that the restricted max-min allocation problem can be modeled as a configuration LP.  Consider the dual of the configuration LP.  Intuitively, we show that if the lemma does not hold, then some lower bound for the total dual value of the thin resources in $R(\Sigma)$ would exceed its upper bound, which is a contradiction.  Asadpour et al.~\cite{AFS12} set the dual value of a thin resource $r$ to be $v_r$, and then used a worst-case upper bound and a worst-case lower bound for the total value of the resources in $R(\Sigma)$.  Their proof works only for $\lambda \leq \frac{1}{4}$.  We observe that the worst-case upper bound and the worst-case lower bound used in~\cite{AFS12} cannot occur simultaneously.  More specifically, the worst-case upper bound occurs only when all the thin resources have values nearly $\lambda$, while the worst-case lower bound occurs only when all the thin resources have values nearly $0$.  Our approach is to magnifying the dual value of the thin resources with small $v_r$.  It helps us to derive a better lower bound without deteriorating the upper bound.  Hence, we can obtain a proof working for a larger $\lambda$.

\begin{lemma}
\label{lem:gap-nonstuck}
	Let $M \cup {\cal E}$ be the current partial allocation.  Let $\Sigma = [(a_1, {\cal B}_1), \ldots, (a_\ell, {\cal B}_\ell)]$ be the current stack.  If $\Sigma$ is non-empty, either some addable edge in $\Sigma$ is unblocked or there is an addable edge to be added to $\Sigma$.
\end{lemma}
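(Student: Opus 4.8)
The plan is to argue by contradiction: assume $\Sigma$ is non-empty, no addable edge in $\Sigma$ is unblocked, and yet there is no addable edge that can be added to $\Sigma$. From this I will build a feasible dual solution for $\mathit{CLP}(1)$ with a strictly positive objective value, which (as in the proof of Lemma~\ref{lem:key1}) can be scaled to make the dual unbounded, contradicting $T^* = 1$. This mirrors the structure of Claim~\ref{cl:approx-dual}, but with the twist described in the text: the dual value assigned to a thin resource $r$ should be an inflated version of $v_r$ (something like $\max\{v_r, c\cdot v_r\}$ or a piecewise function that is $v_r$ for large $v_r$ and boosted for small $v_r$), so that the lower bound on $\sum_{r\in R(\Sigma)} z^*_r$ improves while the configuration-feasibility upper bound $y^*_p \le \sum_{r\in C} z^*_r$ is not destroyed.

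First I would set up the reachable sets. Let $P^+$ be the set of players reachable in $G_M$ from $B_{\leq \ell}$, let $R^+_f$ be the fat resources reachable from $B_{\leq \ell}$ in $G_M$, and let $R^+_t = R(\Sigma)$ be the thin resources covered by the edges currently in the stack. Since no addable edge is unblocked and no new addable edge exists, every player in $P^+$ is addable but cannot be the endpoint of any $\lambda$-minimal thin edge using resources outside $R(\Sigma)$; hence for such a player $p$, all of $p$'s desired thin resources lying outside $R(\Sigma)$ have total value $< \lambda$. Set $y^*_p$ to a positive constant $\theta$ (to be determined) for $p\in P^+$ and $0$ otherwise; set $z^*_r = \theta$ for $r\in R^+_f$, $z^*_r = \phi(v_r)$ for $r\in R^+_t$ where $\phi$ is the inflated value function, and $0$ elsewhere.

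The key steps are then: (1) \emph{Feasibility.} For $p\in P^+$ and any configuration $C$, split into the case where $C$ contains a fat resource — which must lie in $R^+_f$ by the usual reachability argument using that $M$ is a maximum matching and players in $B_{\leq\ell}$ are unmatched — giving $\sum_{r\in C} z^*_r \ge \theta = y^*_p$; and the case where $C$ is all thin, where at least $1-\lambda$ worth of $C$ lies inside $R(\Sigma)$, so $\sum_{r\in C} z^*_r \ge \phi$-weighted mass $\ge (1-\lambda)$ times the minimum inflation factor, and we need this $\ge \theta$. (2) \emph{Positive objective.} Bound $|P^+| - |R^+_f| \ge |B_{\leq\ell}|$ by the charging argument (each reachable fat resource is matched, charge it along its unique out-edge to a reachable player, players in $B_{\leq\ell}$ are unmatched hence uncharged), and bound $\sum_{r\in R^+_t} \phi(v_r)$ from above using the structure of the stack: each $a_i$ is $\lambda$-minimal so covers $<2\lambda$ worth, each blocking edge in ${\cal B}_i$ is $\lambda$-minimal so covers $<2\lambda$ worth, and crucially each $a_i$ with $i\ge 2$ is blocked so has $<\lambda$ of its value free of $\bigcup{\cal B}_j$ — this double-counting is what yields a bound like $v[R(\Sigma)] < 3\lambda|B_{\leq\ell}|$ in the $v_r\approx\lambda$ regime, but with the inflation $\phi$ the worst case shifts, and we must verify the combined choice of $\phi$, $\theta$, and $\lambda = \frac{26}{99}$ still makes $\sum_p y^*_p - \sum_r z^*_r > 0$.

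The main obstacle is choosing the inflation function $\phi$ and re-running the upper-bound bookkeeping of step (2) with it. The observation to exploit is that the worst case for the upper bound (all thin resources $\approx \lambda$) and the worst case for the feasibility constraint / lower bound (all thin resources $\approx 0$) are disjoint, so a carefully designed $\phi$ — most likely piecewise linear, equal to $v_r$ above some threshold and scaled up below it — interpolates so that neither extreme becomes tight simultaneously. Getting the exact constant $\lambda = \frac{26}{99}$ out requires optimizing the threshold of $\phi$ against the coefficients $3\lambda$ (from $R(\Sigma)$'s upper bound) and the feasibility slack $1-\lambda$; I expect this to reduce to a small LP or a one-variable optimization whose optimum is $\frac{26}{99}$. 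Everything else — reachability, the charging argument, maximum-matching properties — is routine and parallels the arguments already in the paper.
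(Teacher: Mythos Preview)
Your overall architecture matches the paper's: contradiction via an unbounded dual, reachable sets $P^+$, $R^+_f$, thin resources $R(\Sigma)$, and an inflated weight $\phi(v_r)$ on thin resources to decouple the two worst cases. But two of your sketched steps are too crude to reach $\lambda = \tfrac{26}{99}$, and the form of $\phi$ is not what you guess.

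For feasibility in the all-thin case you write $\sum_{r\in C} z^*_r \geq (1-\lambda)\cdot(\text{min inflation factor})$. Any useful $\phi$ must have inflation factor $\to 1$ as $v_r \to \lambda$ (a $\lambda$-minimal edge can consist of two resources each $\approx\lambda$, so you cannot inflate such resources without wrecking the upper bound). Your inequality then only yields $\theta \leq 1-\lambda$, which combined with the upper bound $\approx 3\lambda|B_{\leq\ell}|$ forces $\lambda < \tfrac14$, i.e.\ no improvement over AFS. The paper instead sets $\theta = 1-\tfrac{21}{26}\lambda = 3\lambda$ and proves feasibility by a case split on how many resources of value $\geq\tfrac{3\lambda}{4}$ lie in $C\cap R(\Sigma)$ (the cases $0,1,2,\geq 3$); the two-large-resource case is tight and needs a further minimization over how the remaining $\tfrac{21}{26}\lambda$ of value splits among smaller resources.

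For the upper bound on $\sum_{r\in R(\Sigma)} z^*_r$, the paper does not bound $v[R(\Sigma)]$ and multiply. It proves a per-edge claim: for any $\lambda$-minimal edge $e$ in $\Sigma$ with least-value resource $r_0$, $\sum_{r\in R_e} z^*_r \leq \tfrac{3\lambda}{2} + \tfrac12 z^*_{r_0}$. Since every edge in $\{a_i\}\cup{\cal B}_i$ shares at least one resource with another edge in that set (addable edges are blocked, blocking edges block), half-charging the shared resource cancels the $\tfrac12 z^*_{r_0}$ term, leaving $\tfrac{3\lambda}{2}$ per edge and hence $\leq 3\lambda(|B_{\leq\ell}|-1)$ overall. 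The function $\phi$ is a three-piece \emph{rational} function ($\tfrac{3\lambda v}{2\lambda+v}$ on $(0,\tfrac\lambda2)$, $\tfrac{3\lambda v}{3\lambda-v}$ on $[\tfrac\lambda2,\tfrac{3\lambda}{4})$, constant $\lambda$ on $[\tfrac{3\lambda}{4},\lambda)$), engineered so that this per-edge bound holds tightly across regimes; a piecewise-linear $\phi$ will not reproduce the constant $\tfrac{26}{99}$.
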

\begin{proof}
	We prove by contradiction.  Assume that all addable edges in $\Sigma$ are blocked, i.e., $|{\cal B}_i| \geq 1$ for all $i\in [1, \ell]$, and that there is no addable edge to be added to $\Sigma$.  Recall that the optimal value of the configuration LP is assumed to be $1$.  We show that the dual of the configuration LP $\mathit{CLP}(1)$ is unbounded, which implies the contradiction that $\mathit{CLP}(1)$ is infeasible.

	Consider the following solution for the dual of $\mathit{CLP}(1)$. Let $P^+$ be the set of players that are reachable in $G_M$ from some player in $B_{\leq \ell}$.  Let $R^+_f$ be the set of fat resources that are reachable in $G_M$ from some player in $B_{\leq \ell}$.  Recall that $R(\Sigma)$ is the set of thin resources covered by thin edges in $\Sigma$.  We set $\lambda = \frac{26}{99}$.  For every $p\in P$ and every $r\in R$, set the dual variable $y^*_p$ and $z^*_r$ as follows.

	\begin{alignat*}{4}
		y^*_p &= \left\{
				\def\arraystretch{1.5}
				\begin{array}{ll}
					1 - \frac{21}{26}\lambda &\text{if $p \in P^+$,}\\
					0 &\text{otherwise.}
				\end{array}
		\right. \qquad
		&z^*_r &= \left\{
				\def\arraystretch{1.5}
				\begin{array}{ll}
					1 - \frac{21}{26}\lambda 
						&\text{if $r\in R^+_f$,}\\
					\frac{3\lambda}{2\lambda + v_r}v_r 
						&\text{if $r \in R(\Sigma)$ and $v_r \in (0,\frac{\lambda}{2})$}\\
					\frac{3\lambda}{3\lambda - v_r}v_r 
						&\text{if $r \in R(\Sigma)$ and $v_r \in [\frac{\lambda}{2},\frac{3\lambda}{4})$}\\
					\lambda 
						&\text{if $r \in R(\Sigma)$ and $v_r \in [\frac{3\lambda}{4},\lambda)$}\\
					0
						&\text{otherwise.}
				\end{array}
				\right.
	\end{alignat*}

	Figure~\ref{fig:zr} plots $z^*_r$ and $z^*_r/v_r$ versus $v_r$ for a thin resource $r$ in $R(\Sigma)$.  In future analysis, we will draw some conclusions directly from Figure~\ref{fig:zr} without giving a formal proof.  One can verify these conclusions by easy numeric calculation.

	\begin{claim}[Feasibility]
		\label{cl:gap-dual-feasible}
		$y^*_p \leq \sum_{r \in C}z^*_r$ for any $p\in P$ and any $C \in {\cal C}_p(1)$.
	\end{claim}

	\begin{claim}[Positive Objective Function Value]
		\label{cl:gap-dual-positive}
		$\sum_{p\in P}y^*_p - \sum_{r\in R}z^*_r > 0$.
	\end{claim}

	Suppose that Claims~\ref{cl:gap-dual-feasible} and~\ref{cl:gap-dual-positive} hold. $(\{y^*_p\}_{p\in P}, \{z^*_r\}_{r\in R})$ is a feasible solution for the dual, so is $(\{\alpha y^*_p\}_{p\in P}, \{\alpha z^*_r\}_{r\in R})$ for any $\alpha > 0$.  As $\alpha$ goes to infinity, the objective function value goes to infinity.  Therefore, the dual is unbounded, a contradiction.
\end{proof}

\begin{figure}
	\centering
	\begin{subfigure}[b]{0.45\linewidth}
		\centering
		\begin{tikzpicture}[xscale = 4, yscale = 2]
			\draw [<->] (0,1.3) -- (0,0) -- (1.3,0);
			\draw[black, thick, domain=0:0.5] plot (\x, {3* \x /(2 + \x)});
			\draw[black, thick, domain=0.5:0.75] plot (\x, {3* \x /(3 - \x)});
			\draw[black, thick, domain=0.75:1] plot (\x, 1);

			\node [left] at (0, 1.3) {$z^*_r$};
			\node [below right] at (1.3, 0) {$v_r$};
			\node [below left] at (0,0) {$0$};

			\draw [dashed] (0,0.6) -- (0.5,0.6) -- (0.5,0);
			\node [left] at (0, 0.6) {$\frac{3}{5}\lambda$};
			\node [below] at (0.5, 0) {$\frac{1}{2}\lambda$};

			\draw [dashed] (0,1) -- (0.75,1) -- (0.75,0);
			\node [left] at (0, 1) {$\lambda$};
			\node [below] at (0.75, 0) {$\frac{3}{4}\lambda$};
		\end{tikzpicture}
		\caption{$z^*_r$ versus $v_r$}
		\label{fig:z-v}
	\end{subfigure}
	\begin{subfigure}[b]{0.45\linewidth}
		\centering
		\begin{tikzpicture}[xscale = 4, yscale = 3.5]
			\draw [<->] (0,1.7) -- (0,0.9) -- (1.3,0.9);
			\draw[black, thick, domain=0:0.5] plot (\x, {3 /(2 + \x)});
			\draw[black, thick, domain=0.5:0.75] plot (\x, {3 /(3 - \x)});
			\draw[black, thick, domain=0.75:1] plot (\x, 1/\x);

			\node [left] at (0, 1.7) {$\frac{z^*_r}{v_r}$};
			\node [below right] at (1.3, 0.9) {$v_r$};
			\node [below left] at (0,0.9) {$0$};

			\node [left] at (0, 1.5) {$\frac{3}{2}$};

			\draw [dashed] (0,1.2) -- (0.5,1.2) -- (0.5,0.9);
			\node [left] at (0, 1.2) {$\frac{6}{5}$};
			\node [below] at (0.5, 0.9) {$\frac{1}{2}\lambda$};

			\draw [dashed] (0,4/3) -- (0.75,4/3) -- (0.75,0.9);
			\node [left] at (0, 4/3) {$\frac{4}{3}$};
			\node [below] at (0.75, 0.9) {$\frac{3}{4}\lambda$};

			\draw [dashed] (0,1) -- (1,1) -- (1,0.9);
			\node [left] at (0, 1) {$1$};
			\node [below] at (1, 0.9) {$\lambda$};
		\end{tikzpicture}
		\caption{$z^*_r/v_r$ versus $v_r$}
		\label{fig:z-v-v}
	\end{subfigure}
	\caption{Dual values for resources in $R(\Sigma)$}
	\label{fig:zr}
\end{figure}
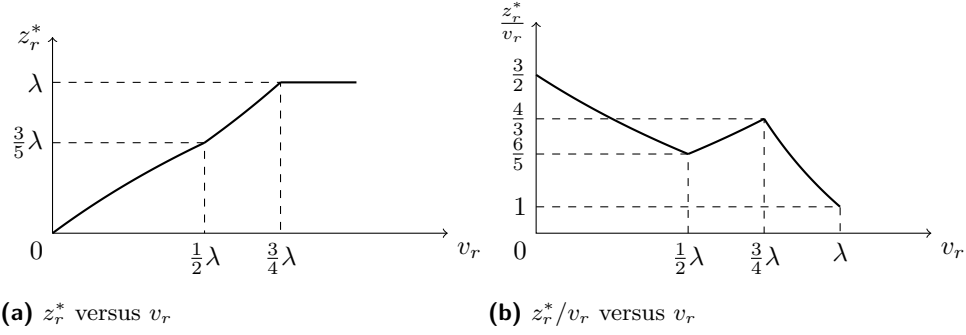

We give the proofs of Claims~\ref{cl:gap-dual-feasible} and~\ref{cl:gap-dual-positive} below.

\begin{claimproof}[Proof of Claim~\ref{cl:gap-dual-feasible}]
	We need to prove that $y^*_p \leq \sum_{r \in C}z^*_r$ for any $p\in P$ and any $C \in {\cal C}_p(1)$.  Consider any player $p \in P$ and any configuration $C \in {\cal C}_p(1)$.  If $p \notin P^+$, then $y^*_p = 0$, and the inequality holds because $z^*_r$ is non-negative.  Assume that $p \in P^+$. So $y^*_p = 1 - \frac{21}{26}\lambda$.  We prove that $\sum_{r \in C}z^*_r \geq 1 - \frac{21}{26}\lambda$ by a case analysis.
	
	\emph{Case 1}. $C$ contains a fat resource $r_f$.  Since $p \in P^+$, $p$ is reachable in $G_M$ from some player in $B_{\leq \ell}$.  Since player $p$ desires $r_f$, $G_M$ contains either an edge from $p$ to $r_f$ or an edge from $r_f$ to $p$.  In the former case, obviously $r_f$ is reachable in $G_M$ from some player in $B_{\leq \ell}$ .  In the latter case, $p$ must be a player matched to $r_f$ by $M$, so $p \notin B_{\leq \ell}$ as players in $B_{\leq \ell}$ are not matched by $M$.  Moreover, the edge from $r_f$ to $p$ is the only edge entering $p$ in $G_M$.  Any path entering $p$ must go though $r$.  Therefore, $r_f$ must be reachable from some player in $B_{\leq \ell}$.  In either case, $r_f \in R^+_f$. Therefore, $z^*_{r_f} = 1 -\frac{21}{26} \lambda$.  We have 
	\[
		\sum_{r \in C}z^*_r \geq z^*_{r_f} \geq 1 - \frac{21}{26}\lambda.
	\]

	\emph{Case 2}. $C$ contains only thin resources.  Since $p \in P^+$, $p$ is reachable from some player in $B_{\leq \ell}$.  So $p$ is an addable player.  However, by our assumption, no more addable edge can be added to $\Sigma$.  Thus, the total value of thin resources in $C\setminus R(\Sigma)$ must be less than $\lambda$, since otherwise there would be an addable edge formed by $p$ and the thin resources in $C\setminus R(\Sigma)$.  Given $\lambda = \frac{26}{99}$,
	\begin{equation}
		\sum_{r \in C\cap R(\Sigma)}v_r \geq \sum_{r\in C}v_r - \sum_{r \in C\setminus R(\Sigma)}v_r > 1 - \lambda = \frac{73}{26}\lambda.
		\label{eq:gap-0}
	\end{equation}
	We prove that 
	\[
		\sum_{r \in C\cap R(\Sigma)}z^*_r \geq  3\lambda \stackrel{(\because \lambda = 26/99)}{=} 1 - \frac{21}{26}\lambda.
	\]
	by examining subcases 2.1 -- 2.4 below.

	\emph{Case 2.1}.  $C\cap R(\Sigma)$ contains at least three resources with their values in $[\frac{3\lambda}{4},\lambda)$.  Denote these three resources as $r_1$, $r_2$, and $r_3$.  We have that  
		\[
			\sum_{r \in C\cap R(\Sigma)}z^*_r  \geq z^*_{r_1} + z^*_{r_2} + z^*_{r_3}
			= 3\lambda.
		\]

	\emph{Case 2.2}. $C\cap R(\Sigma)$ contains exactly one resource with value in $[\frac{3\lambda}{4},\lambda)$.  Denote this resource as $r_1$.  Let $R' = (C \cap R(\Sigma)) \setminus \{r_1\}$.  
		\begin{equation}
			\sum_{r\in R'}v_r = \sum_{r\in C\cap R(\Sigma)}v_r - v_{r_1} \stackrel{\eqref{eq:gap-0}}{>} \frac{73}{26}\lambda - \lambda = \frac{47}{26}\lambda.
			\label{eq:gap-1}
		\end{equation}
		Every resource in $R'$ has value in the range $(0, \frac{3}{4}\lambda)$.  As illustrated in Figure~\ref{fig:zr}(b), $\frac{z^*_r}{v_r} \geq \frac{6}{5}$ for every $r \in R'$.  Therefore,
		\[
			\sum_{r\in R'}z^*_r \geq \frac{6}{5}\sum_{r\in R'}v_r \stackrel{\eqref{eq:gap-1}}{>} \frac{141}{65}\lambda.
		\]
		Then,
		\[
			\sum_{r \in C\cap R(\Sigma)}z^*_r = \sum_{r\in R'}z^*_r + z^*_{r_1} > \frac{151}{65}\lambda  + \lambda > 3\lambda
		\]

	\emph{Case 2.3}. $C\cap R(\Sigma)$ contains no resource with value in $[\frac{3\lambda}{4},\lambda)$.  As illustrated in Figure~\ref{fig:zr}(b),  $\frac{z^*_r}{v_r} \geq \frac{6}{5}$ for every $r$ with $v_r \in (0, \frac{3\lambda}{4})$.  Then,
		\[
			\sum_{r\in C\cap R(\Sigma)}z^*_r \geq \frac{6}{5}\sum_{r\in C\cap R(\Sigma)}v_r \stackrel{\eqref{eq:gap-0}}{>}\frac{6}{5}\cdot \frac{73}{26}\lambda > 3\lambda.
		\]

	\emph{Case 2.4}.  The only remaining case is that $C\cap R(\Sigma)$ contains exactly two resources with values in $[\frac{3\lambda}{4},\lambda)$.  Denote these two resources as $r_1$ and $r_2$. $r_1$ and $r_2$ together contribute $2\lambda$ to the sum $\sum_{r \in C \cap R(\Sigma)} z^*_r$.  Let $R' = (C \cap R(\Sigma)) \setminus \{r_1,r_2\}$.  To prove that $\sum_{r \in C\cap R(\Sigma)}z^*_r \geq  3\lambda$, it suffices to show that 
		\[
			\sum_{r\in R'}z^*_r \geq \lambda.
		\] 

	Let $V_0$ denote the multi-set of values of thin resources in $R'$.  Note that $v \in (0, \frac{3}{4}\lambda)$ for any $v \in V_0$, and that  
	\[
		\sum_{v \in V_0} v = \sum_{r\in R'}v_r = \left(\sum_{r \in C \cap R(\Sigma)} v_r \right) - v_{r_1}-v_{r_2}  \stackrel{\eqref{eq:gap-0}}{>} \frac{73}{26}\lambda - 2\lambda = \frac{21}{26}\lambda.
	\]
	Let $g(v) = \frac{3\lambda}{2\lambda + v}v$. Let $h(v) = \frac{3\lambda}{3\lambda - v}v$.  We have 
	\[
		\sum_{r \in R'}z^*_r = \sum_{v \in V_0 \cap (0,\frac{\lambda}{2})} g(v) + \sum_{v \in V_0 \cap [\frac{\lambda}{2}, \frac{3\lambda}{4})} h(v).
	\]  

	To derive a lower bound,  we will transform $V_0$ step by step to another multiset $V_2$ of values such that 
		\begin{enumerate}[(1)]
			\item $\sum_{v \in V_2 \cap (0,\frac{\lambda}{2})} g(v) + \sum_{v \in V_2 \cap [\frac{\lambda}{2}, \frac{3\lambda}{4})} h(v) \leq \sum_{v \in V_0 \cap (0,\frac{\lambda}{2})} g(v) + \sum_{v \in V_0 \cap [\frac{\lambda}{2}, \frac{3\lambda}{4})} h(v)$,
			\item $\sum_{v \in V_2} v= \frac{21\lambda}{26}$, and
			\item $V_2$ contains exactly two values, one belonging to $(0, \frac{\lambda}{2})$ and the other belonging to $[\frac{\lambda}{2}, \frac{3\lambda}{4})$. 
		\end{enumerate}  
	Then we give a lower bound for $\sum_{v \in V_2 \cap (0,\frac{\lambda}{2})} g(v) + \sum_{v \in V_2 \cap [\frac{\lambda}{2}, \frac{3\lambda}{4})} h(v)$.

	As illustrated in Figure~\ref{fig:zr}(a), both $g(v)$ and $h(v)$ are increasing functions of $v$.  Hence, if we decrease the values in $V_0$ to some smaller positive values, $\sum_{v \in V_0 \cap (0,\frac{\lambda}{2})} g(v) + \sum_{v \in V_0 \cap [\frac{\lambda}{2}, \frac{3\lambda}{4})} h(v)$ does not increase.  We keep decreasing the values in $V_0$ in an arbitrary fashion until $\sum_{v \in V_0} v= \frac{21\lambda}{26}$.  Let $V_1$ be the resulting multi-set.  All values in $V_1$ are in the range $(0, \frac{3\lambda}{4})$.

	If $V_1$ contains exactly one value in $(0,\frac{\lambda}{2})$, then $V_1$ must contain exactly one value in $[\frac{\lambda}{2}, \frac{3\lambda}{4})$ because $\sum_{v \in V_1} v= \frac{21\lambda}{26}$. Hence, $V_1$ meets condition (3), and it is the multi-set we desire.  We set $V_2 = V_1$.  Suppose that $V_1$ contains at least two values $a$ and $b$ in $(0,\frac{\lambda}{2})$.  
	If $a + b \leq \frac{\lambda}{2}$, then we merge $a$ and $b$ into $c = a + b$.  If $c < \frac{\lambda}{2}$, 
	\[
		g(c) = \frac{3\lambda}{2\lambda + c} = \frac{3\lambda}{2\lambda + a + b}(a+b) \leq \frac{3\lambda}{2\lambda + a }a + \frac{3\lambda}{2\lambda + b}b = g(a) + g(b).
	\] 
	If $c = \frac{\lambda}{2}$,
	\[
		h(c) = \frac{3\lambda}{3\lambda - c}c = \frac{3\lambda}{2\lambda + c}c = \frac{3\lambda}{2\lambda + a + b}(a+b) \leq \frac{3\lambda}{2\lambda + a }a + \frac{3\lambda}{2\lambda + b}b = g(a) + g(b).
	\]
	Hence, $\sum_{v \in V_1 \cap (0,\frac{\lambda}{2})} g(v) + \sum_{v \in V_1 \cap [\frac{\lambda}{2}, \frac{3\lambda}{4})} h(v)$ does not increase.  
	
	If $a + b > \frac{\lambda}{2}$, we replace them with $c = a + b - \frac{\lambda}{2}$ and $d = \frac{\lambda}{2}$.  Note that $c < \frac{\lambda}{2}$ because $a + b < \lambda$.  Also, $h(d) = \frac{3\lambda}{3\lambda - d}d = \frac{3\lambda}{2\lambda + d}d$.  One can verify that 
	\[
		g(c) + h(d) = \frac{3\lambda}{2\lambda +  c}c + \frac{3\lambda}{2\lambda + d}d
			= 6\lambda\left(1 - \frac{4\lambda^2 + (c + d)\lambda}{4\lambda^2 + 2(c+d)\lambda + cd}\right),
	\]
	and that 
	\[
		g(a) + g(b) = \frac{3\lambda}{2\lambda +  a}a + \frac{3\lambda}{2\lambda + b}b
			= 6\lambda\left(1 - \frac{4\lambda^2 + (a + b)\lambda}{4\lambda^2 + 2(a+b)\lambda + ab}\right).
	\]
	Since $a + b = c + d$ and $ab - cd = ab - (a+b)\frac{\lambda}{2} + \frac{\lambda^2}{4} = (\frac{\lambda}{2} - a)(\frac{\lambda}{2} - b) > 0$, 
	\[
		g(c) + h(d) < g(a) + g(b).
	\]
	We conclude that $\sum_{v \in V_1 \cap (0,\frac{\lambda}{2})} g(v) + \sum_{v \in V_1 \cap [\frac{\lambda}{2}, \frac{3\lambda}{4})} h(v)$ does not increase.  

	The above operation reduces the number of values in $(0,\frac{\lambda}{2})$ by exactly $1$ while preserving the sum of values.  Repeating the above gives a multi-set $V_2$ that meets  condition (3).

	Now we are ready to derive a lower bound for $\sum_{v \in V_2 \cap (0,\frac{\lambda}{2})} g(v) + \sum_{v \in V_2 \cap [\frac{\lambda}{2}, \frac{3\lambda}{4})} h(v)$.  
	Let $a \in [\frac{\lambda}{2}, \frac{3}{4}\lambda)$ be the larger value in $V_2$. Then the smaller value is $(c - a)$ where $c = \frac{21}{26}\lambda$.
	\begin{align*}
		&\sum_{v \in V_2 \cap (0,\frac{\lambda}{2})} g(v) + \sum_{v \in V_2 \cap [\frac{\lambda}{2}, \frac{3\lambda}{4})} h(v) \\
	= & g(c-a) + h(a) \\
		=& \frac{3\lambda (c-a)}{2\lambda + (c-a)}  + \frac{3\lambda a}{3\lambda - a}\\
				=& 3\lambda\left(1 - \frac{2\lambda}{2\lambda + c - a}+ \frac{3\lambda}{3\lambda - a} - 1\right)\\
				=& 3\lambda\left(- \frac{2\lambda}{2\lambda + c - a}+ \frac{3\lambda}{3\lambda - a}\right).
	\end{align*}
	One can verify that when $a \in (\frac{\lambda}{2}, \frac{3\lambda}{4})$
			\begin{equation*}
				\frac{d}{da}\left(g(c-a) + h(a)\right)
				= 3\lambda\left( -  \frac{2\lambda}{(2\lambda + c - a)^2} + \frac{3\lambda}{(3\lambda - a)^2}\right) > 0
			\end{equation*}
	So we get the minimum when $a = \frac{\lambda}{2}$.  Therefore,  we have
	\[
		\sum_{r\in R'}z^*_r \geq g(c-a) + h(a) \geq g(\frac{21}{26}\lambda - \frac{1}{2}\lambda) + h(\frac{1}{2}\lambda) = \lambda.
	\]
\end{claimproof}

Before proving Claim~\ref{cl:gap-dual-positive}, we first establish the following result.
\begin{claim}
	\label{cl:edge-dual-value}
	Let $e$ be a thin edge that appears in $\Sigma$. That is, $e$ is either an addable edge $a_i$ or a blocking edge in ${\cal B}_i$ for some $i$. Let $R_e$ be the set of thin resources covered by $e$. Let $r_0$ be the resource with the least value in $R_e$. Then, 
	\begin{equation*}
		\sum_{r\in R_e} z^*_r \leq \frac{3\lambda}{2} + \frac{z^*_{r_0}}{2}.
	\end{equation*}
\end{claim}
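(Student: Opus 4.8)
The plan is to regard $z^*_r$ as a function $\phi$ of the single quantity $v_r$ on $(0,\lambda)$ — every resource of a thin edge is thin, hence has $v_r<\lambda$ — namely $\phi(v)=g(v)$ on $(0,\tfrac{\lambda}{2})$, $\phi(v)=h(v)$ on $[\tfrac{\lambda}{2},\tfrac{3\lambda}{4})$, and $\phi(v)=\lambda$ on $[\tfrac{3\lambda}{4},\lambda)$, where $g,h$ are as in the proof of Claim~\ref{cl:gap-dual-feasible}. From Figure~\ref{fig:zr} I would record three facts, each a one-line calculation: (F1) $\phi(v)\le\lambda$; (F2) $g$ is increasing with $g(\tfrac{\lambda}{2})=\tfrac{3\lambda}{5}$, and $g(v)/v=\tfrac{3\lambda}{2\lambda+v}$ is decreasing on $(0,\tfrac{\lambda}{2})$; (F3) $\phi(v)/v\le\tfrac{3\lambda}{3\lambda-v}$ for $v\in[\tfrac{\lambda}{2},\lambda)$, with equality on $[\tfrac{\lambda}{2},\tfrac{3\lambda}{4})$. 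I would also use that $e$ is $\lambda$-minimal: addable edges are $\lambda$-minimal by definition, and every edge ever inserted into ${\cal E}$ was inserted as an addable edge, so the blocking edges in the ${\cal B}_i$'s are $\lambda$-minimal too. Hence $v[R_e]\ge\lambda$ while $v[R_e]-v_r<\lambda$ for every $r\in R_e$; in particular $|R_e|\ge2$ since each $v_r<\lambda$.

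Write $R_e=\{u_1,\dots,u_m\}$ with $v_{u_1}\le\dots\le v_{u_m}$, so $r_0=u_1$; set $q=v_{u_1}$. I would split on $v_{u_m}$. \emph{If $v_{u_m}<\tfrac{\lambda}{2}$}, every resource is small, $z^*_{u_j}=g(v_{u_j})$, and by (F2) $g(v_{u_j})\le\tfrac{g(q)}{q}v_{u_j}$, so $\sum_{r\in R_e}z^*_r\le g(q)+\tfrac{g(q)}{q}\bigl(v[R_e]-q\bigr)<g(q)\bigl(1+\tfrac{\lambda}{q}\bigr)$, using $v[R_e]-q<\lambda$. A direct computation gives $g(q)\bigl(1+\tfrac{\lambda}{q}\bigr)=\tfrac{3\lambda}{2}+\tfrac{g(q)}{2}$, the claimed bound.

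\emph{If $v_{u_m}\ge\tfrac{\lambda}{2}$}: if at least two resources of $R_e$ have value $\ge\tfrac{\lambda}{2}$, then the two largest, $u_{m-1}$ and $u_m$, do, so $m\ge3$ would give $v[R_e]-v_{u_1}\ge v_{u_{m-1}}+v_{u_m}\ge\lambda$, contradicting $\lambda$-minimality; hence $m=2$ with $v_{u_1},v_{u_2}\in[\tfrac{\lambda}{2},\lambda)$, and by (F1) $\tfrac{z^*_{u_1}}{2}+z^*_{u_2}\le\tfrac{\lambda}{2}+\lambda=\tfrac{3\lambda}{2}$. Otherwise $u_m$ is the only large resource, so $q<\tfrac{\lambda}{2}$ and $z^*_{u_1}=g(q)$. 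If $m=2$, then by (F2) $\tfrac{z^*_{u_1}}{2}<\tfrac12 g(\tfrac{\lambda}{2})=\tfrac{3\lambda}{10}$ and $z^*_{u_2}\le\lambda$, giving $\tfrac{z^*_{u_1}}{2}+z^*_{u_2}<\tfrac{3\lambda}{10}+\lambda<\tfrac{3\lambda}{2}$. If $m\ge3$, set $p=v_{u_m}$ and $S=\sum_{j=2}^{m-1}v_{u_j}$; then $S\ge v_{u_2}\ge q$ and $S+p=v[R_e]-q<\lambda$, so $q+p<\lambda$. Every middle resource is small, so by (F2) $\sum_{j=2}^{m-1}z^*_{u_j}\le\tfrac{g(q)}{q}S<\tfrac{g(q)}{q}(\lambda-p)$, and it remains to check $\tfrac{g(q)}{2}+\tfrac{g(q)}{q}(\lambda-p)+z^*_p\le\tfrac{3\lambda}{2}$; a short computation shows this is equivalent to $\tfrac{z^*_p}{p}\le\tfrac{3\lambda}{2\lambda+q}$, and since $q<\lambda-p$ we have $\tfrac{3\lambda}{2\lambda+q}>\tfrac{3\lambda}{3\lambda-p}\ge\tfrac{z^*_p}{p}$ by (F3), which closes the case.

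The main obstacle is exactly this last subcase ($v_{u_m}\ge\tfrac{\lambda}{2}$, a single large resource, $m\ge3$): the blunt bounds $z^*_r\le\lambda$ and $z^*_r\le\tfrac32 v_r$ are too wasteful there, and one must squeeze out the sharper inequality $q+p<\lambda$ from $\lambda$-minimality (via $S\ge q$ and $S+p<\lambda$) and then exploit that the ratio bound $z^*_p/p\le\tfrac{3\lambda}{3\lambda-p}$ is \emph{tight} on the medium range $[\tfrac{\lambda}{2},\tfrac{3\lambda}{4})$ — the only place where the precise shape of $z^*$ there is used. Every other case reduces to the single algebraic identity $g(q)\bigl(1+\tfrac{\lambda}{q}\bigr)=\tfrac{3\lambda}{2}+\tfrac{g(q)}{2}$ together with the monotonicity facts in (F2).
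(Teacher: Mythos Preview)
Your argument is correct, but the paper organises the cases more economically. After disposing of the case $v_{r_0}+v_{\max}\ge\lambda$ (where $v_{\max}$ is the largest value in $R_e$; then $|R_e|=2$ by $\lambda$-minimality and $\sum z^*_r\le z^*_{r_0}+\lambda\le\tfrac{z^*_{r_0}}{2}+\tfrac{3\lambda}{2}$), the paper shows in the remaining case $v_{r_0}+v_{\max}<\lambda$ a \emph{single} uniform ratio bound $z^*_r/v_r\le 3\lambda/(2\lambda+v_{r_0})$ valid for \emph{every} $r\in R_e$ at once: for small $r$ this is your (F2), and for medium or large $r$ it is your (F3) together with $3\lambda-v_r\ge 3\lambda-v_{\max}>2\lambda+v_{r_0}$. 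Combining that uniform bound with $\sum_{r\in R_e}v_r<\lambda+v_{r_0}$ and exactly the identity $\tfrac{3\lambda(\lambda+v_{r_0})}{2\lambda+v_{r_0}}=\tfrac{3\lambda}{2}+\tfrac{z^*_{r_0}}{2}$ you found finishes the proof in one stroke. Your ``all small'' case and your ``$m\ge3$, one large resource'' case are thus both subsumed by the paper's main case; you rediscover the key inequality $z^*_p/p\le 3\lambda/(2\lambda+q)$ only in your final sub-case, whereas the paper establishes it once for all resources. The payoff of the paper's route is brevity; yours gives a more hands-on accounting of where each regime of the piecewise definition of $z^*$ is actually used.
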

\begin{claimproof}
	If $v_{r_0} \geq \frac{\lambda}{2}$, all resources in $R_e$ have values at least $\frac{\lambda}{2}$.  Since $e$ is $\lambda$-minimal, $R_e$ contains exactly two thin resources, including $r_0$.  Let $r_1$ denote the other resource in $R_e$.
	\[
		\sum_{r\in R_e} z^*_r = z^*_{r_0} + z^*_{r_1} \leq z^*_{r_0} + \lambda
				\leq \frac{z^*_{r_0}}{2} + \frac{\lambda}{2} + \lambda = \frac{z^*_{r_0}}{2} + \frac{3\lambda}{2}.
	\]

	Suppose that $v_{r_0} < \frac{\lambda}{2}$.  Let $r_1$ be the resources with the largest value in $R_e$.  If $v_{r_0} \geq \lambda - v_{r_1}$, then $r_0$ and $r_1$ have a total value of at least $\lambda$.  Thus $R_e$ does not contain any other resource because $e$ is $\lambda$-minimal.  We get $\sum_{r\in R_e} z^*_r = z^*_{r_0} + z^*_{r_1} \leq
	\frac{z^*_{r_0}}{2} + \frac{3\lambda}{2}$ as before.
	Suppose that $v_{r_0} < \lambda - v_{r_1}$.  Consider an arbitrary resource $r \in R_e$. If $v_r \in (0, \frac{\lambda}{2})$, 
	\[
		\frac{z^*_r}{v_r} = \frac{3\lambda}{2\lambda + v_r}  \leq \frac{3\lambda}{2\lambda + v_{r_0}}.
	\]
	If $v_r \in [\frac{\lambda}{2}, \frac{3}{4}\lambda)$,
	\[
		\frac{z^*_r}{v_r} \leq \frac{3\lambda}{3\lambda - v_r} =  \frac{3\lambda}{2\lambda + (\lambda - v_r)} < \frac{3\lambda}{2\lambda + v_{r_0}}.
	\]
	If $v_r \in (\frac{3}{4}\lambda, \lambda)$,
	\[
		\frac{z^*_r}{v_r} = \frac{\lambda}{v_r} < \frac{3\lambda}{3\lambda - v_r} =  \frac{3\lambda}{2\lambda + (\lambda - v_r)} \leq \frac{3\lambda}{2\lambda + v_{r_0}}.
	\]
	In summary, for every $r \in R_e$, 
	\[
		\frac{z^*_r}{v_r} \leq \frac{3\lambda}{2\lambda + v_{r_0}}.
	\] 
	Since $e$ is $\lambda$-minimal, 
	\[
		\sum_{r \in R_e}v^*_r < \lambda + v_{r_0}.
	\]
	Combining these two facts, we obtain
	\begin{align*}
		\sum_{r\in R_e} z^*_r &\leq \frac{3\lambda}{2\lambda + v_{r_0}}\sum_{r \in R_e}v^*_r\\
							&<\frac{3\lambda}{2\lambda + v_{r_0}}\left(\lambda + v_{r_0}\right)\\
							&= \frac{3\lambda}{2\lambda + v_{r_0}}\left(\lambda + \frac{1}{2}v_{r_0}\right) + \frac{3\lambda}{2\lambda + v_{r_0}}\frac{1}{2} v_{r_0}\\
							&\leq  \frac{3\lambda}{2} + \frac{z^*_{r_0}}{2}.
	\end{align*}
\end{claimproof}

Now we are ready to prove Claim~\ref{cl:gap-dual-positive}.
\begin{claimproof}[Proof of Claim~\ref{cl:gap-dual-positive}]
	We shall prove that $\sum_{p\in P}y^*_p - \sum_{r\in R}z^*_r > 0$. By our construction of the dual solution, it is easy to see that
	\[
		\sum_{p\in P}y^*_p - \sum_{r\in R}z^*_r  = \left(\sum_{p\in P^+}y^*_p - \sum_{r\in R^+_f}z^*_r \right) - \sum_{r \in R(\Sigma)}z^*_r.
	\]

	Consider $\sum_{p\in P^+}y^*_p - \sum_{r\in R^+_f}z^*_r$. For every player $p \in P^+$ and every fat resource $r_f \in R^+_f$, $y^*_p$ and $z^*_{r_f}$ have the same value $1 - \frac{21}{26}\lambda$.  Therefore,
	\[
		\sum_{p\in P^+}y^*_p - \sum_{r\in R^+_f}z^*_r = \left(1 - \frac{21}{26}\lambda \right)\left(|P^+| - |R^+_f|\right).
	\] 
	We will derive a lower bound for $|P^+| - |R^+_f|$.  The player in $B_1$ is $p_0$, and $p_0$ is not matched by $M$.  Players in other $B_i$'s are covered by ${\cal E}$, so they are not matched by $M$.  Therefore, no player in $B_{ \leq \ell}$ is matched by $M$. Every fat resource that is reachable in $G_M$ from some player in $B_{ \leq \ell}$ must be matched by $M$; otherwise, such the path between the fat resource and the player would be an augmenting path with respect to $M$, contradicting the fact that $M$ is a maximum matching of $G$.  Hence, for every fat resource $r_f \in R^+_f$, $r_f$ must have an out-going edge to some player $p$ in $G_M$.  Since $r_f$ is reachable from $B_{\leq \ell}$ in $G_M$, $p$ is also reachable from $B_{\leq \ell}$ in $G_M$.  So $p \in P^+$.   We charge $r_f$ to $p$.  Every player in $P$ has in-degree at most $1$ in $G_M$, so it is charged at most once.  Recall that $B_{\leq \ell} \subseteq P^+$.  Players in $B_{\leq \ell}$ do not have incoming edges in $G_M$ because they are not matched by $M$. So players in $B_{\leq \ell}$ do not get charged.  In conclusion, 
	\[
		|P^+| - |R^+_f| \geq |B_{\leq \ell}|.
	\]
	Putting things together, we get 
	\begin{equation}
		\sum_{p\in P^+}y^*_p - \sum_{r\in R^+_f}z^*_r \geq \left(1 - \frac{21}{26}\lambda\right)|B_{\leq \ell}|.
		\label{eq:gap-lower-a}
	\end{equation}
	
	Next we bound $\sum_{r \in R(\Sigma)}z^*_r$.  As in Figure~\ref{fig:zr}(a), $z^*_r$ does not decrease as $v_r$ increases. Hence, Claim~\ref{cl:edge-dual-value} implies that for any $e \in R(\Sigma)$ and for any resource $r' \in R_e$, 
	\begin{equation}
				\sum_{r\in R_e} z^*_r \leq \frac{3\lambda}{2} + \frac{z^*_{r'}}{2}.
				\label{eq:edge-dual-value}
	\end{equation}
	Now consider a tuple $(a_i, {\cal B}_i)$. By our assumption that every addable edge is blocked, $|{\cal B}_i| \geq 1$. Let $R_i$ be the set of thin resources covered by $\{a_i\} \cup {\cal B}_i$.  Consider a resource $r \in R_i$.  If $r$ is covered by only one edge in $\{a_i\} \cup {\cal B}_i$, then charge $z_r^*$ to that edge.  If $r$ is covered by two edges in $\{a_i\} \cup {\cal B}_i$, then charge half of $z_r^*$ to one edge and the other half to the other edge.  Take any edge $e \in \{a_i\} \cup {\cal B}_i$.  Because edges in ${\cal B}_i$ are blocking edges of $a_i$, $e$ must share some resource, say $r'$, with another edge in $\{a_i\} \cup {\cal B}_i$. We conclude from \eqref{eq:edge-dual-value} that the total charge on $e$ is at most
	\[
	\sum_{r \in R_e} z_r^* - \frac{z_{r'}^*}{2} \leq \frac{3\lambda}{2}.
	\]
	Taking sum over all edges in $\{a_i\} \cup {\cal B}_i$ gives
	\[
		\sum_{r \in R_i}z^*_r \leq \left|\{a_i\} \cup {\cal B}_i\right| \cdot \frac{3\lambda}{2} =  \frac{3\lambda}{2}\left(|B_i|+1\right) \leq 3\lambda |B_i|.
	\]
	The last inequality follows from the fact that $|B_i| \geq 1$.  Recall that $(a_1,{\cal B}_1) = (\mathit{null}, (p_0, \emptyset))$ covers no resource.  Then, taking sum over $i\in [2, \ell]$ gives
	\begin{equation}
		\label{eq:gap-upper-a}
		\sum_{r \in R(\Sigma)}z^*_r = \sum_{i = 2}^{\ell} \sum_{r\in R_i}z^*_r \leq 3\lambda\left(|B_{\leq \ell}| - 1\right).
	\end{equation}

	Combining \eqref{eq:gap-lower-a} and \eqref{eq:gap-upper-a}, we obtain
	\begin{align*}
		\sum_{p\in P}y^*_p - \sum_{r\in R}z^*_r  
		&= \sum_{p\in P^+}y^*_p - \sum_{r\in R^+_f}z^*_r - \sum_{r \in R(\Sigma)}z^*_r\\
		&\geq \left(1 - \frac{21}{26}\lambda\right)|B_{\leq \ell}| - 3\lambda\left(|B_{\leq \ell}|-1\right)\\
		& = 3\lambda + \left(1 - \frac{99}{26}\lambda\right)|B_{\leq \ell}|.
	\end{align*}
	Given $\lambda = \frac{26}{99}$,
	\[
		\sum_{p\in P}y^*_p - \sum_{r\in R}z^*_r \geq 3\lambda > 0.
	\]
\end{claimproof}

We have completed the proof of Lemma~\ref{lem:gap-nonstuck}, which says that the algorithm is always able to either find an unblocked edge and call {\sc Contract} to shrink $\Sigma$ or call {\sc Build} to append an addable edge and its blocking edges to $\Sigma$.  It remains to show that the local search algorithm will make $\Sigma$ empty in finite time (i.e., satisfies $p_0$ eventually).  This analysis has been given before in~\cite{AFS12}.  We repeat it here for completeness.

\begin{lemma}
\label{lem:gap-terminate}
The algorithm terminates after a finite number of calls of {\sc Build} and {\sc Contract}.
\end{lemma}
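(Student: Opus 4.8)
The plan is the standard signature-vector (potential) argument: attach to each state of the algorithm a vector $\sigma$ that ranges over a \emph{finite} set and that strictly decreases, in lexicographic order, at every call of {\sc Build} and {\sc Contract}. Together with Lemma~\ref{lem:gap-nonstuck}, which guarantees that the algorithm always has a legal next move --- an unblocked addable edge to contract, or a new addable edge to build --- this forces the algorithm to halt, i.e.\ to cover $p_0$, after finitely many calls.

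First I would pin down the structure of the stack. The key claim is that ${\cal B}_1, \ldots, {\cal B}_\ell$ are pairwise disjoint. This is because ${\cal B}_i$ is never enlarged once $L_i = (a_i,{\cal B}_i)$ has been created --- {\sc Build} only appends new layers, and {\sc Contract} only deletes the single edge $b_t$ from ${\cal B}_t$ --- whereas at the moment $a_j$ is chosen, condition (ii) in the definition of an addable edge requires $R_{a_j}\cap R(\Sigma)=\emptyset$, so $R_{a_j}$ is disjoint from the resources covered by ${\cal B}_i$ for every $i<j$ present at that time; hence no edge of ${\cal E}$ can share a resource with both $a_i$ and $a_j$, and thus no edge lies in two different ${\cal B}_i$'s. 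Since every ${\cal B}_i$ with $i\ge 2$ is a subset of ${\cal E}$, whose edges are pairwise compatible, each covering at least one thin resource, $\sum_{i=1}^{\ell}|{\cal B}_i|\le 1+|{\cal E}|\le 1+m$. Moreover a short induction (using that each {\sc Contract} modifies only layers of index at least its own, and that {\sc Build} is invoked only when no addable edge in $\Sigma$ is unblocked, i.e.\ when ${\cal B}_i\neq\emptyset$ for all $i\in[2,\ell]$) shows that the layers $L_1,\ldots,L_{\ell-1}$ always have nonempty blocking sets; combined with disjointness this gives $\ell\le m+2$ at all times.

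Next, for a state with stack $[(a_1,{\cal B}_1),\ldots,(a_\ell,{\cal B}_\ell)]$ I define $\sigma = (|{\cal B}_1|,|{\cal B}_2|,\ldots,|{\cal B}_\ell|,\infty)$ and compare signatures lexicographically, treating $\infty$ as larger than every integer. By the previous paragraph each coordinate is an integer in $[0,m]$, the length is at most $m+3$, and $\sum_i|{\cal B}_i|\le m+1$, so only finitely many signatures can ever occur. It remains to verify that $\sigma$ strictly decreases at every step. A call to {\sc Build} leaves coordinates $1,\ldots,\ell$ unchanged and replaces the terminal $\infty$ (now at position $\ell+1$) by the finite number $|{\cal B}_{\ell+1}|$, so $\sigma$ drops. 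A call to {\sc Contract} with index $t$ leaves coordinates $1,\ldots,t-1$ unchanged --- the lower blocking sets are not edited, and the flip $M\oplus\pi$ does not disturb alternating paths out of $B_{\le t-1}$, exactly as argued in the proof of Lemma~\ref{lem:gap-invar} --- while coordinate $t$ decreases by exactly one because $b_t$ is removed from ${\cal B}_t$; hence the first coordinate that changes strictly decreases. Therefore the algorithm performs only finitely many calls of {\sc Build} and {\sc Contract}, and by Lemma~\ref{lem:gap-nonstuck} it must terminate.

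I expect the only nonroutine point to be the structural claim of the second paragraph --- pairwise disjointness of the ${\cal B}_i$ and the resulting bound on $\ell$ --- which relies on the two observations that blocking sets only shrink after creation and that every addable edge pushed onto the stack is resource-disjoint from all of $R(\Sigma)$. Once that is in place, the signature definition and the monotonicity checks are bookkeeping.
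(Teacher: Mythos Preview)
Your proposal is correct and follows essentially the same approach as the paper: define the signature vector $(|{\cal B}_1|,\ldots,|{\cal B}_\ell|,\infty)$, show it ranges over a finite set by the disjointness of the ${\cal B}_i$'s, and verify that each call of {\sc Build} or {\sc Contract} strictly decreases it in lexicographic order. Your write-up is in fact more careful than the paper's --- you justify the disjointness of the ${\cal B}_i$'s and the bound on $\ell$ explicitly, and you use $m$ rather than $n$ in the coordinate bound --- but the underlying argument is identical.
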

\begin{proof}
	Define a signature vector $(|B_1|, |B_2|, \ldots, |B_{\ell}|, \infty)$ with respect to the sequence of tuples.  All ${\cal B}_1, \ldots, {\cal B}_\ell$ are mutually disjoint subsets of $\cal E$, so $|B_1| + \cdots + |B_\ell| \leq n$.  Therefore, the number of distinct signature vectors is at most $n^n$.  The signature vector evolves as the sequence is updated by the algorithm.  After each invocation of {\sc Build}, the signature vector decreases lexicographically because it gains a new second to last entry.  After each invocation of {\sc Contract}, the signature decreases lexicographically because it becomes shorter, and the second to last entry decreases by at least $1$.  Therefore, no signature vector is repeated.  As a result, the algorithm terminates after at most $n^n$ invocations of {\sc Build} and {\sc Contract}.
\end{proof}

By Lemma~\ref{lem:gap-terminate}, we can obtain an allocation in which every player receives at least $\lambda = \frac{26}{99}$ worth of resources.  This completes the proof of Theorem~\ref{thm:gap}.

\end{document}